\documentclass[11pt]{scrartcl}


\usepackage{url}
\usepackage[hidelinks]{hyperref}
\usepackage[utf8]{inputenc}
\usepackage[small]{caption}
\usepackage{graphicx}
\usepackage{amsmath}
\usepackage{booktabs}
\usepackage{algorithm,algpseudocode}
\urlstyle{same}

\usepackage{amsthm}
\usepackage{amssymb}

\usepackage{cleveref}
\usepackage{color}
\usepackage{xspace}
\usepackage{tikz}
\usepackage{tabularx}
\usepackage{pbox}
\usepackage{framed}

\usetikzlibrary{positioning}

\DeclareOldFontCommand{\bf}{\normalfont\bfseries}{\mathbf}

\newtheorem{theorem}{Theorem}[section]
\newtheorem{corollary}[theorem]{Corollary}
\newtheorem{proposition}[theorem]{Proposition}
\newtheorem{lemma}[theorem]{Lemma}

\newtheorem{open}[theorem]{Open question}

\theoremstyle{definition}
\newtheorem{definition}[theorem]{Definition}

\newtheorem{remark}[theorem]{Remark}

\algnotext{EndFor}
\algnotext{EndIf}
\algnotext{EndWhile}
\algnotext{EndProcedure}

\usepackage{natbib}

\usepackage{enumitem}
\newcommand{\mms}{\text{MMS}}

\newcommand{\eval}{\textsc{Eval}}
\newcommand{\cut}{\textsc{Cut}}
\newcommand{\cutr}{\textsc{CutRight}}

\newcommand{\allocation}{\mathbf{A}}
\newcommand{\partition}{\mathbf{P}}
\newcommand{\partitions}{\Pi}

\newcommand{\findsum}[1]{\textsc{FindSum1}($#1$)}

\allowdisplaybreaks

\definecolor{ForestGreen}{rgb}{.13,.54,.13}

\begin{document}

\title{Mind the Gap: \\Cake Cutting With Separation}

\author{
Edith Elkind\\University of Oxford
\and
Erel Segal-Halevi\\Ariel University
\and
Warut Suksompong\\National University of Singapore
}

\date{\vspace{-3ex}}

\maketitle

\begin{abstract}
We study the problem of fairly allocating a divisible resource, also known as cake cutting, with an additional requirement that the shares that different agents receive should be sufficiently separated from one another.
This captures, for example, constraints arising from social distancing guidelines.
While it is sometimes impossible to allocate a proportional share to every agent under the separation requirement, we show that the well-known criterion of maximin share fairness can always be attained.
We then provide algorithmic analysis of maximin share fairness in this setting---for instance, the maximin share of an agent cannot be computed exactly by any finite algorithm, but can be approximated with an arbitrarily small error.
In addition, we consider the division of a pie (i.e., a circular cake) and show that an ordinal relaxation of maximin share fairness can be achieved.
We also prove that an envy-free or equitable allocation that allocates the maximum amount of resource exists under separation.
\end{abstract}

\section{Introduction}

The end of the year is fast approaching, and members of a city council are busy planning the traditional New Year's fair on their city's main street.
As usual, a major part of their work is to divide the space on the street among interested vendors.
Each vendor naturally has a preference over potential locations, possibly depending on the proximity to certain attractions or the estimated number of customers visiting that space.
Additionally, this year is different from previous years due to the social distancing guidelines issued by the government---vendors are required to be placed at least two meters apart.
How should the city council allot the space so that all vendors feel fairly treated and at the same time everyone stays safe and sound under the new guidelines?

The problem of fairly allocating a heterogeneous divisible good among a set of agents in a fair manner has a long history and is commonly known as \emph{cake cutting} \citep{BramsTa96,RobertsonWe98,Procaccia16}.
A typical fairness criterion in cake cutting is \emph{proportionality}, which means that each agent should receive her proportionally fair share, i.e., $1/n$ of the agent's value for the whole cake, where $n$ denotes the total number of agents.
For any set of agents with arbitrary valuations, a proportional allocation in which each agent receives a single connected piece is guaranteed to exist.
Better still, such an allocation can be found by a simple and efficient algorithm \citep{DubinsSp61}.

In this paper, we initiate the study of cake cutting with separation requirements.
Besides the social distancing example that we mentioned, our setting captures the task of allocating machine processing time, where we need time to erase data from the previous process before the next process can be started, as well as land division, where we want space between different plots in order to avoid cross-fertilization.
When separation is imposed, it is no longer the case that proportionality can always be satisfied---an extreme example is when all agents value only a common small piece of length less than the minimum gap required.
A similar failure of proportionality has notably been observed in the allocation of \emph{indivisible} items (without separation), and a solution that has been proposed and widely studied in that context is \emph{maximin share fairness} \citep{Budish11,KurokawaPrWa18}.
Maximin share fairness requires each agent to receive her ``maximin share'' (MMS), which is the best share that the agent can secure by dividing the items into $n$ bundles and getting the worst bundle.
In this work, we demonstrate that maximin share fairness is an appropriate substitute for proportionality in cake cutting with separation, and analyze this concept from a computational perspective.
This is one of the first uses of maximin share fairness in cake cutting (see Section~\ref{sec:related-work}).

\subsection{Our Results}

As is commonly done in cake cutting, we assume that the cake is represented by an interval, and each agent is to be allocated a single subinterval of the cake.
We further require the pieces of any two agents to be separated by distance at least $s$, where $s> 0$ is a given separation parameter.
For the sake of exposition, we follow the convention in most of the literature and assume that the agents have additive valuations over the cake.
However, as we discuss in Section~\ref{sec:conclusion}, some of our positive results hold even for agents with arbitrary monotonic valuations.

\renewcommand{\arraystretch}{1.2}
\begin{table*}[!ht]
\centering
    \begin{tabular}{| c | c | c |}
    \hline
    Task & Cake Cutting & Pie Cutting \\ \hline \hline
    Decide whether $\mms_i=r$ & Yes (Cor.~\ref{cor:mms-query-equal}) & No (Thm.~\ref{thm:pie-mms-greater}) \\ \hline
    Decide whether $\mms_i>r$ & Yes (Thm.~\ref{thm:mms-query-greater}) & No (Thm.~\ref{thm:pie-mms-greater}) \\ \hline
    Decide whether $\mms_i\geq r$ & Yes  (Thm.~\ref{thm:mms-query-atleast}) & No (Thm.~\ref{thm:pie-mms-greater-equal}) \\ \hline 
    Compute the maximin share of an agent  & No (Thm.~\ref{thm:mms-impossibility}) & No (Cor.~\ref{cor:pie-mms-compute}) \\ \hline 
    Approximate the maximin share up to $\varepsilon$ & Yes (Cor.~\ref{cor:mms-query-bound}) & Yes (Thm.~\ref{thm:pie-mms-approx}) \\ \hline 
    Compute a maximin partition of an agent & No (Cor.~\ref{cor:mms-no-partition}) & No (Cor.~\ref{cor:pie-mms-compute}) \\
    \hline 
    Approximate a maximin partition up to $\varepsilon$ & Yes (Cor.~\ref{cor:mms-query-bound}) & Yes (Thm.~\ref{thm:pie-mms-approx}) \\ \hline 
    \end{tabular}
    \vspace{5mm}
    \caption{Summary of the tasks that can and cannot be accomplished by finite algorithms in the Robertson--Webb model for cake cutting and pie cutting. All negative results hold even when the valuations of the agents are piecewise constant (but not given explicitly).}
    \label{table:summary}
\end{table*}

In Section~\ref{sec:cake}, we begin by proving that 
maximin share fairness can be guaranteed:
there always exists an allocation that gives every agent at least her maximin share. If the maximin share of each agent is known, 
such an allocation can be found by a simple algorithm similar to the aforementioned algorithm by \citet{DubinsSp61}.
Unfortunately, we show that \emph{no} finite algorithm can compute the maximin share of an agent exactly in the standard Robertson--Webb model---this impossibility result holds even when $n=2$ and the agents have piecewise constant valuations.
To establish this result, we prove that no finite number of queries can solve a basic function problem that we call \textsc{FindSum1}, which may be of independent interest.
Nevertheless, we design an algorithm based on binary search that approximates the maximin share up to an arbitrarily small error. This enables us to compute an allocation wherein each agent obtains an arbitrarily close approximation of her maximin share.
In addition, we present algorithms that decide whether the maximin share of an agent is greater than, less than, or equal to a given value, and show that if the agents have piecewise constant valuations that are given \emph{explicitly} as part of the input, then we can compute their exact maximin shares, and therefore an MMS-fair allocation, in polynomial time using linear programming.

In Section~\ref{sec:pie}, we consider the allocation of a ``pie'', which is a one-dimensional circular cake and serves to model, for example, 
the streets around a city square,
the shoreline of an island, or daily time slots for using a facility.
In contrast to cake cutting, maximin share fairness cannot necessarily be guaranteed in pie cutting, and even the commonly studied cardinal multiplicative approximation cannot be obtained.
Therefore, we focus instead on an \emph{ordinal} relaxation of the maximin share, which allows each agent to partition the pie into $k$ pieces for some parameter $k > n$.
We show that when $k = n+1$, the resulting fairness guarantee---called the \emph{$1$-out-of-$(n+1)$ maximin share}---can be satisfied.
We then investigate computational properties of maximin share fairness in pie cutting, and demonstrate several similarities and differences with cake cutting.
In particular, while we can still approximate the maximin share of an agent (albeit less efficiently than in cake cutting), deciding whether the maximin share is greater than, less than, or equal to a given value is no longer possible for any finite algorithm.
A summary of our results in Sections~\ref{sec:cake} and \ref{sec:pie} can be found in Table~\ref{table:summary}.

Finally, in Section~\ref{sec:EF-EQ}, we investigate two other important fairness notions: envy-freeness and equitability.
While these notions can be satisfied trivially by not allocating any of the cake or pie, we show that there always exist allocations fulfilling each of these criteria while at the same time allocating the maximum possible amount of resource subject to separation constraints.

\subsection{Related Work}
\label{sec:related-work}

Cake cutting has long been studied by mathematicians and economists, and more recently attracted substantial interest from computer scientists, as it suggests a plethora of computational challenges.
In particular, a long line of work in the artificial intelligence community in recent years has focused on cake cutting and its variants \citep{BalkanskiBrKu14,LiZhZh15,BranzeiCaKu16,AlijaniFaGh17,BeiChHu17,MenonLa17,ArunachaleswaranBaRa19,HosseiniIgSe20}. 

In order to ensure that no agent receives a collection of tiny pieces, it is often assumed that each agent must be allocated a connected piece of the cake \citep{DubinsSp61,Stromquist80,Stromquist08,Su99,BeiChHu12,CechlarovaPi12,CechlarovaDoPi13,AumannDo15,ArunachaleswaranBaKu19,GoldbergHoSu20}. 
Indeed, when we divide resources such as time or space, non-connected pieces (e.g., disconnected time intervals or land plots) may be hard to utilize, or even entirely useless.
Note that we impose the connectivity constraint not only on the allocation but also in the definition of the maximin share benchmark.
Similar conventions have been used in the context of indivisible items, where the items are vertices of an undirected graph and every agent must be allocated a connected subgraph \citep{BouveretCeEl17,IgarashiP19,LoncTr20,BiloCFIMPVZ22}.\footnote{\citet{BeiIgLu22} explored the relations between the constrained and unconstrained versions of the maximin share in that context.}

Most previous works on cake cutting (and pie cutting) did not explicitly consider the maximin share. This is because, with additive utilities, a proportional allocation is also an MMS-fair allocation, since each agent's maximin share is always at most $1/n$ of the agent's value for the entire cake (or pie). 
In particular, without separation constraints, classic algorithms for proportional cake cutting \citep{Steinhaus1948Problem,DubinsSp61,Even1984Note} attain maximin share fairness.
The maximin share only becomes interesting when a proportional allocation may not exist.

We are aware of two recent studies of the maximin share in cake cutting.
\citet{bogomolnaia2020guarantees} considered agents with general continuous valuations---not necessarily additive or even monotonic. 
They showed that the maximin share is not always attainable, but the \emph{minimax share} (i.e., the worst-case share of an agent when the items are partitioned into $n$ bundles and the agent gets the best bundle) can always be guaranteed.
\citet[Appendix B]{Segalhalevi20-2} showed that maximin share fairness can be attained when the cake is a collection of disconnected intervals and each agent should receive a connected piece. 
We are not aware of previous studies of maximin share fairness in pie cutting.

\citet{iyer2005multiagent}
presented negotiation protocols for fairly dividing a cake or a pie. Their protocols require each agent to submit a set of $n$ intervals (in cake division) or $n+1$ intervals (in pie division), and guarantee to each agent one of her intervals. 
While presented in a different framework, their protocols are similar in spirit to our algorithms for a cake (Theorem~\ref{thm:mms-algo}) and for a pie
(Theorem~\ref{thm:pie-mms-algo}).
However, they did not consider separation, and focused on solution concepts other than the maximin share (indeed, recall that maximin share fairness is trivial in the absence of separation).

After the publication of the conference version of our work \citep{ElkindSeSu21}, we extended our study of separation constraints to the division of two-dimensional resources such as land \citep{ElkindSeSu21-Land} and graphical resources such as road networks \citep{ElkindSeSu21-Graph}.
In particular, our guarantees for arbitrary graphs generalize Theorems~\ref{thm:mms-algo} and \ref{thm:pie-mms-algo} of the present paper, but the general algorithms and their analysis are much more involved. Apart from this, there is no overlap between these papers.

\section{Preliminaries}
In cake cutting, the cake is represented by the interval $[0,1]$.
The set of agents is denoted by $N=[n]$, where $[k]:=\{1,2,\dots,k\}$ for any positive integer $k$.
The preference of each agent $i\in N$ is represented by an integrable
\emph{density function} $f_i:[0,1]\to \mathbb{R}_{\ge 0}$, which captures how the agent values different parts of the cake.
A \emph{piece of cake} is a finite union of disjoint closed intervals of the cake; it is said to be \emph{connected} if it consists of a single interval.
Agent $i$'s value for a piece of cake $X$ is given by $v_i(X) := \int_{x\in X}f_i(x) dx$.
For $0\leq x\leq y\leq 1$, we simplify notation and write $v_i(x,y) = v_i([x,y])$.
As is standard in cake cutting, we assume that the density functions are normalized so that $v_i(0,1) = 1$ for all $i\in N$.
A valuation function is said to be \emph{piecewise constant} if it is represented by a piecewise constant density function.
An \emph{allocation} of the cake is denoted by an $n$-tuple $\allocation=(A_1,\dots,A_n)$, where each $A_i$ is a piece of cake, and 
for all $i, j\in N$ such that $i\neq j$ the set $A_i\cap A_j$
consists of finitely many points.
The piece $A_i$ is allocated to agent $i$.
We are interested in allocations that are \emph{connected}, that is, each $A_i$ is a connected piece.

Let $s\ge 0$ be a real parameter.
We seek connected allocations in which any two pieces are separated by length at least $s$; we call such allocations \emph{$s$-separated}.
The case $s=0$ corresponds to the classic setting (without separation), and when $s\geq \frac{1}{n-1}$ an $s$-separated allocation must have at least one piece of length $0$. Therefore, from now on we assume that $s\in (0,\frac{1}{n-1})$.
A set $\partition=\{P_1,\dots,P_n\}$ is an (\emph{$s$-separated}) \emph{partition} if the tuple $\allocation=(P_1,\dots,P_n)$ is an (\emph{$s$-separated}) allocation; intuitively, a partition is a collection of pieces, without a specification of which agent gets which piece.
Assume without loss of generality that in each partition the pieces $P_1,\dots,P_n$ are listed in increasing order, i.e., for each $1\le i<j\le n$ we have $a\le b$ for all $a\in P_i$, $b\in P_j$.
The \emph{min-value} of partition $\partition$ for agent $i$ is defined as $\min_{j\in [n]} v_i(P_j)$.
Let $\partitions_{n, s}(x, y)$ denote the set of all $s$-separated partitions of the interval $[x, y]$ among $n$ agents
(where $[x, y]\subseteq [0, 1]$).
Note that  an $s$-separated allocation or partition is incomplete, since some of the cake necessarily remains unallocated.
An \emph{instance} consists of the agents, cake, density functions, and a separation parameter.

A standard method for a cake-cutting algorithm to access agents' valuations is through queries. Specifically, we use the model of \citet{RobertsonWe98}, which supports two types of queries:
\begin{itemize}
\item $\eval_i(x,y)$: Asks agent $i$ to evaluate the interval $[x,y]$ and return the value $v_i(x,y)$.
\item $\cut_i(x,\alpha)$: Asks agent $i$ to return the leftmost point $y$ such that $v_i(x,y)=\alpha$, or to state that no such point exists.
\end{itemize}

We now define the main fairness criterion of our paper.

\begin{definition}
\label{def:MMS}
The \emph{maximin share} of agent $i\in N$ with respect to 
an interval $[x, y]\subseteq[0, 1]$ is defined as 
\begin{align*}
\mms_i^{n,s}(x, y) := \sup_{\partition\in \partitions_{n, s}(x, y)}\min_{j\in [n]}v_i(P_j).
\end{align*}
When $n$ and $s$ are clear from the context, we omit them
from the notation and write $\mms_i(x, y)$ instead of 
$\mms_i^{n,s}(x, y)$. Moreover, when $[x, y]=[0, 1]$, we further abbreviate $\mms_i(x, y)$
to $\mms_i$.
\end{definition}

Let $\partitions'_{n,s}(x, y)\subseteq \partitions_{n, s}(x, y)$ be the set of all $s$-separated partitions of $[x, y]$ such that every pair of consecutive pieces is separated by length exactly $s$.
We claim that the definition of the maximin share can be simplified by replacing $\partitions_{n, s}(x, y)$ with $\partitions'_{n, s}(x, y)$; intuitively, for every partition in which the distance between some pair of adjacent pieces is larger than~$s$, there is a partition with at least the same min-value
in which the distance between all pairs of adjacent pieces is exactly $s$.
We also claim that the supremum in the definition can be replaced with a maximum, i.e., a maximizing partition always exists.
\begin{proposition}
\label{prop:alt-def}
For every agent $i\in N$, it holds that 
\[\emph{MMS}^{n,s}_i(x, y) = \max_{\partition\in \partitions'_{n,s}(x, y)}\min_{j\in [n]}v_i(P_j).\]
\end{proposition}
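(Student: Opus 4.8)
The plan is to prove two reductions in sequence: first that every $s$-separated partition can be replaced by one in $\partitions'_{n,s}(x,y)$ with min-value at least as large (so the supremum over $\partitions'$ equals the supremum over $\partitions$), and second that this supremum is actually attained, i.e., can be written as a maximum. For the first reduction, take an arbitrary $\partition = \{P_1,\dots,P_n\} \in \partitions_{n,s}(x,y)$ with pieces listed in increasing order, say $P_j = [a_j, b_j]$. I would process the gaps from left to right: whenever the gap between $P_j$ and $P_{j+1}$ exceeds $s$, shift the block of pieces $P_{j+1},\dots,P_n$ leftward by the excess amount $(a_{j+1} - b_j) - s$. This is a rigid translation of a sub-block, so each individual piece keeps the same length and hence the same value to agent $i$ (using that $v_i$ is translation-invariant on intervals of fixed length — wait, it is not in general; instead I should argue differently). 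Rather than translating, the cleaner move is: shrink each oversized gap to exactly $s$ and absorb the freed-up length by \emph{extending} one of the adjacent pieces, which can only increase that piece's value since $f_i \ge 0$, leaving all other pieces untouched; repeating over all gaps yields a partition in $\partitions'_{n,s}(x,y)$ whose every piece has value at least as large as before, so its min-value is at least $\min_j v_i(P_j)$. This shows $\sup_{\partition \in \partitions'_{n,s}(x,y)} \min_j v_i(P_j) = \sup_{\partition \in \partitions_{n,s}(x,y)} \min_j v_i(P_j) = \mms_i^{n,s}(x,y)$.

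For the second reduction, I would parametrize $\partitions'_{n,s}(x,y)$ by the left endpoints $(a_1,\dots,a_n)$ of the $n$ pieces (the right endpoints are then $b_j = a_{j+1} - s$ for $j < n$ and $b_n = y$), so that $\partitions'_{n,s}(x,y)$ is identified with a compact subset $K$ of $\mathbb{R}^n$ cut out by the inequalities $x \le a_1$, $a_j + s \le a_{j+1}$ for each $j$, and $a_n \le y$. The function $(a_1,\dots,a_n) \mapsto \min_{j \in [n]} v_i(P_j)$ is continuous on $K$: each $v_i(P_j) = v_i(a_j, b_j) = \int_{a_j}^{b_j} f_i$ is continuous in the endpoints because $f_i$ is integrable (the map $t \mapsto \int_x^t f_i$ is absolutely continuous, hence continuous), and the minimum of finitely many continuous functions is continuous. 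A continuous function on a nonempty compact set attains its maximum, so the supremum is a maximum. One should also check $K \neq \emptyset$, which holds precisely because we assumed $s \in (0, \frac{1}{n-1})$ when $[x,y] = [0,1]$, and more generally one takes this as part of the standing hypothesis that an $s$-separated partition of $[x,y]$ exists (otherwise both sides are vacuous/degenerate in the same way).

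The main obstacle is the first step: one must be careful that the gap-shrinking operation genuinely stays within the class of valid $s$-separated partitions and does not accidentally push pieces past the right boundary $y$ or make two pieces overlap. Absorbing freed length into an adjacent piece (rather than translating a block) sidesteps this cleanly, since extending $P_j$ rightward into a gap that previously had length $> s$ keeps $P_j$ disjoint from $P_{j+1}$ as long as we leave exactly $s$ of gap, and it never moves any endpoint of any other piece. Processing gaps one at a time and noting the invariant "every piece's value is nondecreasing, every already-processed gap equals $s$" makes the induction routine. I would also remark that although the statement is phrased for $[x,y] \subseteq [0,1]$, the argument only uses nonnegativity and integrability of $f_i$, so it applies verbatim.
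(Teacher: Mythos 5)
Your proof is correct and follows essentially the same two-step strategy as the paper: first replace an arbitrary $s$-separated partition by one with gaps of exactly $s$ by extending pieces into the gaps (which can only increase each piece's value), then parametrize the resulting partitions by a compact subset of Euclidean space, note that the min of piece values is continuous by integrability of $f_i$, and invoke the extreme value theorem. The only cosmetic difference is that you parametrize by the $n$ left endpoints (fixing the last piece to end at $y$) while the paper uses the $n-1$ right cut points (fixing the first piece to start at $x$ and the last to end at $y$); both yield a nonempty compact set and the argument goes through identically.
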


\begin{proof}
Fix a partition $\partition\in \partitions_{n, s}(x, y)$. Suppose that some pair of consecutive pieces of~$\partition$ is separated by length more than $s$. We can then extend one of these pieces so that they are separated by length exactly $s$. By doing so for all pairs of consecutive pieces, we obtain another partition $\partition'\in\partitions'_{n, s}(x, y)$ such that $P_j\subseteq P'_j$ for all $j\in [n]$. It follows that $\min_{j\in [n]}v_i(P'_j)\geq \min_{j\in [n]}v_i(P_j)$, so in Definition~\ref{def:MMS} we may replace $\partitions_{n, s}(x, y)$ with $\partitions'_{n,s}(x, y)$.

Next, we show that we can also replace $\sup$ with $\max$. Define
\begin{align*}
B &:= \{(x_1,\dots,x_{n-1})\mid 
x\leq x_1 \leq \cdots \leq x_{n-1}\leq y-s, \text{ and } x_i+s\leq x_{i+1}\enspace \forall i\in [n-2]\}. 
\end{align*}
Intuitively, for each $i\in [n-1]$ the point $x_i$ is the right endpoint of the $i$-th interval in a partition where every two intervals are separated by exactly $s$ (so that the $(i+1)$-st interval starts at $x_i+s$).
Let $g_i: B\to[0,1]$ be a function defined by
\begin{align*}
g_i(x_1,&\dots,x_{n-1}) := \min\{v_i(x,x_1),v_i(x_1+s,x_2),\dots,v_i(x_{n-1}+s,y)\}.
\end{align*}
Since $B$ is a closed and bounded subset of $\mathbb{R}^{n-1}$, the Heine--Borel theorem implies that it is compact.
Moreover, since 
$v_i$ is the integral of an integrable function,
each of the $n$ functions inside the minimum operator is continuous.
This means that $g_i$, which is a minimum of continuous functions, is continuous too.
Hence, the extreme value theorem for functions of several variables implies that $g_i$ attains a maximum in $B$.
It follows that
\begin{align*}
\mms^{n,s}_i(x, y) 
&= \sup_{\partition\in \partitions'_{n,s}(x, y)}\min_{j\in [n]}v_i(P_j) \\
&= \sup_{(x_1,\dots,x_{n-1})\in B}g_i(x_1,\dots,x_{n-1}) \\
&=
\max_{(x_1,\dots,x_{n-1})\in B}g_i(x_1,\dots,x_{n-1}) \\
&=
\max_{\partition\in \partitions'_{n,s}(x, y)}\min_{j\in [n]}v_i(P_j),
\end{align*}
as claimed.
\end{proof}

From now on, we will work with this new definition of the maximin share.
We say that an $s$-separated partition is a \emph{maximin partition} for agent $i$ if every piece in the partition yields value at least $\mms^{n,s}_i$. 
Proposition~\ref{prop:alt-def} implies that every agent has at least one maximin partition.
Similarly, an $s$-separated allocation $\allocation = (A_1, \dots, A_n)$ is said to be an \emph{MMS-fair allocation} if $v_i(A_i)\ge \mms^{n,s}_i$ for each $i\in N$.

\section{Cake Cutting}
\label{sec:cake}

In this section, we consider cake cutting with separation, both in the Robertson--Webb query model and in a model where the agents' valuations are given explicitly.

\subsection{Robertson--Webb Query Model}

We begin by showing that the maximin share is an appropriate fairness criterion in our setting: it is always possible to fulfill this criterion for every agent using a quadratic number of queries in the Robertson--Webb model.
Our algorithm is similar to the famous Dubins--Spanier protocol for finding proportional allocations when separation is not required \citep{DubinsSp61}: we process the cake from left to right and, at each stage, allocate a piece of cake to an agent who demands the smallest piece.

\begin{theorem}
\label{thm:mms-algo}
For any instance of cake cutting with separation, there exists an MMS-fair allocation.
Moreover, given the maximin share of each agent, such an allocation can be computed using $O(n^2)$ queries in the Robertson--Webb model.
\end{theorem}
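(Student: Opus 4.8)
The plan is to describe a greedy left-to-right algorithm that, given the maximin share values $\mms_i$ for all agents, produces an $s$-separated connected allocation in which each agent $i$ receives value at least $\mms_i$. First I would set up the algorithm: maintain a "current position" $p$ on the cake (initialized to $0$) and a set $R$ of remaining (not-yet-served) agents (initialized to $N$). At each stage, for each $i \in R$ issue the query $\cut_i(p, \mms_i)$ to find the leftmost point $y_i$ with $v_i(p, y_i) = \mms_i$ (this takes $|R| \le n$ queries per stage, hence $O(n^2)$ queries in total over the $n$ stages). Let $i^*$ be an agent in $R$ minimizing $y_{i^*}$; allocate $A_{i^*} = [p, y_{i^*}]$ to agent $i^*$, remove $i^*$ from $R$, and advance the position to $p \leftarrow y_{i^*} + s$ to enforce separation from the next piece. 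Repeat until one agent remains; give the last remaining agent the entire leftover interval $[p, 1]$.

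Next I would verify the two things that need checking. (i) \textbf{Feasibility:} I must show that at every stage the required cut point exists, i.e., that $\cut_{i^*}(p,\mms_{i^*})$ returns a point $y_{i^*} \le 1$ and moreover that after advancing past the gap there is still enough cake left for the remaining agents. The natural invariant is that at the start of a stage with $k = |R|$ agents remaining and current position $p$, every remaining agent $i$ satisfies $\mms_i \le \mms_i^{k,s}(p,1)$ — that is, the maximin share of agent $i$ computed on the residual cake for $k$ agents is at least her original $\mms_i^{n,s}$. This invariant holds initially (trivially, $k=n$, $p=0$). For the inductive step, I would use the chosen agent's optimal $s$-separated partition of $[p,1]$ into $k$ pieces: its leftmost piece $P_1$ satisfies $v_i(P_1) \ge \mms_i^{k,s}(p,1) \ge \mms_i$, so $P_1$ already contains a sub-interval $[p, y_i]$ with $v_i(p, y_i) = \mms_i$ entirely inside the first block of that partition; hence the cut point exists for every remaining agent (so in particular for $i^*$, giving feasibility of the current step), and the remaining $k-1$ pieces of that partition all lie to the right of $P_1$'s right endpoint plus $s$, which is $\ge y_i + s \ge y_{i^*} + s$ since $y_{i^*} \le y_i$ — these $k-1$ pieces therefore witness $\mms_i^{k-1,s}(y_{i^*}+s, 1) \ge \mms_i$, re-establishing the invariant for the next stage. (ii) \textbf{Fairness:} each served agent $i^*$ gets exactly $v_{i^*}(A_{i^*}) = \mms_{i^*}$ by construction, and the final agent gets $[p,1]$, which by the invariant (with $k=1$) has value $\ge \mms_i^{1,s}(p,1) = v_i(p,1) \ge \mms_i$. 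The existence statement follows since the algorithm always terminates with such an allocation.

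The main obstacle I expect is cleanly formulating and proving the invariant tying the residual maximin share $\mms_i^{k,s}(p,1)$ to the original $\mms_i^{n,s}$; this is where one has to be careful that "peeling off" the leftmost block of agent $i$'s optimal partition for the residual instance genuinely leaves a valid $s$-separated partition into $k-1$ pieces of the further-shrunken interval $[y_{i^*}+s, 1]$, using crucially that $y_{i^*} \le y_i$ (the greedy minimality choice) so that advancing the position does not cut into agent $i$'s remaining $k-1$ blocks. Using Proposition~\ref{prop:alt-def}, I can assume the witnessing partitions have all gaps exactly $s$, which makes the "the next block starts at right-endpoint $+\,s$" bookkeeping exact rather than an inequality, streamlining the argument. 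The query-complexity bound $O(n^2)$ is then immediate from the per-stage count.
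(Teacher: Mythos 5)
Your proof is correct and takes essentially the same Dubins--Spanier-style approach as the paper: the same greedy left-to-right algorithm, with correctness established by showing that after each allocation step the residual cake still supports the remaining agents' shares, using the minimality of $y_{i^*}$. The only presentational difference is that you package the argument as an explicit invariant on the residual maximin share $\mms_i^{k,s}(p,1)$, whereas the paper tracks the suffix of each agent's original maximin partition directly; the two formulations are equivalent.
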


\begin{proof}
If there are at least two agents present, we ask each agent $i$ to mark the leftmost point $x_i$ such that $v_i(0,x_i) = \mms_i$. 
The agent who marks the leftmost $x_i$ is allocated the piece $[0,x_i]$ (with ties broken arbitrarily); we then remove this agent along with the piece $[x_i,x_i+s]$, and recurse on the remaining agents and cake.
If there is only one agent left, that agent receives all of the remaining cake.
Since we make $n-j$ \textsc{Cut} queries when there are $n-j$ agents left (and no \textsc{Eval} queries), our algorithm uses $\sum_{j=0}^{n-2}(n-j) = O(n^2)$ queries.

We now prove the correctness of the algorithm.
Consider any agent $i$ and her maximin partition $\partition=\{P_1, \dots, P_n\}$. 
If agent $i$ receives the first piece allocated by the algorithm, she receives value $\mms_i$. 
Else, the allocated piece is no larger than $P_1$.
Since the algorithm inserts a separator of length exactly $s$, the right endpoint of the first separator is either the same or to the left of the leftmost point of $P_2$.
Applying a similar argument repeatedly, we find that if agent $i$ is not allocated any of the first $n-j$ pieces, where $j\in [n-1]$, then the remaining cake contains the last $j$ pieces of $\partition$. In particular, for $j=1$ it follows that if agent $i$ receives the very last piece, 
she receives a value of at least $\mms_i$ in this case, too.
\end{proof}

The algorithm in Theorem~\ref{thm:mms-algo} crucially relies on knowing the maximin share of each agent.
Unfortunately, we show next that this knowledge is impossible to achieve in finite time, even if the valuations are piecewise constant but are not given explicitly as part of the input.
Our result is similar in spirit to the non-finiteness results for connected envy-free cake cutting \citep{Stromquist08}, equitability with connected pieces \citep{CechlarovaPi12,branzei2017query} and with arbitrary pieces \citep{procaccia2017lower}, and
average-proportionality \citep{segal2019families}.
However, all previous impossibility results were for two or more agents with possibly different valuations. 
In contrast, our impossibility result is attained even for a \emph{single} agent who wants to cut the cake into two $s$-separated pieces.

\begin{theorem}
\label{thm:mms-impossibility}
For each $s>0$, there is no finite algorithm (i.e., an algorithm that always terminates) that, given $n\in\mathbb N$ and $i\in N$, is guaranteed to compute $\emph{MMS}_i^{n,s}$
by asking agent $i$ a number of \textsc{Cut} and \textsc{Eval}
queries.
This holds even when $n=2$ and  agent $i$'s valuation is piecewise constant and strictly positive (but is not given explicitly).
\end{theorem}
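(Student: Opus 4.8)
The plan is to argue by contradiction, using the fact that for $n=2$ the maximin share has a clean closed form. It suffices to treat $n=2$, since the hypothesized algorithm must work for this case, and we may assume it is deterministic. Write $F_i(x):=v_i(0,x)$. By Proposition~\ref{prop:alt-def}, $\mms_i^{2,s}=\max_{x\in[0,1-s]}\min\{v_i(0,x),\,v_i(x+s,1)\}=\max_{x\in[0,1-s]}\min\{F_i(x),\,1-F_i(x+s)\}$. If the density is strictly positive then $F_i$ is continuous and strictly increasing, so $F_i(x)$ rises from $0$ while $1-F_i(x+s)$ falls to $0$; they are equal at a unique $x^\ast\in(0,1-s)$, and a short monotonicity argument shows the inner minimum is maximized precisely there, giving $\mms_i^{2,s}=F_i(x^\ast)=v_i(0,x^\ast)$ (call $x^\ast$ the \emph{balance point}). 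In particular, for the uniform density ($v_i(a,b)=b-a$) the two non-gap pieces $[0,x]$ and $[x+s,1]$ always have total value $1-s$, so $\mms_i^{2,s}=\tfrac{1-s}{2}$, attained by $\{[0,\tfrac{1-s}{2}],[\tfrac{1+s}{2},1]\}$.

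Now suppose some finite algorithm $\mathcal A$ always computes $\mms_i^{2,s}$. First I would run $\mathcal A$ against the uniform density, answering each $\eval_i(x,y)$ with $y-x$ and each $\cut_i(x,\alpha)$ with $x+\alpha$ (or ``no such point'' when $x+\alpha>1$). Since $\mathcal A$ terminates, only finitely many points are ever named; let $0=t_0<t_1<\dots<t_k=1$ list all points occurring in a query or an answer. Then every basic interval $[t_{j-1},t_j]$ received its uniform value, and one checks that all of $\mathcal A$'s information is reproduced by \emph{any} strictly positive density whose CDF equals the identity at every $t_j$ (strict positivity makes the ``leftmost'' clause of a $\cut$ answer automatic, and a ``no such point'' answer survives because the total on $[t_a,1]$ is unchanged). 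By correctness, $\mathcal A$ outputs $r=\tfrac{1-s}{2}$. I would then exhibit a second density $\hat f$ that is piecewise constant, strictly positive, has $\hat F(t_j)=t_j$ for all $j$, yet has maximin share $\neq\tfrac{1-s}{2}$; running $\mathcal A$ on $\hat f$ produces the identical run and hence the same, now incorrect, output, a contradiction.

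The heart of the proof is building $\hat f$ by relocating the balance point using only the freedom available strictly between marked points. Let $I_a=[t_{a-1},t_a]$ be the basic interval whose interior contains $p:=\tfrac{1-s}{2}$ and $I_b=[t_{b-1},t_b]$ the one whose interior contains $p+s=\tfrac{1+s}{2}$ (if $p$ or $p+s$ is itself a marked point, take an adjacent basic interval; the argument changes only cosmetically). The key coincidence $\tfrac{1+s}{2}=p+s=1-p$ implies that $p$ lies simultaneously in the interiors of $I_a$, of $I_b-s:=\{y-s:y\in I_b\}$, and of $1-I_b:=\{1-y:y\in I_b\}$. Hence I can pick $\xi$ near $p$ with $\xi\in I_a$ and $\xi+s\in I_b$, and a value $v\neq p$ in a nondegenerate interval around $p$ with $v\in I_a$ and $1-v\in I_b$ (choosing $v<\tfrac{1}{2}$ in the boundary case $I_a=I_b$). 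Let $\hat f$ be a strictly positive piecewise constant density whose CDF is the piecewise-linear interpolant through $(t_0,t_0),\dots,(t_k,t_k),(\xi,v),(\xi+s,1-v)$; this is well defined and strictly increasing because $v$ and $1-v$ lie strictly between the neighbouring forced values. Then $\hat F(\xi)+\hat F(\xi+s)=1$, so by strict monotonicity $\xi$ is $\hat f$'s unique balance point and its maximin share equals $\hat F(\xi)=v\neq\tfrac{1-s}{2}=r$, completing the contradiction.

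The routine parts are the monotonicity lemma identifying the maximin value with the value at the balance point, and the verification that $\hat f$ reproduces every answer. The main obstacle --- and essentially the only place anything nontrivial happens --- is the geometric claim that \emph{no matter how cleverly} $\mathcal A$ places its finitely many cuts, a consistent piecewise-constant density with a different balance value still exists. Its content is exactly the identity $p+s=1-p$, which forces the basic intervals straddling $p$ and $p+s$ to be simultaneously ``$s$-linked'' and ``mirror images'' of each other, leaving a whole sub-interval of admissible pairs $(\xi,v)$; a finite set of cuts cannot collapse that interval to a point. (One could instead isolate this as a standalone non-finiteness result for the function problem of locating the balance point and then reduce the maximin computation to it, but the argument above fuses the two steps.)
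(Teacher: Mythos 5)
Your reduction (locating the ``balance point'' $x^\ast$ with $F_i(x^\ast)+F_i(x^\ast+s)=1$ is equivalent to computing $\mms_i^{2,s}$) is exactly the reduction the paper uses via its \textsc{FindSum1} problem. The difference, and the source of a genuine gap, is in how you show the balance point cannot be located by finitely many queries: the paper constructs an \emph{adaptive} adversary, whereas you run the algorithm against a \emph{fixed} density (uniform) and then try to perturb.

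The gap is in the parenthetical ``if $p$ or $p+s$ is itself a marked point, take an adjacent basic interval; the argument changes only cosmetically.'' Under the uniform density the balance point is $p=\tfrac{1-s}{2}$, and since the algorithm knows $s$, it is free to begin by asking $\eval_i(0,\tfrac{1-s}{2})$ and $\eval_i(0,\tfrac{1+s}{2})$. Then \emph{both} $p$ and $p+s$ are marked, and every $\hat F$ consistent with the transcript satisfies $\hat F(p)=p$ and $\hat F(p+s)=p+s=1-p$, hence $\hat F(p)+\hat F(p+s)=1$. By strict monotonicity of $\hat F$ the balance point is then \emph{forced} to be $p$ for every consistent strictly positive density, and the maximin share is forced to be $\tfrac{1-s}{2}$. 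No fooling density exists, and ``taking adjacent basic intervals'' cannot rescue you: if $t_a=p$ and $t_b=p+s$, any choice of adjacent intervals for $I_a$ and $I_b$ makes the required intersection $\{\,v : v\in\mathrm{int}(I_a),\,1-v\in\mathrm{int}(I_b)\,\}$ empty, precisely because the constraint $1-t_b=t_a$ collapses the available range. So your argument does not rule out an algorithm that first probes $(1\pm s)/2$ and then, depending on what it sees, continues with further queries. Such an algorithm might still be wrong on some density, but you have only analyzed its behavior on densities consistent with the uniform transcript, where it is in fact correct.

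The paper's proof handles exactly this by letting the adversary choose answers on the fly: when a $g$-query at $x$ is asked, the adversary deliberately picks $g(x)$ so that $g(x)\neq 1-g(x\pm s)$ whenever $x\pm s$ is already recorded; when a $g^{-1}$-query is asked, the adversary picks the returned point so that neither it nor its $\pm s$ translates are recorded. This keeps the balance relation $g(x)+g(x+s)=1$ undetermined at every recorded pair, which your fixed-density run cannot guarantee. To repair your proof you would essentially have to import this adaptive answering strategy, at which point it becomes the paper's argument with the \textsc{FindSum1} abstraction inlined.
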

We prove the theorem by reducing from a more general problem, which may be of independent interest.

\begin{framed}
\noindent
{Problem \findsum{s}, where $s\in[0,1)$ is a real parameter.}

\paragraph{Input:} $g: [0,1]\to [0,1]$, a continuous monotonically increasing bijective function, specified by oracles that can answer two kinds of queries:
\begin{itemize}
\item Given $x\in[0,1]$, what is $g(x)$?
\item Given $\alpha\in[0,1]$, what is $g^{-1}(\alpha)$?
\end{itemize}

\paragraph{Output:} A point $x_0\in[0,1-s]$ for which $g(x_0) +g(x_0+s)=1$.
\end{framed}
The function $h(x)=g(x)+g(x+s)$ is continuous, 
monotonically increasing, and satisfies $h(0)=g(0)+g(s)\le 1$, 
$h(1-s)=g(1-s)+g(1)\ge 1$. Therefore, 
\findsum{s} always has a solution due to the intermediate value theorem.
Further, \findsum{0} can be solved by a single query, namely, $g^{-1}(1/2)$.
Our lemma below shows that, for any $s>0$, \findsum{s} cannot be solved by finitely many queries. To prove Theorem~\ref{thm:mms-impossibility},
we then show that \findsum{s} can be reduced to computing $\mms^{2,s}_i$. 

\begin{lemma}
\label{lem:function-impossibility}
For each $s>0$, there is no algorithm that solves \findsum{s} using finitely many queries. This holds even if it is known that $g$ is piecewise linear.
\end{lemma}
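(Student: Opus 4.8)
The plan is to use an adversary (adaptive oracle) argument. I will maintain, after any finite sequence of queries has been answered, a family of candidate functions $g$ that are all consistent with the answers given so far, yet whose solution points $x_0$ (i.e.\ the points with $g(x_0)+g(x_0+s)=1$) are spread out over an interval of positive length. Since the algorithm must eventually commit to a single output $x_0$, and this output is a continuous-valued quantity that the adversary can still perturb, no finite number of queries suffices. Concretely, I would keep track of the finitely many constraints imposed on $g$ by the answered queries: each $g(x)$-query fixes the value of $g$ at one point, and each $g^{-1}(\alpha)$-query, once answered with some point $x$, also just fixes $g(x)=\alpha$. So after $q$ queries we have at most $q$ point-constraints $g(x_j)=\alpha_j$, plus the global constraints that $g$ is continuous, strictly increasing, piecewise linear, with $g(0)=0$ and $g(1)=1$.

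The heart of the argument is to show that these finitely many constraints never pin down the solution point. I would pick a small "free" interval $I=[a,a+s]\cup[a,a+s]$-region of the domain — more precisely, two short intervals $[a,b]$ and $[a+s,b+s]$ with $b-a$ small — chosen to avoid all the query points $x_j$ asked so far (possible since there are finitely many of them and $s$ is fixed with $0<s<1$, so we have room, using that the relevant part of the domain where $h(x)=g(x)+g(x+s)$ crosses $1$ has positive length and can be arranged to be query-free). Within this free region I can modify $g$ — tilting its slope up or down on $[a,b]$ and compensating on $[a+s,b+s]$, or vice versa — to move the crossing point of $h$ left or right by a positive amount, while keeping $g$ continuous, strictly increasing, piecewise linear, and consistent with every answered query. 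Thus for the current transcript there exist two consistent instances whose unique (or: whose sets of) solution points are disjoint, so whatever $x_0$ the algorithm outputs is wrong for at least one of them.

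To turn this into a clean contradiction: suppose an algorithm always halts after asking at most some finite number of queries on every input (or even just: on some particular input it halts after $q$ queries). Run it against the adaptive adversary that, whenever asked $g(x)$, answers with the value of a fixed "reference" piecewise-linear increasing bijection $g^*$ at $x$, and whenever asked $g^{-1}(\alpha)$, answers $(g^*)^{-1}(\alpha)$ — except that the adversary reserves the right, at the end, to choose the actual $g$ from the consistent family. After the algorithm halts and outputs $x_0$, the adversary exhibits two piecewise-linear bijections $g_1,g_2$, both consistent with the whole transcript (using the free-region modification above, keeping $g_1=g_2=g^*$ outside the free region and at all query points), such that $g_1(x_0)+g_1(x_0+s)\neq 1$ or $g_2(x_0)+g_2(x_0+s)\neq 1$. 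Hence the algorithm fails on at least one of $g_1,g_2$, contradicting correctness.

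The main obstacle I anticipate is the bookkeeping in the free-region perturbation: I must verify that a local "slope tilt" on $[a,b]$ together with a compensating tilt on $[a+s,b+s]$ (and possibly further compensation elsewhere so that $g(1)=1$ is preserved) can always be carried out while (i) keeping $g$ strictly increasing and piecewise linear, (ii) matching all $\le q$ answered point-constraints, and (iii) genuinely shifting the solution of $h(x)=1$ by a positive, controllable amount. The degrees of freedom clearly outnumber the finitely many equality constraints, so this is plausible, but writing it so that the two exhibited instances have provably disjoint solution sets (rather than merely "possibly different" solutions) requires care — in particular, I should arrange that near the crossing point $h$ has a definite positive slope in both instances, so that moving $h$ up or down by a fixed small amount on the free interval moves the crossing point monotonically and by a bounded-below distance. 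Handling the degenerate possibility that the algorithm's queries happen to straddle or fall inside every candidate free region is dispatched by the counting argument: finitely many query points cannot block a positive-length family of disjoint free regions.
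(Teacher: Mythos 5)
Your high-level plan (an adversary argument that keeps a family of consistent piecewise-linear $g$'s alive and then exhibits a $g$ for which the output $x_0$ fails) is the right idea, but the concrete adversary you propose does not work. You have the adversary answer every query truthfully according to a \emph{fixed} reference bijection $g^*$ and only ``reserve the right'' to switch functions at the very end via a local slope perturbation away from all query points. This is fatal: your perturbation is supported in a free region $[a,b]\cup[a+s,b+s]$ that avoids every query point, so it does not change $g(x_0)$ or $g(x_0+s)$ unless $x_0$ itself happens to sit in that free region. But the algorithm gets to place queries \emph{at} $x_0$ and $x_0+s$ before outputting $x_0$. Concretely, suppose the adversary picks $g^*=\mathrm{id}$. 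The algorithm asks $g\bigl((1-s)/2\bigr)$ and $g\bigl((1+s)/2\bigr)$, receives $(1-s)/2$ and $(1+s)/2$, observes that they sum to $1$, and outputs $x_0=(1-s)/2$. Now both $x_0$ and $x_0+s$ are recorded with values summing to $1$, so \emph{every} $g$ consistent with the transcript satisfies $g(x_0)+g(x_0+s)=1$; the free-region perturbation, no matter where it lives, cannot falsify the answer. The problem only asks for \emph{some} $x_0$ with $h(x_0)=1$, not the unique crossing of $h$ for $g^*$, so ``moving the crossing'' elsewhere does not help once $x_0$ is pinned.

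The missing ingredient is that the adversary must be genuinely adaptive \emph{during} the query phase, not only at the end, and it must maintain the invariant that no pair of recorded points at distance $s$ ever has recorded $g$-values summing to $1$. The paper's proof achieves exactly this: when answering a $g(x)$ query, the adversary picks a value respecting monotonicity that also avoids $g(x)=1-g(x\pm s)$ whenever $x\pm s$ is already recorded (these rule out at most two values from a continuum, so a choice always exists); and when answering a $g^{-1}(\alpha)$ query, the adversary deliberately places the returned point $x$ so that none of $x$, $x-s$, $x+s$ is already recorded, preventing the algorithm from forcing a dangerous pair into existence. Only after the algorithm commits to $x_0$ does the adversary voluntarily record $g(x_0)$ and $g(x_0+s)$, again by these rules, which guarantees $g(x_0)+g(x_0+s)\neq 1$. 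Your proposal does neither of these things, so it needs the invariant-maintaining, per-query adaptive strategy to be salvageable; the ``free-region tilt'' on its own is not enough.
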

\begin{proof}
Assume for contradiction that a finite algorithm exists.
We will show how an adversary can answer the queries made by the algorithm in such a way that after any finite number of queries, for any value of $x_0$ that the algorithm may output, there exists a piecewise linear function $g$ consistent with the adversary's answers such that 
$g(x_0)+g(x_0+s)\neq 1$.
This is sufficient to obtain the desired contradiction.

During the run of the algorithm, there is always a finite set of points $x\in[0,1]$ for which the algorithm knows the value of $g(x)$; we say that such points are \emph{recorded}. 
Initially, only points $0$ and $1$ are recorded:
since $g$ is a bijection and it is monotonically increasing, we have $g(0)=0$ and $g(1)=1$.
Given a point $x\in[0,1]$, we denote by $x_-$ the largest recorded point that is at most $x$, and by $x_+$ the smallest recorded point that is at least $x$.
If $x$ itself is recorded, then $x_-=x_+=x$.

When asked ``Given $x$, what is $g(x)$?'', if $x$ is recorded then the adversary replies $g(x)$; else, the adversary chooses a value for $g(x)$ so as to satisfy the following properties:
\begin{enumerate}[label=(\roman*)]
\item Monotonicity is preserved, i.e., $g(x_-) < g(x) < g(x_+)$.
\item If 
the point $x+s$ is recorded, then $g(x)\neq 1-g(x+s)$.
\item If 
the point $x-s$ is recorded, then $g(x)\neq 1-g(x-s)$.
\end{enumerate}
Since condition (i) allows infinitely many values to choose from, and each of the conditions (ii) and (iii) rules out at most one value, the adversary can make a choice satisfying these conditions.

When asked ``Given $\alpha$, what is $g^{-1}(\alpha)$?'', if there is a recorded point $x$ such that $g(x) = \alpha$,
then the adversary replies $x$;
else, the adversary chooses a point $x$ so as to satisfy the following properties:
\begin{enumerate}[label=(\roman*)]
\item Monotonicity is preserved, i.e.,
$g(x_-)<\alpha<g(x_+)$.
\item None of the points $x-s$, $x+s$, and $x$ is recorded. 
\end{enumerate}
Again, condition~(i) allows infinitely many points, and condition~(ii) rules out only a finite number of them.

Since the algorithm is finite, it must eventually return some number $x_0 \in[0,1-s]$.
Now, in order to falsify the algorithm's answer, the adversary voluntarily answers two more queries by the same rules as above: $g(x_0)$ and $g(x_0+s)$.
Now both $x_0$ and $x_0+s$ are recorded. The answering rules guarantee that $g(x_0)+g(x_0+s)\neq 1$, so $x_0$ cannot be a correct solution to \findsum{s}.

Finally, to complete the function $g$, the adversary simply connects every pair of consecutive recorded points linearly.
\end{proof}

\newcommand{\mmsalg}{\textsc{Alg}}
\begin{proof}[Proof of Theorem \ref{thm:mms-impossibility}]
Let $s>0$ be the separation parameter. 
We reduce \findsum{s} to the problem of computing $\mms_i^{2,s}$ for $i\in \{1, 2\}$:
we show that, given an algorithm \mmsalg{} that computes $\mms_i^{2,s}$ using a finite number of \textsc{Cut} and \textsc{Eval} queries in the Robertson--Webb framework, we can solve \findsum{s} for any function $g$ using finitely many queries.
This is sufficient by Lemma~\ref{lem:function-impossibility}.
To implement the reduction, we need to explain (1) how to answer
the \textsc{Eval} and \textsc{Cut} queries using the oracles for 
$g$ and $g^{-1}$, and (2) how to transform the output of \mmsalg{} into an answer to \findsum{s}. 

We approach (1) by imitating a valuation $v$ for which $v(x, y)=g(y)-g(x)$.
Specifically:
\begin{itemize}
\item If \mmsalg{} asks $\eval_i(x,y)$, we ask $g(x)$ and $g(y)$ and return $g(y)-g(x)$.
\item If \mmsalg{} asks $\cut_i(x,\alpha)$, we ask $g(x)$ and $g^{-1}(\alpha+g(x))$ and return the latter value.
\end{itemize}
At some point, \mmsalg{} stops and outputs some $r\in[0,1]$ as the value of $\mms_i^{2,s}$. At that point we ask one more query $g^{-1}(r)$ and return its result $x_0$ as the answer to \findsum{s}.

As \mmsalg{} is presumed to be correct, it must be the case that $\mms_i^{2,s}=r$ for any valuation $v$ that is compatible with the query replies. This means that there exists an $s$-separated partition of the cake into two pieces $[0,x_0]$ and  $[x_0+s,1]$ for which $v(0,x_0)=v(x_0+s,1)=r$.
Consequently, we have $v(0, x_0+s)=1-r$ and hence
$v(0,x_0) + v(0,x_0+s)=1$.
This is true, in particular, for the valuation $v(x, y)=g(y)-g(x)$, which was used to answer the queries. Hence,
$g(x_0)+g(x_0+s)=1$, so $x_0$ is indeed the right answer to \findsum{s}.
If \mmsalg{} used $t$ queries, then this answer was found using $2t+1$ queries to $g$ and~$g^{-1}$.
\end{proof}

Given a maximin partition of an agent, we can compute the agent's maximin share by simply taking the minimum among the agent's values for the pieces in the partition.
Moreover, for two agents with identical valuations, an allocation in which each agent receives at least their (common) maximin share corresponds to a maximin partition for the common valuation.
 Theorem~\ref{thm:mms-impossibility} therefore yields the following corollary,
 which also implies that an allocation whose existence is guaranteed by Theorem~\ref{thm:mms-algo} cannot be computed without the knowledge of the agents' maximin shares.

\begin{corollary}
\label{cor:mms-no-partition}
There is no finite algorithm in the Robertson--Webb model that can always 
(a) compute a maximin partition of a single agent, or 
(b) compute an MMS-fair allocation for $n$ agents.
This holds even when $n=2$ and the agents' valuations are piecewise constant (but not given explicitly).
\end{corollary}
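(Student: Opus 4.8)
The plan is to reduce both claims to Theorem~\ref{thm:mms-impossibility}: a finite algorithm for either task would yield a finite algorithm that computes $\mms_i^{2,s}$ for a single agent with piecewise constant valuation, which that theorem forbids. The only auxiliary fact I need is that \emph{the min-value of any maximin partition equals the maximin share}. Indeed, by definition every piece of a maximin partition for agent~$i$ is worth at least $\mms^{n,s}_i$, while Proposition~\ref{prop:alt-def} shows that no $s$-separated partition has min-value exceeding $\mms^{n,s}_i$; hence the min-value of a maximin partition is exactly $\mms^{n,s}_i$.

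For part~(a), I would suppose toward a contradiction that some finite algorithm, given $n$ and $i$, always returns a maximin partition $\{P_1,\dots,P_n\}$ for agent~$i$, and specialize to $n=2$. After this algorithm halts, I issue two further $\eval_i$ queries to obtain $v_i(P_1)$ and $v_i(P_2)$ and output $\min\{v_i(P_1),v_i(P_2)\}$. By the fact above this is exactly $\mms_i^{2,s}$, and the whole procedure makes only finitely many \cut/\eval queries---contradicting Theorem~\ref{thm:mms-impossibility}. (The same reduction works for every $n$, so $n=2$ is not essential here.)

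For part~(b), I would suppose that some finite algorithm always outputs an MMS-fair allocation for $n$ agents and again take $n=2$. Given a single agent with an arbitrary piecewise constant valuation $v$ accessed via Robertson--Webb queries, I run the two-agent algorithm on the instance in which \emph{both} agents have valuation $v$, answering every query to either agent with the single oracle for~$v$. The algorithm returns a connected $s$-separated allocation $(A_1,A_2)$ with $v(A_1)\ge\mms$ and $v(A_2)\ge\mms$, where $\mms$ denotes the agents' common maximin share. Since $\{A_1,A_2\}$ is itself an $s$-separated partition, its min-value is at most $\mms$; together with the two inequalities this forces $\min\{v(A_1),v(A_2)\}=\mms$, so $(A_1,A_2)$ is in fact a maximin partition for~$v$. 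Two extra $\eval$ queries on $A_1$ and $A_2$ then recover $\mms$ exactly, so the composed procedure is finite and correct on every piecewise constant single-agent instance---again contradicting Theorem~\ref{thm:mms-impossibility}.

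I do not expect a genuine obstacle: the argument is a pair of short black-box reductions. The two points deserving a moment of care are (i) confirming that a maximin partition's min-value equals the maximin share, which is the one-line consequence of Proposition~\ref{prop:alt-def} noted above, and (ii) checking that the simulated two-agent instance in part~(b) is admissible---it is, since two agents may share a valuation and a common separation parameter $s>0$. Everything else is just bookkeeping of the finitely many additional queries.
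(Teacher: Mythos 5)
Your proposal is correct and matches the paper's argument: both parts are reduced to Theorem~\ref{thm:mms-impossibility} by observing that a maximin partition's min-value (computable with two more \eval{} queries) equals $\mms_i^{2,s}$, and that for two agents with identical valuations an MMS-fair allocation is itself a maximin partition for the shared valuation. Your write-up is just a more explicit spelling-out of the paper's one-paragraph justification, including a clean statement of the auxiliary fact that a maximin partition's min-value is exactly the maximin share.
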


Despite these negative results, we show next that it is possible to get an arbitrarily good approximation of the maximin share, partition, and allocation.

\begin{theorem}
\label{thm:mms-query-atleast}
Given an agent $i$ and a number $r > 0$, it is possible to decide whether $\emph{MMS}_i\geq r$ (and, if so, compute a partition $\partition$ such that $i$'s value for each part of $\partition$ is at least $r$) using at most $n$ queries in the Robertson--Webb model.
\end{theorem}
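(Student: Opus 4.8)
The plan is to run a greedy left-to-right procedure analogous to the algorithm of Theorem~\ref{thm:mms-algo}, but using the target value $r$ in place of the (unknown) $\mms_i$. Concretely, set $x_0:=-s$ as a notational convenience, and for $j=1,2,\dots,n-1$ in turn, issue the query $\cut_i(x_{j-1}+s,\, r)$ to obtain the leftmost point $x_j$ with $v_i(x_{j-1}+s,\,x_j)=r$. If some such query reports that no such point exists, or if at some stage $x_{j-1}+s>1$, the algorithm halts and declares $\mms_i<r$. Otherwise, after $n-1$ \cut{} queries we have candidate pieces $[x_{j-1}+s,\,x_j]$ for $j\in[n-1]$, each of value exactly $r$; we then issue one \eval{} query, $\eval_i(x_{n-1}+s,\,1)$, to evaluate the last piece $P_n:=[x_{n-1}+s,\,1]$. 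If $v_i(P_n)\ge r$, output ``$\mms_i\ge r$'' together with $\partition=\{[0,x_1],\,[x_1+s,x_2],\,\dots,\,[x_{n-1}+s,1]\}$; otherwise output ``$\mms_i<r$''. This uses $n-1$ \cut{} queries and one \eval{} query, so at most $n$ in total.

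For the ``yes'' direction of correctness, observe that whenever the procedure outputs $\partition$, its pieces are listed in increasing order, consecutive pieces are separated by exactly $s$ (so $\partition\in\partitions'_{n,s}(0,1)$), the first $n-1$ pieces have value exactly $r$, and the last has value at least $r$; hence $\partition$ directly witnesses $\mms_i\ge r$.

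For the converse, suppose $\mms_i\ge r$. By Proposition~\ref{prop:alt-def} there is a partition $\mathbf{Q}=\{Q_1,\dots,Q_n\}$, listed in increasing order, with $v_i(Q_j)\ge r$ for every $j$. Extending $Q_1$ leftward to begin at $0$ and $Q_n$ rightward to end at $1$ does not decrease any value and preserves $s$-separation, so we may write $Q_j=[a_j,b_j]$ with $a_1=0$, $b_n=1$, and $b_j+s\le a_{j+1}$. The key claim, proved by induction on $j$, is that for each $j\in[n-1]$ the query $\cut_i(x_{j-1}+s,\,r)$ succeeds and returns a point $x_j\le b_j$: indeed $x_{j-1}+s\le a_j$ (immediate for $j=1$ since $x_0+s=0=a_1$, and for $j\ge 2$ from $x_{j-1}\le b_{j-1}$ and $b_{j-1}+s\le a_j$), so $[a_j,b_j]\subseteq[x_{j-1}+s,1]$ and therefore $v_i(x_{j-1}+s,\,b_j)\ge v_i(Q_j)\ge r>0=v_i(x_{j-1}+s,\,x_{j-1}+s)$; since $t\mapsto v_i(x_{j-1}+s,\,t)$ is continuous and nondecreasing, it attains the value $r$, and its leftmost such point $x_j$ satisfies $x_j\le b_j$. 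Applying the same comparison once more, $x_{n-1}+s\le a_n$ yields $v_i(x_{n-1}+s,\,1)\ge v_i(Q_n)\ge r$, so the final \eval{} check passes and the algorithm correctly reports $\mms_i\ge r$.

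I expect the only delicate points to be (i) keeping the query count at exactly $n$ rather than $n+1$, which is why the last piece is checked with a single \eval{} rather than a further \cut{}; (ii) the routine normalization letting us assume the witnessing partition's first and last pieces touch $0$ and $1$; and (iii) handling the corner case $x_{j-1}+s>1$, which can only arise when $\mms_i<r$ so that halting with ``$\mms_i<r$'' is correct. The inductive ``greedy dominates every competing partition'' comparison is essentially the one already used in the proof of Theorem~\ref{thm:mms-algo}, so no new difficulty arises there.
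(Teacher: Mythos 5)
Your proof is correct and takes essentially the same greedy left-to-right approach as the paper's: make $n-1$ $\cut$ queries for pieces of value exactly $r$ with separators of exactly $s$, check the last piece with a single $\eval$, and use the same inductive comparison against a witnessing maximin partition to show that the greedy cuts never overshoot. Your write-up is just a more formal rendering of the paper's argument (explicit edge cases, explicit normalization of the competing partition), with no difference in substance.
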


\begin{proof}
The idea is similar to that in the proof of Theorem~\ref{thm:mms-algo}.
Ask the agent to mark the leftmost point $x$ such that $v_i(0,x) = r$, make $[0,x]$ one of the pieces in a potential partition, add a separator $[x,x+s]$, and repeat starting from $x+s$.
If there is still value at least $r$ left after $n-1$ iterations, answer Yes; else, answer No. 
It is clear that at most $n$ queries are used.

If the algorithm answers Yes, then it finds a partition with value at least $r$, so $\mms_i\geq r$.
Conversely, suppose that $\mms_i\geq r$, and consider a maximin partition $\partition = \{P_1, \dots, P_n\}$. The right endpoint of $P_1$ is either the same or to the right of our first marked point $x$.
In addition, since our algorithm inserts a separator of length exactly $s$, the right endpoint of our first separator is no further to the right than the left endpoint of $P_2$.
Applying a similar argument $n-1$ times, we find that the right endpoint of our $(n-1)$-st separator is no further to the right than the left endpoint of $P_n$.
Since the value of $P_n$ is at least $\mms_i$, our remaining piece also has value at least $\mms_i\geq r$. Hence the algorithm answers Yes, as claimed.
\end{proof}

As an example, suppose that $v_i(0,1/3) = 0.4$, $v_i(1/3,2/3) = 0$, $v_i(2/3,1) = 0.6$, and the value is distributed uniformly within $[0, 1/3]$ and within $[2/3, 1]$.
Moreover, $s=1/3$ and we need to decide whether $\mms^{2,s}_i \ge 0.4$.
The algorithm in Theorem~\ref{thm:mms-query-atleast} first marks $x = 1/3$, then adds a separator $[1/3, 2/3]$, and finally answers Yes because the value of the remaining cake is $0.6 \ge 0.4$.

Combining the algorithm in Theorem~\ref{thm:mms-query-atleast} with binary search over possible values of $r$ enables us to approximate the maximin share.

\begin{corollary}
\label{cor:mms-query-bound}
Given an agent $i$ and a number $\varepsilon > 0$, it is possible to find a number~$r$ for which $\emph{MMS}_i-\varepsilon\leq r\leq \emph{MMS}_i$ (together with a partition whose min-value for agent~$i$ is at least $r$) using $O(n\log(1/\varepsilon))$ Robertson--Webb queries.
\end{corollary}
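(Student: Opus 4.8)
The plan is to use the decision procedure of Theorem~\ref{thm:mms-query-atleast} as a comparison oracle inside a binary search on the target value $r$. First I would note that $\mms_i \le 1/n$: every $s$-separated partition $\{P_1,\dots,P_n\}$ consists of essentially disjoint pieces of $[0,1]$, so $\sum_j v_i(P_j) \le 1$ and hence $\min_j v_i(P_j) \le 1/n$; taking the supremum over partitions gives the bound. Consequently it suffices to search for $r$ inside the interval $[0,1/n]$, which has length $O(1/n)$.

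The search maintains a sub-interval $[\ell,u]$ of $[0,1/n]$ together with an $s$-separated partition $\partition$, preserving the invariants (i) $\ell \le \mms_i \le u$, and (ii) the min-value of $\partition$ for agent $i$ is at least $\ell$. Initially I set $\ell = 0$, $u = 1/n$, and take $\partition$ to be an arbitrary $s$-separated partition (one exists since $s < \frac{1}{n-1}$, and it trivially has min-value at least $0$). At each step I set $m = (\ell+u)/2$ and invoke Theorem~\ref{thm:mms-query-atleast} with $r := m$, spending at most $n$ queries. If the answer is ``Yes'', the procedure also returns a partition with min-value at least $m$; I then update $\ell \leftarrow m$ and replace $\partition$ by that partition. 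If the answer is ``No'', then $\mms_i < m$ and I update $u \leftarrow m$. Either way the invariants are preserved and the length $u-\ell$ is halved.

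After $\lceil \log_2(1/(n\varepsilon)) \rceil = O(\log(1/\varepsilon))$ iterations we have $u - \ell \le \varepsilon$, at which point I output $r := \ell$ together with the current partition $\partition$. Correctness is then immediate: $r = \ell \le \mms_i$ by invariant (i); $\mms_i \le u \le \ell + \varepsilon = r + \varepsilon$, so $r \ge \mms_i - \varepsilon$; and $\partition$ has min-value at least $\ell = r$ by invariant (ii). The total cost is $O(\log(1/\varepsilon))$ rounds of at most $n$ queries each, i.e., $O(n\log(1/\varepsilon))$ queries.

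There is no real obstacle in this argument; the only points needing a little care are the observation $\mms_i \le 1/n$ (so that the initial interval has length $O(1/n)$ and the query count stays $O(n\log(1/\varepsilon))$ rather than growing) and checking that the partition carried along always matches the current lower bound $\ell$ --- which holds automatically, since $\ell$ changes only on a ``Yes'' answer, exactly when Theorem~\ref{thm:mms-query-atleast} supplies a fresh partition with the required min-value.
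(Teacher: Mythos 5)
Your proposal is correct and takes essentially the same approach the paper intends: the paper's (one-line) proof is precisely to run binary search on $r$ using Theorem~\ref{thm:mms-query-atleast} as the decision oracle, carrying the witness partition along. One small remark: the observation $\mms_i\le 1/n$ is not actually needed for the stated $O(n\log(1/\varepsilon))$ bound, since starting the search on $[0,1]$ already yields $O(\log(1/\varepsilon))$ rounds of at most $n$ queries each; it is a harmless tightening.
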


If instead of the exact $\mms_i$, we are given a number $r_i\leq \mms_i$ for each agent~$i$, the algorithm for computing an MMS-fair allocation in Theorem~\ref{thm:mms-algo} still computes an allocation in which agent $i$ receives value at least $r_i$; the proof is essentially the same.
We therefore have the following corollary.

\begin{corollary}
\label{cor:mms-allocation-approx}
For any $\varepsilon > 0$, it is possible to compute an allocation in which every agent $i$ receives value at least $\emph{MMS}_i-\varepsilon$ using 
$O(n^2\log(1/\varepsilon))$ Robertson--Webb queries.
\end{corollary}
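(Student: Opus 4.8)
The plan is to combine Corollary~\ref{cor:mms-query-bound} with the left-to-right allocation procedure of Theorem~\ref{thm:mms-algo}. First, for each agent $i\in N$ separately, I would run the algorithm of Corollary~\ref{cor:mms-query-bound} with the given $\varepsilon$; this returns a number $r_i$ with $\mms_i-\varepsilon\le r_i\le\mms_i$ (the accompanying witnessing partition is not needed here) at a cost of $O(n\log(1/\varepsilon))$ Robertson--Webb queries. Over all $n$ agents this is $O(n^2\log(1/\varepsilon))$ queries, and afterwards every $r_i$ is a valid lower bound on $\mms_i$.

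Next, I would feed the vector $(r_1,\dots,r_n)$ into the procedure of Theorem~\ref{thm:mms-algo}, substituting $r_i$ for $\mms_i$ throughout: while at least two agents remain, ask each surviving agent $i$ for the leftmost mark $x_i$ with value $r_i$ (measured from the current left boundary of the remaining cake), allocate the piece up to the minimal such mark to an agent attaining it, delete that agent together with a separator of length exactly $s$, and recurse; the last agent takes the remaining cake. As in Theorem~\ref{thm:mms-algo} this uses $\sum_{j=0}^{n-2}(n-j)=O(n^2)$ further queries, so the grand total is $O(n^2\log(1/\varepsilon))$.

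It remains to verify correctness, namely that each agent $i$ ends up with value at least $r_i\ge\mms_i-\varepsilon$; this is precisely the ``essentially the same'' claim in the remark preceding the corollary. Fix $i$ and a maximin partition $\partition=\{P_1,\dots,P_n\}$, which exists by Proposition~\ref{prop:alt-def} with consecutive pieces separated by exactly $s$. If $i$ is allocated the first piece she gets value $r_i$ by construction. Otherwise, since $v_i(P_1)\ge\mms_i\ge r_i$, the right endpoint of $P_1$ is weakly to the right of $i$'s first mark, hence weakly to the right of the first allocated piece; because both the algorithm and $\partition$ use a gap of exactly $s$, the right end of the first separator is weakly to the left of the left endpoint of $P_2$. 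Iterating this comparison $n-1$ times shows that if $i$ receives none of the first $n-1$ pieces, the leftover cake still contains all of $P_n$ and thus has value at least $v_i(P_n)\ge\mms_i\ge r_i$. Either way $v_i(A_i)\ge r_i\ge\mms_i-\varepsilon$.

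The step I would watch most carefully is not the approximation bound itself but the coordinate bookkeeping when recursing on a subinterval: the \textsc{Cut}/\textsc{Eval} queries are still over $[0,1]$, so marks must be phrased relative to the current left endpoint and the restriction of $\partition$ to the remaining cake must be translated back into those coordinates. This is purely notational, introduces no idea beyond the proof of Theorem~\ref{thm:mms-algo}, and the query-count arithmetic is immediate; so the only real content of the corollary is bolting together the two earlier results.
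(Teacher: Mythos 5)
Your proposal is correct and matches the paper's approach exactly: run Corollary~\ref{cor:mms-query-bound} once per agent to obtain $r_i$ with $\mms_i-\varepsilon\le r_i\le\mms_i$ at total cost $O(n^2\log(1/\varepsilon))$, then feed $(r_1,\dots,r_n)$ into the Dubins--Spanier-style procedure of Theorem~\ref{thm:mms-algo}, whose correctness argument goes through verbatim with $r_i$ in place of $\mms_i$ because it only uses $v_i(P_j)\ge\mms_i\ge r_i$. The paper states this as a one-line remark (``the proof is essentially the same''); you have merely spelled out the details.
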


Next, we consider the question of deciding whether the maximin share of an agent is \emph{strictly greater than} a given number $r$.
At first glance, it may seem that, to answer this question, we can simply run the algorithm from Theorem~\ref{thm:mms-query-atleast} and answer Yes exactly when the value left after $n-1$ iterations is strictly greater than $r$.
While this modification indeed works if the density function is positive on the entire cake, it may fail when intervals with zero value are present.
Concretely, suppose again that $v_i(0,1/3) = 0.4$, $v_i(1/3,2/3) = 0$, $v_i(2/3,1) = 0.6$, and the value is distributed uniformly within $[0, 1/3]$ and within $[2/3, 1]$.
Moreover, $s=1/3$, but this time we need to decide whether $\mms^{2,s}_i > 0.4$.
When we run the modified algorithm, there is value $0.6 > 0.4$ left after $n-1 = 1$ iterations, which may suggest that the maximin share can exceed $0.4$.
However, this is not the case, because of 
the zero-valued middle part.

This example suggests a simple modification to the algorithm from Theorem~\ref{thm:mms-query-atleast}: instead of marking the leftmost point such that the value of the resulting piece is $r$, we should mark the \emph{rightmost} point with this property. It is not difficult to verify that $\mms_i>r$ if and only if after executing this modified algorithm we are left with a positive-value piece.
The modified algorithm requires a query $\cutr_i(x,\alpha)$ that returns the rightmost point $y$ for which that $v(x,y) = \alpha$.
Unfortunately---and perhaps surprisingly---$\cutr$ \emph{cannot} be implemented using the queries available via the Robertson--Webb model---see 
Appendix~\ref{sec:cutright}.

Nevertheless, we can get around this issue by flipping the cake, i.e., going over the cake from right to left. 
E.g., in the first iteration, we need to identify the leftmost point~$x$ such that $v_i(x,1)=r$. This is equivalent to finding the leftmost point~$x$ such that $v_i(0,x)=1-r$, which can be done with a standard $\cut_i(0,1-r)$ query.

\begin{theorem}
\label{thm:mms-query-greater}
Given an agent $i$ and a number $r \ge 0$, it is possible to decide whether $\emph{MMS}_i> r$ using at most $2n-1$ queries in the Robertson--Webb model.
\end{theorem}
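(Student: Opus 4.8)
The plan is to run a greedy algorithm that sweeps the cake from right to left, carving off $n$ consecutive pieces of value exactly $r$ for agent $i$, separated by gaps of length exactly $s$, and making each carved piece reach as far to the left as it can. Concretely, set $c_n:=1$; for $j=n,n-1,\dots,1$, let $b_j$ be the leftmost point $x\in[0,c_j]$ with $v_i(x,c_j)=r$, set $P_j:=[b_j,c_j]$, and then set $c_{j-1}:=b_j-s$; if ever $v_i(0,c_j)<r$ or $c_j<0$, stop and declare failure. The algorithm answers ``Yes'' iff it never fails and the remaining left region $[0,b_1]$ satisfies $v_i(0,b_1)>0$. The point $b_n$ is obtained from the single query $\cut_i(0,1-r)$; for $j<n$, computing $b_j$ needs one $\eval_i$ query to learn $v_i(0,c_j)$ followed by $\cut_i\!\big(0,v_i(0,c_j)-r\big)$ (after which $v_i(0,b_j)=v_i(0,c_j)-r$ is known for free), so the procedure uses at most $1+2(n-1)=2n-1$ queries. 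Sweeping from the right is precisely what lets us realize the rule ``make each piece as large as possible'' — which is ``rightmost marking'' when read left to right — using only the available $\cut$/$\eval$ queries, circumventing the unavailable $\cutr$; it also avoids the pitfall flagged before the theorem, since naively marking leftmost points would overshoot $\mms_i$ in the presence of zero-value intervals.

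I would prove that the algorithm answers ``Yes'' if and only if $\mms_i>r$. $(\Rightarrow)$ Assume $\mms_i>r$ and fix $t$ with $r<t\le\mms_i$. A maximin partition, each of whose pieces is shrunk to a sub-interval of value exactly $t$, is an $s$-separated partition $Q_1<\dots<Q_n$ with $v_i(Q_j)=t$; let $q_j$ be the left endpoint of $Q_j$. By downward induction on $j$, I would show $b_j>q_j$: granting this for $j+1$ gives $c_j=b_{j+1}-s>q_{j+1}-s\ge(\text{right endpoint of }Q_j)$, so $Q_j\subseteq[q_j,c_j]$ and $v_i(q_j,c_j)\ge t>r$; since $b_j$ is the leftmost point of value exactly $r$ and $v_i(\cdot,c_j)$ is non-increasing, this forces $b_j>q_j$. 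Hence the greedy never fails and $v_i(0,b_1)\ge v_i(q_1,b_1)=v_i(q_1,c_1)-v_i(b_1,c_1)\ge t-r>0$, so the answer is ``Yes''. Contrapositively, if the algorithm fails or its leftover has value $0$, then $\mms_i\le r$.

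$(\Leftarrow)$ Assume the answer is ``Yes'', so the greedy produced pieces $P_1,\dots,P_n$ of value $r$ with $v_i(0,b_1)>0$. First I would check $v_i(0,c_j)>r$ for every $j$: for $j\ge2$, $[0,c_j]\supseteq[0,b_j]\supseteq[0,c_{j-1}]\supseteq P_{j-1}$, so $v_i(0,c_j)=v_i(0,b_j)+r\ge v_i(P_{j-1})+r=2r>r$; for $j=1$, $v_i(0,c_1)=v_i(0,b_1)+r>r$; and $v_i(0,c_n)=1>r$. Next I would show that the greedy depends continuously on its target value from above: writing $b_j(t)=\min\{x:v_i(x,c_j(t))\le t\}$, a short nested-sets argument shows that $b_j(t)$ and $c_j(t)$ are monotone and right-continuous in $t$. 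Together with the strict inequalities $v_i(0,c_j(r))>r$ and $b_1(r)>0$, this implies that for every $t$ slightly larger than $r$ the greedy with target $t$ still succeeds, yielding a valid $s$-separated partition into $n$ pieces each of value exactly $t>r$; hence $\mms_i\ge t>r$.

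The crux is the right-continuity/monotonicity step in $(\Leftarrow)$, which is where the ``overshooting'' subtlety has to be controlled; I would also handle the boundary cases separately — e.g.\ if $r\ge1$ answer ``No'' with no queries (as $\mms_i\le1/n\le1\le r$), and check that the degenerate cases $r=0$ and $n=1$ behave correctly.
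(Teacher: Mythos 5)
Your algorithm is the same as the paper's: sweep the cake from right to left, using one $\cut_i(0,1-r)$ query and then an alternating $\eval/\cut$ pair per remaining piece, which sidesteps the unavailable $\cutr$ query exactly as the paper intends, and you arrive at the same $2n-1$ bound. The forward direction (if $\mms_i > r$ the sweep says Yes) is also the same greedy-domination induction ($b_j > q_j$ by downward induction) that the paper invokes by reference to Theorems~\ref{thm:mms-algo} and~\ref{thm:mms-query-atleast}. Where you genuinely differ is the converse: to show that a successful sweep certifies $\mms_i > r$, the paper perturbs the constructed partition explicitly — it enlarges $P_1$ by absorbing half the leftover value $\varepsilon_1 = v_i(0,x_1)$, observes that $P_1$'s right endpoint strictly drops, and propagates the freed-up slack piece by piece, yielding a concrete $s$-separated partition with min-value $> r$. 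You instead prove that the cut points $b_j(t), c_j(t)$ of the sweep are monotone and right-continuous in the target value $t$, and that the strict slack at $t=r$ (namely $v_i(0,c_j(r))>r$ for all $j$ and $b_1(r)>0$) persists for $t$ slightly above $r$. Both routes are correct, but the paper's is more elementary and entirely combinatorial, while yours leans on a recursive right-continuity lemma (base case $c_n \equiv 1$, inductive step through the $\min$-operator) that must be established carefully — and, as you note yourself, the strict inequality $v_i(0,c_j) \ge 2r > r$ breaks for $r=0$, so $r=0$, $r\ge 1$, and $n=1$ indeed need separate handling, whereas the paper's argument treats $r\ge 0$ uniformly. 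In short: same algorithm, same greedy lemma, but a topological rather than constructive proof of the Yes-direction; the constructive proof is cleaner here.
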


\begin{proof}
Ask the agent to mark the leftmost point $x$ such that $v_i(0,x)=1-r$, make $[x,1]$ one of the pieces in a potential partition, add a separator $[x-s,x]$, ask the agent to mark the leftmost point $x'$ such that $v_i(0,x') = v_i(0, x-s) - r$, make $[x',x-s]$ another piece in the potential partition, add a separator $[x'-s,x']$, and repeat going from $x'-s$ leftwards.
If there is still cake left after $n$ pieces have been created, answer Yes; else, answer No.

If there is no cake left once we have created $n$ pieces (or if we cannot create $n$ pieces at all), then by an argument similar to those in Theorems~\ref{thm:mms-algo} and~\ref{thm:mms-query-atleast}, the maximin share cannot be greater than $r$, so the answer No is correct.

Suppose now that there is cake left after $n$ pieces have been created;
let the resulting partition be $\partition=\{P_1, \dots, P_n\}$, where as usual $P_1,\dots,P_n$ are ordered from left to right on the cake (in particular, $P_1$ was the last created piece).
We will show that $\partition$ can be modified so that the min-value of the resulting partition $\partition'=\{P_1', \dots, P_n'\}$ is strictly more than~$r$, thereby proving that the answer Yes is correct.

Suppose $P_1=[x_1, y_1]$; by our assumption, $x_1>0$ and $v_i(0, x_1)=\varepsilon_1>0$.  
We set $P'_1=[0, y'_1]$, where $v_i(y'_1, y_1)=\varepsilon_1/2$:
\vspace{2mm}
\begin{center}
\begin{tikzpicture}[scale=1]
\draw (2,4.9) -- (2,5.1);
\node at (2,4.5) {$0$};
\draw (4,4.9) -- (4,5.1);
\node at (4,4.5) {$x_1$};
\draw (5,4.9) -- (5,5.1);
\node at (5,4.5) {$y_1'$};
\draw (6,4.9) -- (6,5.1);
\node at (6,4.5) {$y_1$};
\draw[very thick] (2,5) -- (10,5);
\node at (1,5) {Cake};
\end{tikzpicture}
\end{center}
This ensures that 
\vspace{-2mm}
$$
v_i(P'_1)=v_i(P_1)+v_i(0, x_1)-v_i(y'_1, y_1) = v_i(P_1)+\varepsilon_1/2 > v_i(P_1).
$$
On the other hand, since $v_i(y'_1, y_1)>0$, we have $y'_1<y_1$. We can now shift the second interval of the partition in a similar manner: if $P_2=[x_2, y_2]$,
we set $P'_2=[x'_2, y'_2]$, where
$x'_2=y'_1+s < y_1+s\le x_2$, $\varepsilon_2 = v_i(x'_2, x_2)>0$ (where the inequality holds by our choice of $x_2$), 
and $y'_2$ satisfies $v_i(y'_2, y_2)=\varepsilon_2/2$. Again, this ensures that 
$v_i(P'_2)=v_i(P_2)+\varepsilon_2/2 >v_i(P_2)$ and $y'_2 < y_2$.
We repeat this process for all subsequent pieces, thereby ensuring that
$\{P'_1, \dots, P'_n\}$ is $s$-separated and $v_i(P'_j)>v_i(P_j)\ge r$ for all $j\in [n]$,
as claimed.
\end{proof}

Theorems~\ref{thm:mms-query-atleast} and \ref{thm:mms-query-greater} immediately imply the following corollary.

\begin{corollary}
\label{cor:mms-query-equal}
Given an agent $i$ and a number $r \ge 0$, it is possible to decide whether $\emph{MMS}_i= r$ using at most $3n-1$ queries in the Robertson--Webb model.
\end{corollary}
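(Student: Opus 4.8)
The plan is to observe that $\mms_i = r$ holds if and only if both $\mms_i \ge r$ and \emph{not} $\mms_i > r$, and then simply run the two decision procedures already constructed. Concretely, I would first invoke the algorithm of Theorem~\ref{thm:mms-query-atleast} with the given agent $i$ and the value $r$; this uses at most $n$ Robertson--Webb queries and returns a correct Yes/No answer to the question ``$\mms_i \ge r$?''. If that answer is No, then certainly $\mms_i \neq r$, so I can halt and answer No after at most $n$ queries. If that answer is Yes, I would then invoke the algorithm of Theorem~\ref{thm:mms-query-greater} with the same $i$ and $r$; this uses at most $2n-1$ further queries and correctly decides ``$\mms_i > r$?''. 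I answer Yes to ``$\mms_i = r$?'' precisely when this second call returns No, and answer No otherwise.

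For correctness: if the first procedure says No, then $\mms_i < r$ and we correctly output No. If the first procedure says Yes and the second says Yes, then $r \le \mms_i$ and $\mms_i > r$, so $\mms_i \neq r$ and we correctly output No. If the first says Yes and the second says No, then $r \le \mms_i$ and $\mms_i \le r$, so $\mms_i = r$ and we correctly output Yes. The query count is at most $n + (2n-1) = 3n-1$ in the worst case (when both subroutines are executed), which matches the claimed bound; when the first subroutine already answers No we use even fewer.

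I should also check the boundary case $r = 0$ allowed by the statement: here $\mms_i \ge 0$ trivially, so the first call returns Yes, and the overall answer reduces to ``is $\mms_i = 0$?'', i.e.\ the negation of ``$\mms_i > 0$?'', which is exactly what the second call decides; Theorem~\ref{thm:mms-query-greater} is stated for $r \ge 0$, so it applies. I do not expect any genuine obstacle here—the only thing to be careful about is that both cited theorems are invoked with the same $r$ and that their guarantees are used as stated (exact decisions, not approximations), so no error accumulates and no extra queries are needed beyond the advertised totals.
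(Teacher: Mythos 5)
Your approach is exactly the paper's: the corollary is stated as an immediate consequence of Theorems~\ref{thm:mms-query-atleast} and~\ref{thm:mms-query-greater}, obtained by checking $\mms_i \ge r$ and $\mms_i > r$ and combining the answers, with a worst-case query count of $n + (2n-1) = 3n-1$. Your proposal is correct and fills in the same short argument the paper leaves implicit.
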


\subsection{Explicit Piecewise Constant Valuations}

For our next result, we assume that 
all agents have piecewise constant valuations and, moreover, these valuations are given \emph{explicitly}. That is, for an agent $i\in N$ we are given a list of breakpoints $(p_0, p_1, \dots, p_d)$ with $p_0=0$, $p_d=1$, and a list of densities $(\gamma_1, \dots, \gamma_d)$, so that for each $j\in [d]$ and each $x\in [p_{j-1}, p_j]$ the valuation function $v_i$ of agent $i$ satisfies $v_i(0,x) = \sum_{\ell=1}^{j-1}[\gamma_\ell\cdot (p_\ell-p_{\ell-1})]+\gamma_j\cdot (x-p_{j-1})$.
Moreover, all breakpoints and densities are rational numbers, represented as 
fractions whose numerators and denominators are given in binary. It is straightforward to implement both types of Robertson--Webb queries in this model, so every problem that can be solved in polynomial time in the Robertson--Webb model can also be solved in polynomial time in this model.
But the explicit representation offers additional benefits: we can compute the agents' maximin shares \emph{exactly} rather than approximately.

\begin{theorem}\label{thm:mms-exact}
Given an agent $i$ with a piecewise constant valuation function given explicitly, we can compute $\emph{MMS}^{n,s}_i$ in time polynomial in the size of the input. 
\end{theorem}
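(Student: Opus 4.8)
The plan is to reduce the computation of $\mms_i^{n,s}$ to solving a one-dimensional equation for a single, explicitly computable piecewise linear function, obtained by analysing how the leftmost-greedy procedure of Theorem~\ref{thm:mms-query-atleast} behaves as a function of its target value~$r$. Write $F(x):=v_i(0,x)$ for the cumulative valuation; since the valuation is piecewise constant, $F$ is continuous, non-decreasing and piecewise linear with $O(d)$ pieces, where $d$ is the number of constant segments in the explicit input. For a target $r\ge 0$, track the greedy run with target~$r$ through cumulative values: set $a_1(r):=0$ and $a_{j+1}(r):=\sigma\bigl(a_j(r)+r\bigr)$, where $a_j(r)$ is the cumulative value at the left endpoint of the $j$-th greedy piece and $\sigma(v):=F\bigl(F^{-1}(v)+s\bigr)$ records how the cumulative value changes across a gap of length exactly~$s$ (with the convention that $\sigma(v)=1$ as soon as the corresponding point reaches position~$1$). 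The correctness analysis of Theorem~\ref{thm:mms-query-atleast} together with Proposition~\ref{prop:alt-def} shows that for $r>0$ one has $\mms_i^{n,s}\ge r$ iff $\psi(r):=1-\sigma\bigl(a_{n-1}(r)+r\bigr)\ge r$, the left-hand side being the value of the last greedy piece (and the clamping convention making $\psi(r)=0$ precisely when the run fails to produce $n-1$ pieces). Since $\sigma$ is non-decreasing and $r\mapsto a_{n-1}(r)+r$ is strictly increasing, $\psi$ is non-increasing, so $\mms_i^{n,s}=\max\{r\ge 0:\psi(r)\ge r\}$, with the maximum attained by Proposition~\ref{prop:alt-def}.

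First I would compute $\sigma$ as an explicit non-decreasing piecewise linear function: $F$ and its left inverse $F^{-1}$ have $O(d)$ linear pieces, and composing them with the shift by~$s$ and the clamp gives $\sigma$ with $O(d)$ pieces, assembled by merging sorted breakpoint lists; all its breakpoints and slopes are rationals whose bit-length is polynomial in the input. Next I would compute $a_2(r),\dots,a_{n-1}(r)$ iteratively, each from the previous one by the affine map $r\mapsto a_{j-1}(r)+r$ followed by applying~$\sigma$. The key structural point is monotonicity: an easy induction shows that each $a_j$ is non-decreasing and each $a_j(r)+r$ is strictly increasing in~$r$, so feeding a strictly increasing function into $\sigma$ introduces at most $O(d)$ new breakpoints, since each breakpoint of $\sigma$ is hit at most once. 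Hence $a_j$ is piecewise linear with $O(jd)=O(nd)$ pieces, each iteration runs in polynomial time, and a routine induction keeps all coefficients of polynomial bit-length. Finally I would form $\psi(r)=1-\sigma\bigl(a_{n-1}(r)+r\bigr)$---a non-increasing piecewise linear function with $O(nd)$ pieces---and scan its pieces: on each, $\psi(r)-r$ is affine, so the largest $r$ with $\psi(r)\ge r$ is a breakpoint of $\psi$ or a root of one of these affine expressions, and since the maximum is attained this value exists and equals $\mms_i^{n,s}$. The trivial cases $n=1$ (where $\mms_i^{n,s}=1$) and $\mms_i^{n,s}=0$ are handled separately.

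The hard part is the iterative computation of the $a_j$: composing piecewise linear maps $n$ times naively could blow their description up to $d^{\Theta(n)}$ pieces, and it is exactly the monotonicity of the greedy---every tracked cumulative value moves only rightward as $r$ grows---that keeps this growth additive rather than multiplicative. A related nuisance is that zero-valued segments make $F^{-1}$, hence $\sigma$ and the $a_j$, discontinuous; I would verify that these jumps are monotone (so the linear bound on the number of pieces persists) and that they do not destroy the identity $\mms_i^{n,s}=\max\{r:\psi(r)\ge r\}$ with a genuine maximum. There is also a minor geometric subtlety when $F^{-1}$ lands on the left edge of a flat segment, where one should choose the greedy cut consistently with exact $s$-separation; this is harmless, as moving to the separation-consistent position only enlarges later pieces. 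Finally, the (routine) bit-length bookkeeping must be carried through all $O(n)$ iterations to certify that the output---an exact rational equal to $\mms_i^{n,s}$---and all intermediate data remain polynomially sized. One may also phrase the last step as solving $O(nd)$ elementary linear programs, one per greedy regime.
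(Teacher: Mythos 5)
Your proposal takes a genuinely different route from the paper. The paper formulates a linear program $\text{LP}_k(\mathcal{I}_k,t)$ whose validity depends on knowing, for each endpoint of a maximin partition, the breakpoint interval $I_j$ in which it lies, and then determines these intervals one endpoint at a time by an inductive comparison of maximin shares of left and right subproblems (checked via the Theorem~\ref{thm:mms-query-atleast} decision procedure); once the intervals are pinned down, one final LP returns $\mms_i^{n,s}$. You instead perform a \emph{parametric analysis of the greedy algorithm}: treating the target $r$ as a variable, you show the cumulative-value state $a_j(r)$ of the leftmost-greedy run is piecewise linear and non-decreasing, that $a_j(r)+r$ is strictly increasing, and hence that composing with the (fixed, $O(d)$-piece) transition map $\sigma$ adds only $O(d)$ breakpoints per step rather than multiplying---the crucial observation that defeats the naive $d^{\Theta(n)}$ blowup. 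Then $\mms_i^{n,s}$ is the largest $r$ with $\psi(r)\ge r$, found by scanning $O(nd)$ pieces. This is correct: the equivalence $\mms_i\ge r\Leftrightarrow\psi(r)\ge r$ for $r>0$ is exactly Theorem~\ref{thm:mms-query-atleast}, $\psi-\mathrm{id}$ is strictly decreasing, Proposition~\ref{prop:alt-def} guarantees the maximum is attained so the set $\{r:\psi(r)\ge r\}$ is a closed interval $[0,\mms_i]$, and the jumps of $F^{-1}$ (hence of $\sigma$ and $a_j$) are upward and left-continuous, which preserves both monotonicity and the additive breakpoint bound. The bit-length bookkeeping you flag is indeed necessary but does work out: the slope of $a_j$ on each regime is a polynomial expression in $O(n)$ density ratios (via the recurrence $m_{j+1}=m_j\cdot(\text{slope of }\sigma)+1$), keeping numerators and denominators at $O(nL)$ bits rather than $2^{\Theta(n)}L$. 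What your approach buys is a conceptually simpler, more constructive algorithm that avoids LP machinery and directly exhibits $\mms_i^{n,s}$ as the fixed point of an explicit piecewise linear function; what the paper's LP formulation buys is a cleaner separation between the combinatorial step (choosing intervals) and the continuous optimization, at the cost of a more delicate correctness argument for the interval selection.
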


At a high level, the proof of Theorem~\ref{thm:mms-exact} proceeds by formulating a linear program whose solution corresponds to $\mms_i$. 
The challenge is that in order to have a linear program that returns a correct answer, we need to find out the intervals to which each endpoint of a maximin partition belongs.
To accomplish this, we proceed from left to right, determining the interval for one endpoint at a time.
By comparing the maximin shares between optimal subpartitions to the left and right of the potential intervals to which the next endpoint belongs, we ensure that at each step of the algorithm, there exists a maximin partition whose endpoints are consistent with the intervals we have chosen.
The full details of the algorithm are rather involved; we defer them to Appendix~\ref{app:LP-proof}.

Combined with Theorem~\ref{thm:mms-algo}, Theorem~\ref{thm:mms-exact} implies that when agents have piecewise constant valuations given explicitly, an MMS-fair allocation can be computed efficiently (cf. Corollary~\ref{cor:mms-no-partition}).

\begin{corollary}
\label{cor:mms-allocation-exact}
For agents with piecewise constant valuations given explicitly, an MMS-fair allocation can be computed in time polynomial in the size of the input.
\end{corollary}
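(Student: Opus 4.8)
The plan is to simply chain together Theorem~\ref{thm:mms-exact} and Theorem~\ref{thm:mms-algo}. First I would invoke Theorem~\ref{thm:mms-exact} once for each agent $i\in N$ to compute $\mms^{n,s}_i$ exactly; since each invocation runs in time polynomial in the input size and there are only $n$ agents, the total cost of this stage is still polynomial. The output of that theorem's procedure is a rational number of polynomially bounded bit-length, so we obtain the full vector $(\mms^{n,s}_1,\dots,\mms^{n,s}_n)$ of exact maximin shares represented explicitly.

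Next I would run the Dubins--Spanier-style algorithm from the proof of Theorem~\ref{thm:mms-algo}, feeding it these exact values. That algorithm uses only $O(n^2)$ Robertson--Webb queries, and as noted in the text preceding Theorem~\ref{thm:mms-exact}, both $\eval_i$ and $\cut_i$ queries can be implemented in polynomial time for explicitly given piecewise constant valuations. The one point worth checking is that the $\cut_i(x,\alpha)$ queries invoked by the algorithm---which ask for the leftmost point $y$ with $v_i(x,y)=\mms^{n,s}_i$---return rational points of polynomially bounded size: this holds because $v_i$ is piecewise linear with rational breakpoints and densities and $\mms^{n,s}_i$ is rational, so $y$ is found by locating the relevant constant piece and solving a single linear equation with rational coefficients. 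Hence the allocation stage also runs in polynomial time, and by the correctness part of Theorem~\ref{thm:mms-algo} the resulting $s$-separated allocation gives every agent at least her maximin share.

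There is essentially no hard obstacle here; the only thing requiring care is the bookkeeping that all intermediate quantities (the maximin shares and the cut points produced during the sweep) remain rationals of polynomial bit-length, so that ``polynomial time'' is genuinely achieved rather than merely ``polynomially many arithmetic operations.'' This follows from the rationality guarantee of Theorem~\ref{thm:mms-exact} together with the piecewise-linear, rational structure of the valuations, which also makes the running of the algorithm in Theorem~\ref{thm:mms-algo} an honest polynomial-time computation in this explicit model.
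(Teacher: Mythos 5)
Your proposal matches the paper's own argument exactly: the corollary is stated as an immediate consequence of combining Theorem~\ref{thm:mms-exact} (compute each agent's exact maximin share in polynomial time) with the allocation procedure of Theorem~\ref{thm:mms-algo}, which is precisely the chain you describe. The additional remarks about rational, polynomially bounded cut points are sensible bookkeeping but do not change the route.
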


\section{Pie Cutting}
\label{sec:pie}
In the canonical model of cake cutting, the cake is assumed to be linear. By contrast, in this section we assume that it is circular.
In other words, our resource is represented by the interval $[0,1]$ with its two endpoints identified with each other.
The respective division problem is known in the literature as \emph{pie cutting}; its applications include dividing the shoreline of an island among its inhabitants and splitting a daily cycle for using a facility \citep{thomson2007children,BramsJoKl08,BarbanelBrSt09}.

The definitions of $s$-separated partitions and allocations can be readily adjusted to pie cutting---the only difference is that, due to the circular structure, there are $n$ separators in pie cutting rather than $n-1$ (so we assume that $s < 1/n$).
Note that, since the pie is one-dimensional, distances are measured along the circumference of the pie.
We denote by $\partitions_{n,s}$ the set of $s$-separated partitions with respect to the pie, and by $\partitions'_{n, s}\subseteq \partitions_{n, s}$ the subset of partitions for which every pair of consecutive pieces is separated by length exactly $s$.
We number the parts in a partition in the clockwise order starting from $0$: that is, point $0$ is either contained in the first part or in the separator between the $n$-th part and the first part.
The maximin share can then be defined similarly to how it is defined in cake cutting (Definition~\ref{def:MMS}); just as in Proposition~\ref{prop:alt-def}, we can show that $\mms^{n,s}_i = \max_{\partition\in\partitions'_{n, s}}\min_{j\in [n]}v_i(P_j)$.

However, in pie cutting, unlike in cake cutting, an MMS-fair allocation does not necessarily exist.
This is evident in the example in Figure~\ref{fig:pie-no-mms}, where $1/4 < s < 1/2$, $0 < \varepsilon < \min\{s-1/4,1/2-s\}$, and Alice values the pieces of length $\varepsilon$ centered at the top and bottom of the pie at $1/2$ each, while Bob values similar pieces on the left and right at $1/2$ each.
Since $s < 1/2-\varepsilon$, the maximin share of each agent is $1/2$.
However, since the distance between any point in Alice's piece and any point in Bob's piece is at most $1/4 + \varepsilon < s$, no $s$-separated allocation gives both agents a positive value.
Hence, no MMS-fair allocation exists. 

\begin{figure}[!ht]
\centering
\begin{tikzpicture}[scale=0.8]
\draw (3,3) circle [radius = 2];
\draw (8.5,3) circle [radius = 2];
\draw [ultra thick,domain=80:100] plot ({3+2*cos(\x)}, {3+2*sin(\x)});
\draw [ultra thick,domain=260:280] plot ({3+2*cos(\x)}, {3+2*sin(\x)});
\draw [ultra thick,domain=-10:10] plot ({8.5+2*cos(\x)}, {3+2*sin(\x)});
\draw [ultra thick,domain=170:190] plot ({8.5+2*cos(\x)}, {3+2*sin(\x)});
\node at (3,3) {Alice};
\node at (8.5,3) {Bob};
\node at (3,4.7) {$\varepsilon$};
\node at (3,1.3) {$\varepsilon$};
\node at (6.8,3) {$\varepsilon$};
\node at (10.2,3) {$\varepsilon$};
\end{tikzpicture}
\caption{Example of a pie cutting instance with no MMS-fair allocation. Each of the two agents uniformly values the bold part of the pie, $s>1/4$, and $0<\varepsilon < \min\{s - 1/4, 1/2-s\}$.}
\label{fig:pie-no-mms}
\end{figure}
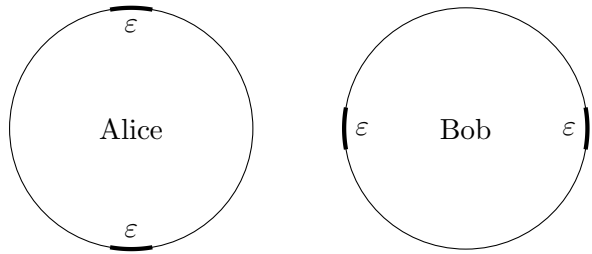

In fair division, a common response to the non-existence of MMS-fair allocations is to seek allocations that guarantee each agent a constant fraction of her maximin share.
However, the same example shows that, in our setting, no positive fraction of the maximin share can be guaranteed.
Given this negative result, one may wonder whether any meaningful fairness guarantee can be achieved in pie cutting with separation. Fortunately, positive results can be obtained if, instead of using \emph{cardinal} approximations of the maximin share, we relax this criterion in an \emph{ordinal} manner.
Specifically, when each agent computes her maximin share, we ask her to pretend that there are $k>n$ agents and divide the pie into $k$ pieces; intuitively, this results in smaller pieces and hence a less ambitious benchmark.
We refer to the resulting notion as the \emph{$1$-out-of-$k$} maximin share and write $\mms^{k,s}_i$ or simply $\mms^k_i$ for the share of agent $i$.
Ordinal approximations were introduced by \citet{Budish11} in the context of indivisible item allocation.\footnote{We note that $1$-out-of-$k$ maximin share is a special case of the $\ell$-out-of-$k$ maximin share notion introduced by \citet{BabaioffNiTa21} and further studied by \citet{Segalhalevi20}, which selects $\ell$ pieces of minimum value from a partition into $k$ pieces (see also \Cref{app:l-out-of-k}).} 
In particular, he considered the case $k=n+1$. 

It turns out that this relaxation is precisely what we need for pie cutting.

\begin{theorem}
\label{thm:pie-mms-algo}
For any pie cutting instance with $n$ agents, there exists an allocation in which every agent $i$ receives a piece of value at least $\emph{MMS}^{n+1}_i$.
Moreover, given the $1$-out-of-$(n+1)$ maximin share of each agent, such an allocation can be computed using $O(n^2)$ queries in the Robertson--Webb model.
\end{theorem}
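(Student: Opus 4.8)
<br>

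The plan is to adapt the Dubins--Spanier-style greedy algorithm from Theorem~\ref{thm:mms-algo} to the circular setting, exploiting the fact that each agent's benchmark is now the $1$-out-of-$(n+1)$ maximin share rather than the $1$-out-of-$n$ share. The key structural observation is that a partition of the pie into $n+1$ pieces that is $s$-separated, when ``cut open'' at any point, yields (the relevant portion of) an $s$-separated partition of a linear cake into at most $n+1$ pieces. So first I would pick an arbitrary starting point on the circumference, say the point $0$, and straighten the pie into the interval $[0,1]$ (with the two copies of $0$ now distinct endpoints). On this linear cake I run essentially the algorithm of Theorem~\ref{thm:mms-algo}: process from left to right, and at each stage ask each remaining agent for the leftmost point $x_i$ with $v_i(0\text{-so-far},x_i)=\mms^{n+1}_i$, allocate $[\cdot,x_i]$ to the agent demanding the leftmost such point, insert a separator of length exactly $s$, and recurse. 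When one agent remains, give her the rest of the (linear) cake. This uses $O(n^2)$ \textsc{Cut} queries exactly as before. The only subtlety is that the pie has $n$ separators while the cake has $n-1$; straightening at point $0$ effectively ``spends'' one separator at the cut, which is why the $1$-out-of-$(n+1)$ benchmark (one extra piece in each agent's reference partition) is the right amount of slack.

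The correctness argument mirrors the one in Theorem~\ref{thm:mms-algo}, but using an $(n+1)$-piece maximin partition as the yardstick. Fix an agent $i$ who does not receive one of the first $n-1$ allocated pieces, and let $\partition=\{P_1,\dots,P_{n+1}\}$ be her maximin partition of the pie, listed in clockwise order starting from $0$. Straightening the pie at $0$ turns $\partition$ into an $s$-separated linear partition into at most $n+1$ connected pieces (the piece straddling $0$, if any, splits into a left fragment and a right fragment, and the total value of those two fragments is $v_i$ of the original straddling piece, hence at least $\mms^{n+1}_i$; actually, since we cut at $0$ and $0$ lies either in $P_1$ or in a separator, we have at least $n+1$ fragments but can regroup so that the last one retains value $\ge\mms^{n+1}_i$). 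Now I would run the same ``the algorithm's $j$-th separator lies weakly to the left of the left endpoint of $P_{j+1}$'' induction: since each inserted separator has length exactly $s$ and the algorithm always allocates a piece no larger than the next reference piece, after $n-1$ allocations the remaining linear cake contains the last two fragments of the straightened $\partition$, whose combined value is at least $\mms^{n+1}_i$. The $n$-th agent to be served therefore gets value $\ge\mms^{n+1}_i$, and every agent served earlier got exactly $\mms^{n+1}_i$ by construction.

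The main obstacle to get right is the bookkeeping at the cut point $0$: one must argue carefully that straightening the pie never \emph{increases} the number of pieces in a way that breaks the counting, and that the piece (or pair of fragments) containing $0$ can be absorbed into the ``last reference piece'' slot so that the final agent is guaranteed value $\mms^{n+1}_i$. The clean way to handle this is to observe that among the $n+1$ pieces of $\partition$, at least one pair of consecutive pieces (separated by a genuine separator of length $\ge s$) can be chosen to contain the point $0$ in their separating gap or at the boundary; relabel so that this gap is where we straighten, so that $0$ falls inside a separator and the $n+1$ pieces of $\partition$ become $n+1$ intact connected pieces $P_1,\dots,P_{n+1}$ on the line $[0,1]$, with $s$-separation preserved everywhere. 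With that choice, the linear argument of Theorem~\ref{thm:mms-algo} applies verbatim with $n$ replaced by $n+1$ reference pieces and $n$ agents, so after $n-1$ allocations the remainder contains $P_n\cup(\text{separator})\cup P_{n+1}$ and hence has value at least $v_i(P_n)\ge\mms^{n+1}_i$ (indeed at least $v_i(P_n)+v_i(P_{n+1})$). A final remark: the algorithm as described only uses that $s<1/n$, which is the standing assumption for pie cutting, so $s$-separated partitions into $n+1$ pieces on the straightened cake exist and the construction is well-defined.
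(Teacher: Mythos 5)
Your overall strategy---straighten the pie at a fixed point~$0$ and run a Dubins--Spanier-style left-to-right greedy---is the same as the paper's. But there are two genuine gaps.

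First, and most seriously, you specify that the final agent ``receives the rest of the (linear) cake,'' i.e., a piece ending at the point~$1$. The first allocated piece starts at the point~$0$. On the pie these two endpoints are the same point, so the first and last pieces of your allocation abut with no gap between them; the output is \emph{not} $s$-separated. This is precisely why the paper's algorithm, even when only one agent remains, allocates to her a piece worth exactly $\mms^{n+1}_i$ rather than the entire remainder. The induction in the paper then shows that the remaining unallocated segment contains not only $Q_n$ but also the \emph{separator that comes after $Q_n$}, so after carving out a piece worth $\mms^{n+1}_i$, an unallocated segment of length at least $s$ survives to serve as the $n$-th separator closing the circle. Without this, the claimed allocation is invalid, and your observation that the remainder has value at least $v_i(P_n)+v_i(P_{n+1})$ does not rescue it.

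Second, your ``clean way to handle this'' --- relabel so that the straightening point falls inside a separator of $\partition$ --- is ill-posed. The cut point is fixed at~$0$ before the algorithm runs and is common to all agents, but each agent has her own maximin partition with separators in different places; you cannot choose the cut point to lie in a gap of every agent's partition simultaneously. The earlier ``regroup the two fragments of the straddling piece'' attempt has the same problem: the two fragments are not adjacent on the line, so they do not form a connected reference piece. The paper sidesteps both issues by setting $Q_j := P_{j+1}$ when $0$ lies in the interior of $P_1$ and $Q_j := P_j$ otherwise; in either case $Q_1,\dots,Q_n$ are $n$ intact, consecutive pieces of $\partition$ lying entirely inside $[0,1]$, and the left-to-right comparison from Theorem~\ref{thm:mms-algo} then goes through verbatim. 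Your proof needs both of these fixes to be complete.
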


The idea behind our algorithm is similar to that of the analogous result for cake cutting (Theorem~\ref{thm:mms-algo}). Note, however, that in case of a pie there is no natural starting point; in particular, by starting at $0$, we may destroy one of the pieces in each agent's partition. This is why we need $n+1$ pieces in the partition rather than $n$.

\begin{proof}
We ask each agent $i$ to mark the leftmost point $x_i$ (i.e., the first such point when moving clockwise from $0$) such that $v_i(0,x_i) = \mms^{n+1}_i$. 
The agent who marks the leftmost $x_i$ is allocated the piece $[0,x_i]$ (with ties broken arbitrarily); we then remove this agent along with the piece $[x_i,x_i+s]$, and recurse on the remaining agents and pie.
If there is only one agent left, we still allocate to that agent a piece worth $\mms^{n+1}_i$ (rather than the entire remaining pie).
Since we make $n-j$ \textsc{Cut} queries when there are $n-j$ agents left (and no \textsc{Eval} queries), our algorithm uses $\sum_{j=0}^{n-1}(n-j) = O(n^2)$ queries.

We now prove the correctness of the algorithm.
Consider any agent $i$ and her $1$-out-of-$(n+1)$ maximin partition $\partition=\{P_1, \dots, P_{n+1}\}$.
For $j\in [n]$, let $Q_j=P_{j+1}$ if $0$ is in the interior of $P_1$, and $Q_j=P_j$ otherwise; we
will write $Q_j=[x_j, y_j]$.
Note that the segment $[0, y_j]$ contains $j$ parts of $\partition$.
If agent $i$ receives the first piece allocated by the algorithm, she receives value $\mms^{n+1}_i$.
Else, the right endpoint of the allocated piece is no further to the right than $y_1$.
Since the algorithm inserts a separator of length exactly~$s$, the right endpoint of the first separator is no further to the right than $x_2$.
Applying a similar argument repeatedly, we find that if agent $i$ is not allocated any of the first $n-1$ pieces, then after removing the $(n-1)$-st piece and the following separator, the remaining cake contains $Q_n$. 
Now, if $0$ is in the interior of $P_1$, then the remaining cake also contains a positive amount of $P_1$ as well as the separator between $P_{n+1}=Q_n$ and $P_1$. 
On the other hand, if $0$ is not in the interior of $P_1$, then $Q_n=P_n$ and the remaining cake contains $P_{n+1}$ as well as the separator between $P_n$ and $P_{n+1}$. In either case, the remaining cake contains $Q_n$ as well as the separator that comes after $Q_n$.
Hence, if we allocate a piece of value $\mms^{n+1}_i$ to $i$, its right endpoint is no further to the right than $y_n$ and thus the remaining cake contains an unallocated segment of length $s$, which will serve as a separator between the piece that was allocated first and the piece that was allocated last. 
It follows that in either case the resulting allocation is $s$-separated.
\end{proof}

In \Cref{app:l-out-of-k}, we present a generalization of Theorem~\ref{thm:pie-mms-algo} using the \emph{$\ell$-out-of-$k$ maximin share} notion of \citet{BabaioffNiTa21}.

Recall that for cake cutting there exists an algorithm that, given an agent $i$ and a number $r$, decides whether $\mms_i > r$ and 
whether $\mms_i = r$ (Theorem~\ref{thm:mms-query-greater} and Corollary~\ref{cor:mms-query-equal}). In contrast, for pie cutting this is not the case.

\begin{theorem}
\label{thm:pie-mms-greater}
Fix any $k\geq 2$. For pie cutting, there is no finite algorithm in the Robertson--Webb model that can decide, for any agent $i$ and real number $r$, whether $\emph{MMS}^k_i > r$ or whether $\emph{MMS}^k_i = r$, even when the valuation of this agent is piecewise constant (but not given explicitly).
\end{theorem}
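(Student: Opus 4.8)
The plan is to adapt the adversary argument used for Lemma~\ref{lem:function-impossibility} and Theorem~\ref{thm:mms-impossibility} to the circular setting. Assume for contradiction that some finite algorithm decides one of these predicates for a single agent $i$ whose pie valuation is piecewise constant. As in the cake case, I would let an adversary answer the $\eval_i$ and $\cut_i$ queries consistently with a partially specified, monotone-on-the-circle valuation; the key quantities are the ``mass'' function $g(x) = v_i(0,x)$ measured clockwise from the distinguished point $0$, which is continuous, increasing, and satisfies $g(0)=0$, $g(1)=1$. The adversary keeps a finite set of recorded points (initially $\{0,1\}$), and whenever forced to reveal a new value it chooses it to preserve monotonicity while avoiding finitely many ``dangerous'' values that could pin down $\mms^k_i$ relative to $r$. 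The twist is that a $k$-piece $s$-separated partition of the pie is parameterised by $k$ cut pairs wrapping around the circle, so the relevant constraint is not a single equation $g(x_0)+g(x_0+s)=1$ but a system governing where the optimal partition's endpoints sit. I would show that after any finite number of queries there remain two completions of $g$ — both piecewise linear, obtained by connecting consecutive recorded points linearly in a suitably chosen way — for which $\mms^k_i$ takes two different values straddling (or equal vs.\ not equal to) the target $r$, contradicting correctness of the algorithm.

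The cleanest route is probably a direct reduction rather than re-running the adversary from scratch. Recall from Theorem~\ref{thm:mms-impossibility} that deciding $\mms^{2,s}_i = r$ on a \emph{cake} is already impossible. I would embed a hard cake instance into a pie: take the ``hard'' single-agent cake on $[0,1]$ witnessing non-finiteness of computing $\mms^{2,s}$, rescale and place its support inside a short arc of the pie, and pad the rest of the pie with a region of known, generic valuation engineered so that every nearly-optimal $k$-piece partition of the pie is forced to cut the padding region in $k-2$ essentially predetermined places, leaving a sub-instance on the embedded arc that behaves exactly like the two-piece cake problem. Concretely, if the padding has $k-2$ well-separated ``spikes'' each of value slightly above the putative optimum, any partition achieving min-value close to the target must give one piece to (a neighborhood of) each spike and then split the remaining arc into two $s$-separated pieces — which is precisely \findsum{s} in disguise. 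Then a finite algorithm deciding $\mms^k_i > r$ or $\mms^k_i = r$ on the pie would yield a finite algorithm for the corresponding cake predicate or for \findsum{s}, contradicting Lemma~\ref{lem:function-impossibility}.

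The main obstacle, and the place where care is needed, is controlling the \emph{global} optimization over all $s$-separated $k$-partitions of the pie: unlike on a cake, there is no fixed leftmost piece, so one must rule out the possibility that the algorithm exploits a qualitatively different partition that avoids the embedded hard arc altogether, or that slides the padding cuts to cheat. This is handled by choosing the padding valuation's spike values and gaps (relative to $s$, $\varepsilon$, and the target $r$) so that the ``intended'' partition type is the unique one with min-value in the relevant range — a Lagrangian/exchange argument shows any deviation strictly lowers the minimum — so that the adversary's freedom on the embedded arc is genuinely the only degree of freedom affecting whether $\mms^k_i$ exceeds, equals, or falls below $r$. Once that rigidity is established, the reduction goes through with $O(1)$ overhead per query, and the case $k=n$ versus $k>n$ makes no difference to the argument since $n$ is part of the adversary's choice. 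I would also note that the construction keeps the valuation piecewise constant and strictly positive except possibly on the gaps, matching the statement; handling the zero-valued gaps is the same subtlety already encountered between Theorems~\ref{thm:mms-query-atleast} and~\ref{thm:mms-query-greater}, and is dealt with by perturbing the gaps to have tiny positive density without affecting the separation of cases.
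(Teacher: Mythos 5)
Your proposal takes a genuinely different route from the paper, and the route you prefer (the reduction via an embedded cake instance with padding spikes) has a real gap. The paper's proof is a short, direct adversary argument built around one key observation you did not find: choose the threshold $r=\tfrac{1}{k}-s$ and let the adversary answer every query as if the valuation were uniform on the pie. Then (i) if the valuation really is uniform, a pigeonhole on the $k$ separators (total length $\geq ks$) forces some piece to have value at most $\tfrac{1-ks}{k}=r$, so $\mms^k_i=r$; and (ii) because the pie is rotationally homogeneous, there are infinitely many $s$-separated $k$-partitions into pieces of length exactly $r$, so after finitely many queries one can pick such a partition whose $2k$ endpoints avoid every recorded point, and then shift the value inside each recorded interval away from the separator parts so that every piece is worth strictly more than $r$ while all query answers are preserved. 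This gives two piecewise-constant completions with $\mms^k_i=r$ and $\mms^k_i>r$, which simultaneously kills the $=r$ query and the $>r$ query. No reduction is needed.

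Your first sketch (``adapt the adversary from Lemma~\ref{lem:function-impossibility}'') is in the right spirit but never arrives at the two ingredients that make the pie argument go through, namely the specific value of $r$ and the use of uniformity plus rotational freedom to manufacture an unrecorded length-$r$ partition. Your second sketch, the embedding-with-spikes reduction, has a concrete hole at exactly the place you flag as ``the main obstacle'': you assert that an exchange/Lagrangian argument forces the optimal partition to respect the padding spikes, but the pie has no distinguished starting point, so nothing pins down where the $k-2$ padding cuts go, and the adversary's freedom on the embedded arc interacts with the freedom to rotate the whole partition. Making that rigidity claim precise would require substantial extra work and would still not cover the $>r$ case directly, since Theorem~\ref{thm:mms-impossibility} (which you reduce from) is a non-computability statement about the exact value rather than about the predicate $\mms>r$, which on the cake is in fact decidable by Theorem~\ref{thm:mms-query-greater}. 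Finally, the remark about perturbing zero-valued gaps to positive density is unnecessary here: the statement of Theorem~\ref{thm:pie-mms-greater} does not require strict positivity, and the paper's Possibility~2 construction does set some densities to zero.
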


\begin{proof}
Assume for contradiction that such an algorithm exists, and take $r=\frac{1}{k}-s$.
We show how an adversary can answer the queries of the algorithm in such a way that after a finite number of queries, there exists a piecewise constant valuation function consistent with the answers for which $\mms^k_i > r$, but also one for which $\mms^k_i = r$.

The adversary records any point that appears in a query or in its own answer, and answers queries 
as if the valuation is uniform throughout the pie---that is, for any two consecutive recorded points on the pie, if the interval between them has length $t$, then it also has value $t$.
Suppose that some finite number of queries have been answered in this manner, and consider the following two possibilities:

\underline{Possibility 1}: The entire valuation function is uniform. 
For any $s$-separated partition~$\partition$, the sum of the lengths of the $k$ pieces in $\partition$ is at most $1-ks$, so one of the pieces has length (and value) at most $\frac{1-ks}{k}=r$.
On the other hand, there exists an $s$-separated partition $\partition'$ such that each of the $k$ pieces in $\partition'$ has length $r$.
Hence $\mms^k_i = r$ in this case.

\underline{Possibility 2}: Consider all $s$-separated $k$-partitions in which each of the $k$ pieces has length $r$.
Among all such partitions, choose a partition $\partition$ for which none of the $2k$ endpoints of the pieces coincides with any recorded point; since there are infinitely many partitions and only a finite number of them are forbidden, this choice is possible.
If a piece or a separator does not contain a recorded point, record an arbitrary point in its interior.
This ensures that each interval between two recorded points contains at most one endpoint of $\partition$. Now, for each interval $I$ of length $z$ containing an endpoint of $\partition$,
distribute a value of $z$ uniformly within the intersection of $I$ with the associated piece of $\partition$, so the intersection of $I$ and the associated separator has value zero.
For the remaining intervals, their value (which is equal to their length) is distributed uniformly within the interval.
The resulting valuation function is piecewise constant, and each of the $k$ pieces of $\partition$ has value strictly greater than $r$.
Hence $\mms^k_i > r$.

We conclude that a finite algorithm cannot distinguish between the case $\mms^k_i>r$ and the case $\mms^k_i=r$.
\end{proof}

Theorem~\ref{thm:pie-mms-greater} leaves open the question of whether it is possible to decide whether $\mms^k_i\geq r$ for a given $r$.
We show next that the answer to this question, too, is negative.
We do so by reducing from the following problem, which may be of independent interest. 

\newcommand{\haslowval}[1]{\textsc{HasLowValue}($#1$)}
\begin{framed}
\noindent
{Problem \haslowval{s,q}, where $s, q\in[0,1)$.}

\paragraph{Input:} A valuation function $v$ on a pie $[0,1]$, accessible through $\cut$ and $\eval$ queries.

\paragraph{Output:} Yes if the pie contains an interval of length $s$ with value at most $q$, i.e., there is an $x_0\in[0,1]$ for which $v(x_0,x_0+s)\leq q$, where $x_0+s$ is computed modulo~$1$.
\end{framed}
\begin{lemma}
\label{lem:pie-separator}
For any real numbers $s, q$ with $s>q>0$, 
there is no algorithm 
that solves $\haslowval{s,q}$
using finitely many queries
in the Robertson--Webb model.
This holds even if the valuation $v$ is known to be piecewise constant and strictly positive (but not given explicitly).
\end{lemma}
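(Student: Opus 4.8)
I would prove the lemma by an adversary argument, in the spirit of Lemma~\ref{lem:function-impossibility} and of the proof of Theorem~\ref{thm:pie-mms-greater}. The adversary answers the algorithm's $\eval$ and $\cut$ queries so that, after \emph{any} finite number of them, there are two piecewise constant, strictly positive, normalized valuations consistent with all the answers given so far: a \emph{Yes-witness}, for which the pie does contain a length-$s$ interval of value at most $q$; and a \emph{No-witness}, for which every length-$s$ interval has value strictly greater than $q$. Since the algorithm is finite it must eventually output Yes or No, and at that point the adversary completes the valuation to whichever of the two witnesses refutes the output, contradicting correctness.

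\textbf{The family of witnesses.} For a parameter $\ell\in(q,s]$ I would use the ``dip'' valuation $v_\ell$ that puts uniform density $q/\ell$ on an interval $I_\ell$ of length $\ell$ and uniform density $\frac{1-q}{1-\ell}$ on the rest of the pie; then $v_\ell$ is piecewise constant, strictly positive, and normalized, and $v_\ell(I_\ell)=q$. For $\ell=s$ the interval $I_s$ is itself a length-$s$ interval of value $q\le q$, so $v_s$ is a Yes-instance. For $\ell\in(q,s)$, any length-$s$ interval $J$ extends beyond $I_\ell$ by total length at least $s-\ell$; since $\ell>q$, the value $v_\ell(J)$ is minimized when $J\supseteq I_\ell$, so
\[
v_\ell(J)\ \ge\ q+(s-\ell)\cdot\frac{1-q}{1-\ell}\ >\ q ,
\]
where the strict inequality uses exactly the hypothesis $s>\ell$ (and $q<1$); hence $v_\ell$ with $\ell\in(q,s)$ is a No-instance. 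Crucially $v_\ell\to v_s$ as $\ell\to s^-$, so the Yes- and No-witnesses can be made arbitrarily close, with their ``dip'' placed on the pie so as to avoid the finitely many points recorded so far.

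\textbf{Answering rule and the main obstacle.} The adversary maintains the set of \emph{recorded} points (those appearing in a query or in its own answer) and answers essentially as in the proof of Theorem~\ref{thm:pie-mms-greater} --- as though the density were close to uniform --- subject to the extra requirement that it never lets its answers fix the value or the length of the reserved dip. When a query would otherwise force such a commitment, the adversary perturbs its reply by a small amount, choosing it from the non-degenerate interval of values consistent with monotonicity while avoiding the finitely many values that would rule out ``dip of length exactly $s$'' or ``dip of length strictly less than $s$''; this is possible for the same reason as in Lemma~\ref{lem:function-impossibility} (infinitely many admissible answers, finitely many forbidden ones). After termination the recorded points cut the pie into finitely many arcs with now-fixed values, and the point that needs care is that one can still select a run of roughly $\lceil s/(\text{mesh size})\rceil$ consecutive arcs whose total value is just below $q$: stretching this run into a dip of length exactly $s$ gives a consistent Yes-witness, while keeping its length below $s$ and pushing the residual value into the adjacent higher-density arcs gives a consistent No-witness. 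Verifying that this flexibility survives an \emph{arbitrary adaptive} query sequence --- in particular, that even a very fine mesh of $\eval$ queries cannot pin the dip's length down to a single value --- is the crux of the argument and the step I expect to demand the most technical work; everything else (piecewise constancy, strict positivity, and consistency of $\cut$ answers, which follows automatically once all densities are strictly positive) is routine bookkeeping.
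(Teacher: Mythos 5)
Your high-level plan---an adversary argument built around a ``dip'' of low density, with a Yes-witness and a No-witness kept alive until the algorithm commits---is exactly the structure of the paper's proof. You also correctly identify where the real difficulty lies: showing that the adversary can keep both possibilities alive through an \emph{arbitrary adaptive} sequence of $\eval$ and $\cut$ queries. But that is precisely the part you leave unproved, and the ``perturb the reply by a small amount, infinitely many values allowed, finitely many forbidden'' argument borrowed from Lemma~\ref{lem:function-impossibility} does not transfer: there the constraint is a single scalar equation $g(x_0)+g(x_0+s)=1$, whereas here each query answer fixes a cumulative value $v(0,x)$, and the adversary must simultaneously (a) keep its answers globally consistent with a normalized density, (b) keep the dip's two endpoints unrecorded, and (c) preserve the dip's low-density property. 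That is a very different consistency problem, and hand-waving over it is where a proof in your style would break.

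The paper's resolution is cleaner than what you sketch and avoids the variable-length-dip degree of freedom entirely. The adversary commits, after each query $t$, to a concrete piecewise constant valuation $v_t$ with the invariant: there is an unrecorded point $x_t$ such that $[x_t,x_t+s]$ has density exactly $q/s$ throughout, and the density is strictly greater than $q/s$ everywhere else. When the $t$-th query would otherwise record $x_{t-1}$ or $x_{t-1}+s$, the adversary \emph{shifts} the dip clockwise by $\varepsilon/2$ for a small $\varepsilon$: it raises the density just to the left of the new dip and lowers it just to the right, choosing the new constants so that the values of the two affected recorded intervals $[y_-,y_+]$ and $[(y+s)_-,(y+s)_+]$ are unchanged---hence all prior answers remain valid. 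One then checks that a $\cut$ query answered according to $v_t$ lands strictly between the old dip endpoint and the new one, so the new dip endpoints again avoid the recorded set. When the algorithm finally answers, the adversary produces $v_t$ itself as the Yes-refutation, or, for the No-refutation, flattens $v_t$ on $[(x_t)_-,(x_t)_+]$ to a single constant (necessarily $>q/s$, by the invariant); after flattening, the remaining exactly-$q/s$ segment has length $<s$, so every length-$s$ arc now picks up some density $>q/s$ and thus has value $>q$. Your approach of tracking a family of dips of length $\ell\in(q,s]$ is not wrong in spirit---it is morally the same ambiguity between ``dip of value exactly $q$'' and ``no length-$s$ arc of value $\le q$''---but it introduces an extra parameter that makes the consistency bookkeeping strictly harder than the paper's fixed-length-$s$ dip with end-time flattening, and you would still need the shifting idea to protect the dip endpoints from being recorded.
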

\begin{proof}
Suppose for contradiction that there exists an algorithm as in the theorem statement. 
Assume without loss of generality that the algorithm only asks queries of the form $\eval(0,x)$ and $\cut(0,\alpha)$---the queries $\eval(x,y)$ and $\cut(x,\alpha)$ can be easily simulated using the former two query types.

During the run of the algorithm, there is always a finite set of points $x\in [0,1]$ for which the algorithm knows the value of $v(0,x)$; we say that such points are \emph{recorded}.
Initially only point $0$ (which is equivalent to point $1$) is recorded.
Given a point $x$, we denote by $x_-$ the closest recorded point counterclockwise, and by $x_+$ the closest recorded point clockwise. If $x$ itself is recorded, then $x_- = x_+ = x$.

We will show how an adversary can answer the queries made by the algorithm.
For any integer $t\ge 0$, after $t$ queries are made, the adversary has in mind a valuation function~$v_t$ satisfying the following properties:
\begin{enumerate}[label=(\roman*)]
\item $v_t$ is compatible with all previously recorded points.
\item There is a point $x_t$ such that neither $x_t$ nor $x_t+s$ is recorded, the density between $x_t$ and $x_t+s$ is exactly $q/s$ throughout, and the density elsewhere is strictly greater than $q/s$ throughout.
\end{enumerate}
For the base case $t=0$, the adversary chooses any $x_0\not\in\{0,1-s\}$, and sets the density outside $[x_0,x_0+s]$ to $(1-q)/(1-s)$. 
Observe that the resulting valuation is normalized,  
and $(1-q)/(1-s)$ is greater than $q/s$ since $q < s$.

For $t\ge 1$, assume that the adversary has in mind the valuation $v_{t-1}$ and the associated point $y := x_{t-1}$.
Recall that neither $y$ nor $y+s$ is recorded.
If there is no recorded point in the range $(y, y+s)$, the adversary voluntarily records an arbitrary point in that range, and similarly for the complement range $(y+s, y)$.
This ensures that $y_+$ belongs to the interval $(y, (y+s)_-]$.
Note that none of the points $y_-$, $y_+$, $(y + s)_-$, and $(y + s)_+$ belongs to the set $\{y, y+s\}$.
When the algorithm makes the $t$-th query, if both $y$ and $y+s$ would still be unrecorded upon answering according to $v_{t-1}$, the adversary answers the query according to $v_{t-1}$ and sets $v_t = v_{t-1}$ and $x_t = y$.
Else, assume that $y$ would be recorded if the adversary answers the query according to $v_{t-1}$; the case where $y+s$ would be recorded can be handled analogously. The adversary will then try to ``shift'' the low-value region clockwise so that both of its endpoints remain unrecorded.

Specifically, let $\varepsilon = \min\{y_+-y, (y+s)_+-(y+s)\}$, where subtraction is modulo 1, 
and let $z=y+\varepsilon/2$. We then have $z\in (y, y_+)$, $z+s\in (y+s, (y+s)_+)$; this ensures that there are no recorded points between $y$ and $z$
and between $y+s$ and $z+s$.
The adversary will construct $v_t$ so that it has density $q/s$ in $[z, z+s]$; this requires increasing the density in $[y, z]$ and lowering the density in $[y+s, z+s]$. However, to remain consistent with the previous answers, 
the adversary should not change the value of the segments $[y_-, y_+]$ and $[(y+s)_-, (y+s)_+]$.
To accomplish these goals,
the adversary constructs $v_t$ by making 
the following changes to $v_{t-1}$:
\begin{enumerate}[label=(\alph*)]
\item 
Set the density throughout $[y_-,z]$ to be $v_{t-1}(y_-, z)/(z-y_-)$.
Then $v_t(y_-, z)=v_{t-1}(y_-, z)$ and hence
the value of $[y_-, y_+]$ is preserved:
$v_t(y_-, y_+)=v_{t-1}(y_-, y_+)$.
Note that this operation decreases the density in 
$[y_-,y]$ and increases the density in $[y,z]$.
\item Set the density in $[y+s,z+s]$ to be $q/s$, 
and the density in $[z+s,(y+s)_+]$ to be a constant $d$ such that
\[
v_{t-1}(y+s, (y+s)_+) = (z-y)\cdot \frac{q}{s} + ((y+s)_+ - (z+s))\cdot d.
\]
Then the total value of $[y+s, (y+s)_+]$ and hence of $[(y+s)_-, (y+s)_+]$ is preserved, i.e., 
$v_t((y+s)_-, (y+s)_+) = v_{t-1}((y+s)_-, (y+s)_+)$.
\end{enumerate}

\begin{figure}[!ht]
\centering
\begin{tikzpicture}[scale=1.17]
\draw (3,7) -- (3,3) -- (14,3);
\node at (4.6,2.5) {\footnotesize $y_-$};
\node at (5,2.5) {\footnotesize $y$};
\node at (6,2.5) {\footnotesize $z$};
\node at (7.8,2.5) {\footnotesize $y_+$};
\node at (9.5,2.5) {\footnotesize $(y+s)_-$};
\node at (11,2.5) {\footnotesize $y+s$};
\node at (12,2.5) {\footnotesize $z+s$};
\node at (13,2.5) {\footnotesize $(y+s)_+$};
\draw [ultra thick] (4.6,2.9) -- (4.6,3.1);
\draw (5,2.9) -- (5,3.1);
\draw (6,2.9) -- (6,3.1);
\draw [ultra thick] (7.8,2.9) -- (7.8,3.1);
\draw [ultra thick] (9.5,2.9) -- (9.5,3.1);
\draw (11,2.9) -- (11,3.1);
\draw (12,2.9) -- (12,3.1);
\draw [ultra thick] (13,2.9) -- (13,3.1);
\node at (14.3,3) {\footnotesize pie};
\node at (3,7.3) {\footnotesize density};
\draw (2.9,4) -- (3.1,4);
\node at (2.5,4) {\footnotesize $q/s$};
\draw[blue,thick] (5,4) -- (11,4);
\draw[blue,thick] (3,5) -- (5,5);
\draw[blue,thick] (11,5) -- (14,5);
\draw[red,dashed,ultra thick] (3,5) -- (4.6,5);
\draw[red,dashed,ultra thick] (13,5) -- (14,5);
\draw[red,dashed,ultra thick] (6,4) -- (12,4);
\draw[red,dashed,ultra thick] (4.6,4.29) -- (6,4.29);
\draw[red,dashed,ultra thick] (12,6) -- (13,6);
\end{tikzpicture}
\caption{Illustration of the construction in the proof of Theorem~\ref{lem:pie-separator}, focusing only on a portion of the pie.
The solid blue lines depict the density function of the valuation $v_{t-1}$ and the dashed red lines depict the density function of the valuation $v_t$.
Thicker tickmarks denote recorded points.
}
\label{fig:pie-impos-shifting}
\end{figure}
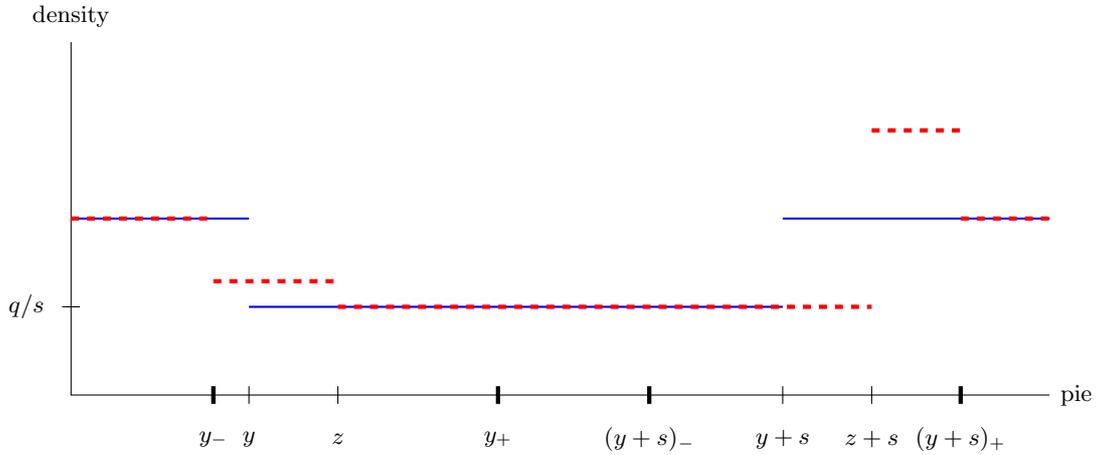

See Figure~\ref{fig:pie-impos-shifting} for an illustration of this construction.
Inductively, in $v_{t-1}$, the average density in the range $[y_-, y_+]$ is strictly greater than $q/s$.
Since the value of $[y_-, y_+]$ is preserved and the
density in $[z,y_+]$ is set to be exactly $q/s$ under $v_t$, 
the density in $[y_-,z]$ under $v_t$ is strictly greater than $q/s$.
Similarly, the constant density in $[z+s,(y+s)_+]$ in $v_t$ is strictly greater than $q/s$.

The adversary then answers the query according to $v_t$.
If the query is an \textsc{Eval} query, the new recorded point is $y\not\in\{z,z+s\}$.
Else, the query is of the form $\cut(0, \alpha)$. For $y$ to become recorded as a result of this query with respect to $v_{t-1}$, it would have to be the case that $y=0$ or $v_{t-1}(0, y)=\alpha$. We can rule out the former case, as $0$ is recorded at the beginning, and we know that the interval $[y_-, y_+]$ contains no recorded points.
In the latter case, observe that 
$[(y+s)_-, (y+s)_+]$ contains no recorded 
points either. So in particular neither 
of the intervals $I' =[y_-, z]$ and $I'' = [y+s, (y+s)_+]$
contains $0$.
Moreover, we have $v_t(I')=v_{t-1}(I')$, 
$v_t(I'') = v_{t-1}(I'')$,
and $v_t$ is identical to $v_{t-1}$ outside of $I'$ and $I''$.
Hence, it holds that $v_t(0, z) = v_{t-1}(0, z) > v_{t-1}(0, y) = \alpha$.
On the other hand, since transforming $v_{t-1}$ into $v_t$ lowers the density in $[y_-, y]$, 
we have $v_t(0, y)< v_{t-1}(0, y)=\alpha$. Combining these observations, 
we obtain $v_t(0, y) < \alpha < v_t(0, z)$.
Hence, if the adversary answers 
$\cut(0, \alpha)$ according to $v_t$, the resulting point will be located in the interval $(y,z)$, so in particular it will be different from $z$ and $z+s$.
This means that $v_t$ satisfies the properties (i) and (ii) with $x_t = z$.

Since the algorithm is finite, it must eventually answer whether there is an interval of length $s$ with value at most $q$.
If it answers No, the adversary reveals that the valuation is $v_t$, so the value of $[x_t,x_t+s]$ is exactly $q$.
On the other hand, if the algorithm answers Yes, the adversary constructs $v_{t+1}$ from $v_t$ by changing the density in $[(x_t)_-,(x_t)_+]$ to a constant so as to preserve the total value in this interval. By property (ii) of $v_t$, this constant density is strictly greater than $q/s$.
With respect to $v_{t+1}$, the density of any interval is at least $q/s$, and any interval of length $s$ contains a positive-length portion with density greater than $q/s$. 
This means that the value of any such interval is greater than $q$, so by revealing the valuation to be $v_{t+1}$, the adversary can again falsify the algorithm's answer. This completes the proof.
\end{proof}

\begin{remark}
The requirement $s>q>0$ is essential for the impossibility.
Indeed, when $q\geq s$, the answer 
to \haslowval{s,q} is always Yes, whereas
when $q=0$, \haslowval{s,q} can be answered using finitely many queries; the algorithm is similar to the one in the proof of Theorem~\ref{thm:pie-1/n-positive}.
\end{remark}

\begin{theorem}
\label{thm:pie-mms-greater-equal}
In pie cutting, for any $k\geq 2$, there is no finite algorithm in the Robertson--Webb model that can decide, for any agent $i$ and real number $r$, whether $\emph{MMS}^k_i \geq r$, even when the valuation of agent $i$ is piecewise constant and strictly positive (but not given explicitly).
\end{theorem}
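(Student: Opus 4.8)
The plan is to reduce from the problem \haslowval{s,q}, which by Lemma~\ref{lem:pie-separator} admits no finite algorithm whenever $s>q>0$, even for piecewise constant and strictly positive valuations. So fix such parameters, say $s=\tfrac12$ and $q=\tfrac14$, and set $\sigma:=s/k$ and $r:=(1-q)/k$; since $s<1$ we have $\sigma<1/k$, so $\sigma$ is a legitimate separation parameter. Given an input valuation $v$ of \haslowval{s,q} on the circular pie $[0,1]$, I would build the pie-cutting instance in which the agent $i$ under consideration has the \emph{$k$-fold periodic replica} of $v$: its density at a point $u\in[0,1]$ equals the density of $v$ at $ku \bmod 1$. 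This replica is piecewise constant, strictly positive, and normalized, and it is periodic with period $1/k$. The structural fact that drives everything is that, because $v$ lives on a \emph{circle}, every arc of length $\sigma$ on the reduced pie --- including one that straddles the seam between two consecutive copies of $v$ --- corresponds to an arc of length $s$ on the original circular pie and carries exactly $k$ times its value.

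Next I would prove that $\mms^{k,\sigma}_i\ge r$ holds if and only if \haslowval{s,q} outputs Yes. If the answer is Yes, fix a window $[x_0,x_0+s]$ of $v$ of value at most $q$ and place, inside each of the $k$ copies, a separator of length $\sigma$ over the arc corresponding to that window (still a single arc, by periodicity); by the rotational symmetry of the construction, every piece of the resulting $k$-partition equals a full copy minus one separator and hence has value $\tfrac1k - \tfrac1k v(x_0,x_0+s) \ge \tfrac1k(1-q)=r$, so $\mms^{k,\sigma}_i\ge r$. Conversely, suppose $\mms^{k,\sigma}_i\ge r$, and take a maximin partition lying in $\partitions'_{k,\sigma}$ (it exists by the pie analogue of Proposition~\ref{prop:alt-def}). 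Each of its $k$ pieces has value at least $\mms^{k,\sigma}_i\ge r=\tfrac1k(1-q)$, so the pieces carry total value at least $1-q$; since the pieces together with the $k$ length-$\sigma$ separators tile the pie, the separators carry total value at most $q$, hence some separator --- an arc of length $\sigma$ --- has value at most $q/k$. By the structural fact, this separator corresponds to a length-$s$ arc of $v$ of value at most $q$, so \haslowval{s,q} outputs Yes.

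It remains to observe that each $\eval_i$ and $\cut_i$ query on the reduced pie can be answered using $O(1)$ queries to $v$: one peels off from the queried arc its full periods (each contributing length $1/k$ and value $1/k$), leaving a residual arc of value below $1/k$ on which the affine bijection with a subinterval of $v$ turns an $\eval_i$ into an $\eval$ of $v$ and a $\cut_i$ into a $\cut$ of $v$. Hence a finite algorithm deciding ``$\mms^k_i\ge r$'' would, on input $(i,r)$ with agent $i$'s queries answered via the above simulation, solve \haslowval{s,q} with finitely many queries, contradicting Lemma~\ref{lem:pie-separator}.

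I expect the converse direction to be the main obstacle. It requires two points of care: first, one must justify passing to a maximin partition whose separators have length \emph{exactly} $\sigma$, which is what the pie version of Proposition~\ref{prop:alt-def} provides; second, and more delicately, one must deal with a low-value separator that straddles the seam between two copies of $v$, which is precisely where it is essential that \haslowval concerns a circular pie rather than an interval --- so that a ``wrapped-around'' arc still witnesses a genuine length-$s$ window of $v$.
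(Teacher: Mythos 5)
Your proposal is correct and matches the paper's own proof in its essential structure: both reduce from $\textsc{HasLowValue}(s,q)$ to the MMS-threshold problem by giving the agent a $k$-fold periodic replica of $v$, and both observe that the low-value window (and its $k$ rotated copies) are exactly the separators of a near-optimal partition. The only cosmetic difference is scaling — you place the replica on the unit pie with separation $\sigma=s/k$, whereas the paper uses a pie $[0,k]$ with separation $s$ — and one slip of the pen (the length-$\sigma$ arc carries $1/k$ times, not $k$ times, the value of the corresponding length-$s$ arc of $v$), but your calculations all use the correct factor.
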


\begin{proof}
We prove a somewhat stronger claim:
for any $r\in \left(\frac{1}{k}-s,\frac{1}{k}\right)$, there is no finite algorithm that can decide whether $\mms^k_i \geq r$.
In particular, there is an entire interval of ``undecidable values'' rather than a single such value.
For the sake of convenience, we represent the pie by the interval $[0,k]$ instead of $[0,1]$.

The proof is by reduction from 
\haslowval{s,q}.
We show that, given an algorithm \mmsalg{} that
decides, for any $r\in \left(\frac{1}{k}-s,\frac{1}{k}\right)$, whether $\mms^k_i \geq r$ on the pie $\pi_k := [0,k]$,
we can decide, for any $q\in(0,s)$ and any valuation $v$ on the pie $\pi_1 := [0,1]$, whether there exists an interval of length $s$ and value at most $q$.
This is sufficient by Lemma~\ref{lem:pie-separator}.

We run \mmsalg{} with $r := \frac{1}{k} - q$; note that $r\in \left(\frac{1}{k}-s,\frac{1}{k}\right)$.
For each query asked by \mmsalg{}, 
we reply like an agent whose density function on $\pi_k$ is made of $k$ copies of the density function of $v$ on $\pi_1$.

If there exists an interval in $\pi_1$ of length $s$ and value at most $q$, then by using the $k$ corresponding pieces of $\pi_k$ as separators, the resulting partition has min-value at least $(1-kq)/k = r$, 
so \mmsalg{} answers Yes.

Conversely, suppose that no such interval exists in $\pi_1$, and consider any $s$-separated partition of $\pi_k$.
Each separator takes up value strictly greater than $q$, so the remaining value outside the $k$ separators is less than $1-kq = kr$.
Hence, at least one of the pieces in the partition has value less than $r$, implying that $\mms^k_i < r$; so \mmsalg{} answers No. 

In both cases, the answer of \mmsalg{} is the right answer to \haslowval{s,q}.
\end{proof}

Complementing the negative result for  $r\in \left(\frac{1}{k}-s,\frac{1}{k}\right)$,
we now present a positive result for $r=1/k$.
Note that we always have $\mms^k_i\leq 1/k$, 
and, moreover,  $\mms^k_i=1/k$ only if there is a partition where each separator has value $0$. 
These observations turn out to be highly useful for the analysis of this case.

\begin{theorem}
\label{thm:pie-1/n-positive}
For pie cutting, there exists an algorithm that, given an agent $i$ and any $k\geq 2$, decides whether $\emph{MMS}^k_i \geq 1/k$ (and if so, computes a maximin partition) using $O(k/s)$ queries in the Robertson--Webb model.
\end{theorem}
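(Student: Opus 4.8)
The plan is to reduce deciding $\mms^k_i\ge 1/k$ to a finite combinatorial search over the ``long null arcs'' of agent $i$, and then to carry out this search with $O(1/s)$ grid queries followed by $O(k/s)$ verification queries.

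As already noted before the theorem, $\mms^k_i\le 1/k$ always, and $\mms^k_i=1/k$ exactly when there is an $s$-separated $k$-partition whose $k$ separators all have value $0$ --- equivalently, whose $k$ pieces all have value exactly $1/k$ (the two descriptions agree because disjoint pieces have total value at most $1$). So deciding $\mms^k_i\ge 1/k$ amounts to deciding whether such a partition exists, and the first step I would take is to characterize when it does. For a value $\alpha$ call the \emph{level-$\alpha$ null arc} the maximal sub-arc of the pie on which $v_i$ accumulates no value and along which the cumulative value equals $\alpha$ (the arc through the reference point $0$ being assigned level $0$, identifying $0$ and $1$ as cumulative values), and let $L(\alpha)$ be its length. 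I claim the desired partition exists iff there is a $\beta$ with $L(\beta+j/k)\ge s$ for all $j\in\{0,\dots,k-1\}$, sums taken modulo $1$. Forward direction: the cumulative value is constant across each null separator, so going clockwise the levels of the separators are $\beta,\beta+\tfrac1k,\dots,\beta+\tfrac{k-1}{k}$ for some $\beta$, and the $j$-th separator, of length at least $s$, sits inside the level-$(\beta+j/k)$ null arc, giving $L(\beta+j/k)\ge s$. Converse: take $S_j$ to be a length-$s$ sub-arc of the level-$(\beta+j/k)$ null arc; then the arc between $S_j$ and $S_{j+1}$ has value exactly $1/k$, and since null arcs at distinct levels are pairwise disjoint and cyclically ordered, so are the $S_j$, yielding a genuine $1$-out-of-$k$ maximin partition (with value exactly $1/k$).

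Next I would observe that this search is finite: null arcs with $L(\alpha)\ge s$ are pairwise disjoint arcs of length at least $s$, so there are at most $\lfloor 1/s\rfloor$ of them; let $T$ be the set of their levels, so $|T|\le\lfloor 1/s\rfloor$. A valid $\beta$ exists iff for some $\tau\in T$ the whole set $\{\tau+j/k:0\le j<k\}$ lies in $T$. Testing membership ``$\alpha\in T$'' costs $O(1)$ queries: compute $p:=\cut_i(0,\alpha)$, the left endpoint of the level-$\alpha$ null arc, and ask $\eval_i$ whether $v_i(p,p+s)=0$ (the value level $\alpha=0$, whose null arc wraps around the point $0$, needs a small case analysis but is still $O(1)$ queries). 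Hence, once $T$ is known, these checks use $O(k\,|T|)=O(k/s)$ queries, and in the positive case the successful $\tau$ together with the points $\cut_i(0,\tau+j/k)$ describe the output maximin partition.

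The step I expect to be the main obstacle is computing $T$ within $O(1/s)$ queries, because locating an endpoint of a null arc in the obvious way would require a $\cutr$ query, which the Robertson--Webb model does not support. I would sidestep this with a fixed grid: place $\lceil 2/s\rceil$ equally spaced points around the pie (spacing at most $s/2$) and ask one $\eval_i$ query per resulting window. The elementary fact that any arc of length at least $s$ fully contains at least one grid window guarantees that every null arc with $L(\alpha)\ge s$ yields a window answered ``value $0$''; and for any such null window, the cumulative value at any of its points equals the level of the ambient null arc, found with one more $\eval_i$ query. Running the $O(1)$ membership test on each candidate level then recovers exactly $T$ (it discards candidates whose ambient null arc has length only in $[s/2,s)$). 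This costs $O(1/s)$ queries, so the overall bound is $O(1/s)+O(k/s)=O(k/s)$. Finally I would remark --- matching the comment following \ref{lem:pie-separator} --- that this same grid routine decides $\haslowval{s,0}$ with $O(1/s)$ queries.
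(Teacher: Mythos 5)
Your proof is correct and follows essentially the same route as the paper's: both use the same $\lceil 2/s\rceil$-window grid to locate null windows, both recover the maximal null arc containing a window, and both then verify that stepping by value $1/k$ lands only on length-$\ge s$ null arcs (the paper phrases this as a greedy construction from the start of each null arc via $\cut_i(x,1)$; you phrase it as a membership test for the levels $\beta+j/k$). Your explicit characterization lemma in terms of null-arc levels is a clean reformulation of what the paper's correctness argument establishes implicitly, but the underlying algorithm is the same.
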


\begin{proof}
Since $\mms^k_i\leq 1/k$ always holds, it suffices to decide whether $\mms^k_i = 1/k$.
For the sake of convenience, we modify the queries slightly for pie cutting as follows:
\begin{itemize}
\item $\eval_i(x,y)$: Asks agent $i$ to evaluate the interval $[x,y]$ starting from $x$ and going clockwise to $y$, and return the value $v_i(x,y)$.
\item $\cut_i(x,\alpha)$: For $\alpha\leq 1$, asks agent $i$ to return the first point $y$ going clockwise from~$x$ such that $v_i(x,y)=\alpha$.
\end{itemize}
It is clear that each of these queries can be implemented using no more than two queries in the original model.
If $x > 1$, we identify the point $x$ with the point $x-1$.

\begin{algorithm}
\caption{Determining whether $\mms^k_i = 1/k$ in pie cutting}\label{alg:pie-1/n}
\begin{algorithmic}[1]
\Procedure{MMS-$1/k$-Pie$(k,s)$}{}
\State Divide the pie into $\lceil 2/s\rceil$ equal-length intervals.
\For{each resulting interval $[x,y]$}
\If{$\eval_i(x,y) = 0$}
\State $z\leftarrow \cut_i(x,1)$
\State works $\leftarrow$ True
\For{$j = 1,2,\dots,k$}
\If{$\eval_i(z,z+s) \neq 0$}
\State works $\leftarrow$ False
\EndIf
\State $z\leftarrow \cut_i(z+s,1/k)$
\EndFor
\If{works $=$ True}
\State \Return Yes
\EndIf
\EndIf
\EndFor
\State \Return No
\EndProcedure
\end{algorithmic}
\end{algorithm}

The pseudocode of our algorithm is given as Algorithm~\ref{alg:pie-1/n}.
We first divide the pie into $\lceil 2/s\rceil$ equal-length intervals, so that the length of each interval is at most $s/2$.
For each resulting interval $[x,y]$, if it has value $0$, we find the closest point $z$ going clockwise from~$x$ such that $v_i(x,z) = 1$; note that $z$ is also the farthest point counterclockwise from $x$ such that $v_i(z,x) = 0$.
From point $z$, we check for a potential partition with min-value $1/k$: we insert a separator of size $s$, create a part of value $1/k$, and repeat these steps $k-1$ more times. 
If all $k$ separators have value $0$, return Yes.
Otherwise, if this procedure fails for all candidate intervals $[x,y]$, return No.
Since we have initially divided the pie into $O(1/s)$ intervals and we make $O(k)$ queries for each interval, our algorithm uses $O(k/s)$ queries.

Next, we prove the correctness of the algorithm. 
If the algorithm returns Yes, then the value of each separator is $0$ and the value of each part is $1/k$, so we obtain a $k$-partition of the pie with min-value $1/k$. Hence, in this case $\mms^k_i = 1/k$.

For the converse direction, suppose that $\mms^k_i = 1/k$, so there exists a partition with min-value $1/k$. We can assume without 
loss of generality that each separator
in our partition cannot be extended in either
direction without reducing the value of the adjacent parts, i.e., each separator is an inclusion-maximal interval of value~$0$ (in particular, we allow separators to have length greater than $s$).
Since the length of each separator in this partition is at least $s$, whereas the length of each interval in our initial pie division is at most $s/2$, there exists an interval that is contained in one of the separators; let $[x,y]$ be one such interval.
It suffices to show that the algorithm returns Yes when starting with the interval $[x,y]$ in the for-loop.

Since $[x,y]$ is contained in a separator, we have $v_i(x,y) = 0$.
Denote by $S_1$ the separator containing $[x,y]$.
Let $P_1$ be the adjacent piece of the partition going clockwise, and denote the following separators and pieces by $S_2,P_2,\dots,S_k,P_k$.
Considering all pieces in the clockwise direction, we find that $z$ coincides with the starting point of $S_1$. 
Since $S_1$ has length at least $s$, the separator that the algorithm inserts has value $0$; moreover, when the algorithm jumps by value $1/k$, it will reach exactly the endpoint of~$P_1$, which is also the starting point of $S_2$.
Repeating this argument, we conclude that all $k$ separators that the algorithm inserts indeed have value $0$, and the algorithm returns Yes, as claimed.
\end{proof}

The number of queries made by Algorithm~\ref{alg:pie-1/n} scales linearly with $1/s$. This is in contrast to Theorem~\ref{thm:mms-query-atleast} for cake cutting, where the number of queries is independent of~$s$. We will now show that
for pie cutting the number of queries must depend on $s$; this result holds even for $k=2$
and $r=1/k$.

\begin{theorem}
\label{thm:pie-1/n-negative}
Let $c$ be any constant not depending on $s$.
For pie cutting, there is no algorithm using at most $c$ Robertson--Webb queries that, given an agent $i$, can always decide whether $\emph{MMS}^2_i \geq 1/2$, even when the agent's valuation is piecewise constant (but not given explicitly).
\end{theorem}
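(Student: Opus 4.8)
The plan is to use an adversary argument similar in spirit to Lemma~\ref{lem:function-impossibility} and Lemma~\ref{lem:pie-separator}, but now exploiting the fact that a bounded budget $c$ of queries cannot ``cover'' the entire pie when the separation parameter $s$ is taken small relative to $c$. Recall from the observations preceding Theorem~\ref{thm:pie-1/n-positive} that $\mms^2_i \ge 1/2$ if and only if $\mms^2_i = 1/2$, which happens if and only if there is an $s$-separated partition of the pie into two pieces in which both separators have value $0$ --- equivalently, the pie contains two disjoint intervals of length $s$, each of value $0$, that split the remaining cake into two pieces of value $1/2$ each. So deciding $\mms^2_i \ge 1/2$ amounts to deciding whether such a ``doubly zero-valued balanced'' configuration exists.

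First I would fix the adversary to answer every query as though the valuation is uniform on the pie (each interval of length $t$ between consecutive recorded points has value $t$), exactly as in the proof of Theorem~\ref{thm:pie-mms-greater}. After the algorithm halts having made at most $c$ queries, only a bounded number (at most $2c$, say) of points on the pie have been recorded. Since $s$ can be chosen as small as we like --- in particular with $1/s$ much larger than the number of recorded intervals --- there is some arc of the pie of length at least $s$ that is ``query-free'', i.e.\ contains no recorded point in its interior, and in fact one can find two such disjoint query-free arcs positioned so that they split the pie into two complementary arcs of equal total length $1/2 - s$ each. The key point is that the adversary's uniform answers are consistent both with the genuinely uniform valuation (for which $\mms^2_i = 1/2$, as computed in Possibility~1 of Theorem~\ref{thm:pie-mms-greater}) and with a perturbed piecewise constant valuation that places two length-$s$ zero-valued intervals inside those two query-free arcs while redistributing their mass uniformly over the rest of the pie --- for the latter, $\mms^2_i < 1/2$ strictly, by the same argument as Possibility~1 (since now every $s$-separated $2$-partition must waste strictly positive value on at least one separator). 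Both valuations are piecewise constant and agree with all $\le c$ recorded answers, so the algorithm's output is wrong for one of them.

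The main obstacle, and the step deserving the most care, is the geometric/counting argument that with $1/s$ sufficiently large compared to $c$ one can \emph{simultaneously} find two disjoint query-free arcs of length $s$ that are antipodally balanced --- i.e.\ whose removal leaves two arcs each of recorded-value exactly $1/2$. A convenient way to handle this is to first note that $\le 2c$ recorded points partition the pie into $\le 2c$ arcs, pick any arc of length $> 2s$ (one exists once $1/s$ is large), place the first zero-interval $[x_0, x_0+s]$ strictly inside it avoiding recorded points, and then observe that the point ``antipodal in recorded value'' to the midpoint of this interval --- namely the unique $y$ with $v(x_0 + s, y) = 1/2$ under the uniform answers --- lies in the interior of some arc; by a further shrinking of $s$ (or a slight, value-preserving repositioning of $x_0$ within its host arc, which does not change any recorded answer) we can ensure the second interval $[y, y+s]$ also avoids all recorded points. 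Everything else --- checking that the perturbed valuation is normalized, piecewise constant, consistent with the recorded answers, and has $\mms^2_i < 1/2$ --- is routine and parallels the proofs of Theorems~\ref{thm:pie-mms-greater} and~\ref{thm:pie-mms-greater-equal}. I would also remark that the argument shows the required number of queries is in fact $\Omega(1/s)$, matching the upper bound in Theorem~\ref{thm:pie-1/n-positive} up to the factor $k$.
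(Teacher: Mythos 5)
Your high-level strategy---an adversary answering all queries as though the valuation were uniform, then exhibiting two valuations consistent with the $\le c$ recorded answers yielding different verdicts on $\mms^2_i\ge 1/2$---is exactly the paper's strategy, and the counting argument requiring $s$ small compared to $1/c$ (the paper takes $s<1/(4c)$, notes there are at most $2c$ recorded points, each ruling out two length-$s$ arcs of candidate centers, so $2c\cdot 2s<1$ guarantees a ``good'' arc whose antipode is also free) is the right idea.

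However, you have the two cases reversed, and the error propagates into your justifications. For the \emph{uniform} valuation, every separator has strictly positive value, so every $s$-separated $2$-partition loses value $>0$ and therefore $\mms^2_i=1/2-s<1/2$; this is exactly what Possibility~1 of Theorem~\ref{thm:pie-mms-greater} computes (the value there is $1/k-s$, not $1/k$). Conversely, for the \emph{perturbed} valuation with two zero-valued length-$s$ intervals placed in the query-free arcs, those intervals serve as separators at zero cost, and when they are placed antipodally the two remaining arcs (each of length $1/2-s$) each gain $s/2+s/2=s$ from the shifted mass, attaining value exactly $1/2$, so $\mms^2_i=1/2$. Your parenthetical ``(since now every $s$-separated $2$-partition must waste strictly positive value on at least one separator)'' is the argument for the uniform case, not the perturbed one. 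A related slip in the geometry: your target of ``two arcs each of recorded-value exactly $1/2$'' is impossible under the uniform answers (removing two length-$s$ arcs leaves total value $1-2s$), and your choice of $y$ with $v(x_0+s,y)=1/2$ places the second separator at $x_0+s+1/2$ rather than antipodal to $[x_0,x_0+s]$, which would leave the two arcs unbalanced (lengths $1/2$ and $1/2-2s$) and fail to give $\mms^2_i=1/2$ even after shifting. The paper instead takes $I^-$ diametrically opposite $I^+$. With these corrections your proof coincides with the paper's.
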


\begin{proof}
Assume for contradiction that such an algorithm exists, and let $0 < s < \frac{1}{4c}$.
We will show how an adversary can answer the queries of the algorithm in such a way that after at most $c$ queries, there exists a piecewise constant valuation function consistent with the answers for which $\mms^2_i \geq 1/2$, but also one for which $\mms^2_i < 1/2$.
This is sufficient to obtain the desired contradiction.
Since it always holds that $\mms^2_i \leq 1/2$, the first case is equivalent to $\mms^2_i = 1/2$.

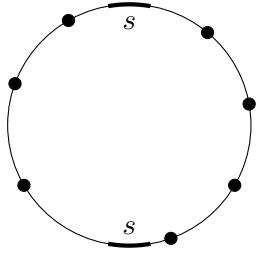
\begin{figure}[!ht]
\centering
\begin{tikzpicture}[scale=0.8]
\draw (3,3) circle [radius = 2];
\draw [ultra thick,domain=80:100] plot ({3+2*cos(\x)}, {3+2*sin(\x)});
\draw [ultra thick,domain=260:280] plot ({3+2*cos(\x)}, {3+2*sin(\x)});
\node at (3,4.7) {$s$};
\node at (3,1.3) {$s$};
\draw [fill] ({3+2*cos(10)},{3+2*sin(10)}) circle [radius = 0.1];
\draw [fill] ({3+2*cos(50)},{3+2*sin(50)}) circle [radius = 0.1];
\draw [fill] ({3+2*cos(120)},{3+2*sin(120)}) circle [radius = 0.1];
\draw [fill] ({3+2*cos(160)},{3+2*sin(160)}) circle [radius = 0.1];
\draw [fill] ({3+2*cos(210)},{3+2*sin(210)}) circle [radius = 0.1];
\draw [fill] ({3+2*cos(290)},{3+2*sin(290)}) circle [radius = 0.1];
\draw [fill] ({3+2*cos(330)},{3+2*sin(330)}) circle [radius = 0.1];
\end{tikzpicture}
\caption{Illustration for the proof of Theorem~\ref{thm:pie-1/n-negative}, with a pie and its recorded points. The two highlighted intervals are ``good'' intervals.}
\label{fig:pie-good-interval}
\end{figure}

The adversary records any point that appears in a query or an answer to it, and answers queries as if the valuation is uniform. 
Suppose that at most $c$ queries have been answered in this manner.
Each query and its answer increase the number of recorded points by at most two, so there are at most $2c$ recorded points.
We say that a closed interval $I$ of length~$s$ is {\em good} if neither $I$ nor the diametrically opposite interval of length~$s$ contains any recorded point 
(see Figure~\ref{fig:pie-good-interval}); a point $x$ is {\em good} if it is the center of a good interval.
For each recorded point, the set of points that it rules out as good is a union of two intervals of length $s$; since $2c\cdot 2s < 1$, a good interval exists. Let $I^+$
be one such interval, and let $I^-$ be the diametrically opposite interval. If necessary, record additional points so that there is at least one recorded point on both pieces of the pie between $I^+$ and $I^-$.
Consider the following two possibilities:

\underline{Possibility 1}: The entire valuation function is uniform. Since $s > 0$, we have $\mms^2_i < 1/2$ in this case.

\underline{Possibility 2}: Consider a piece between any two consecutive recorded points.
If the piece contains neither $I^+$ nor $I^-$, the adversary simply distributes the value  uniformly across the piece.
Else, if the piece contains an interval $I\in\{I^+, I^-\}$, the adversary ``shifts'' the value away from $I$ in the following manner: Divide $I$ into two subintervals of length $s/2$, and for each subinterval, move its value to the adjacent part of the same piece outside $I$.
This can be done so that the resulting valuation is piecewise constant.
Since the two intervals of length $s$ serve as separators for a partition with min-value $1/2$, we have $\mms^2_i = 1/2$.

Therefore, an algorithm using at most $c$ queries cannot distinguish between the case $\mms^2_i<1/2$ and the case $\mms^2_i\ge 1/2$.
\end{proof}

At the other extreme, we show next that deciding whether the maximin share is strictly positive can also be done by a finite algorithm.
Moreover, unlike for the task of deciding whether $\mms^k_i \ge 1/k$, the number of queries needed to decide whether $\mms^k_i > 0$ does not depend on $s$.
For this result we need the assumption $s\le \frac{1}{2k}$, which means that the total length of all separators is at most $1/2$, i.e., the length of a separator does not exceed the average length of the pieces in a partition.
While this is a reasonable assumption since separators are small in most applications, it remains open whether it can be removed.

\begin{theorem}
\label{thm:pie-0}
For pie cutting, there exists an algorithm that, given an agent $i$ and any $k\ge 2$ and $s\le \frac{1}{2k}$, decides whether  $\emph{MMS}^k_i > 0$  using $O(k)$ queries in the Robertson--Webb model.
\end{theorem}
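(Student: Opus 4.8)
The plan is to reduce the decision ``$\mms^k_i>0$?'' to a purely geometric packing question and then solve it greedily, in the spirit of the algorithms behind Theorems~\ref{thm:mms-algo} and~\ref{thm:pie-1/n-positive}. First observe, exactly as in Proposition~\ref{prop:alt-def}, that $\mms^k_i>0$ holds iff the pie admits an $s$-separated $k$-partition in which every piece has positive value, and that in such a partition one may enlarge pieces so that all $k$ separators have length exactly $s$; shrinking the pieces in turn, this is equivalent to the existence of $k$ points $x_1,\dots,x_k$ in cyclic order, each lying in the \emph{essential support} $K$ of $v_i$ (the complement of the union of the maximal intervals of value $0$), with every pair of cyclically consecutive points at distance at least $s$. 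The only care needed is at the finitely many endpoints of the components of $K$, which is harmless because $v_i$ is piecewise constant.

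The algorithm first divides the pie into $2k$ equal arcs, each of length $\tfrac{1}{2k}\ge s$, and queries the value of each arc with $\eval_i$; this costs $O(k)$ queries. \textbf{Case 1:} if all $2k$ arcs have positive value, the algorithm returns Yes---picking a point of positive density in the interior of each odd-indexed arc yields $k$ points of $K$ whose cyclically consecutive distances are strictly larger than the length $\tfrac{1}{2k}$ of the intervening even-indexed arc, hence at least $s$, so a valid partition exists. \textbf{Case 2:} otherwise some arc $A^\ast$ has value $0$, and it is a zero-value interval of length $\tfrac{1}{2k}\ge s$. Cutting the pie open at the clockwise endpoint of $A^\ast$ unrolls it to a segment $[0,1]$ whose right end is (part of) the zero-value block $A^\ast$; since $|A^\ast|\ge s$, the ``wrap-around'' separator can always be placed inside $A^\ast$ for free, so a valid cyclic partition exists iff we can place $k$ points of $K$ on $[0,1]$ with consecutive distances at least $s$ (with no constraint on the first or last point). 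This is decided by the left-to-right greedy of Theorem~\ref{thm:mms-algo}: let $x_1$ be the first point of $K$ reached going clockwise from $0$, then repeatedly let $x_{j+1}$ be the first point of $K$ at distance at least $s$ clockwise of $x_j$; answer Yes iff all of $x_1,\dots,x_k$ are found (the leftover block $A^\ast$, of length $\ge s$, then serves as the last separator). A standard exchange argument shows this leftmost greedy packs the maximum number of points on the segment, so the answer is correct; each round uses $O(1)$ Robertson--Webb queries, for $O(k)$ in total.

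Two points require care, and the second is the main obstacle. The first is implementing a greedy step with $\eval$ and $\cut$ queries only: a piece of positive value can be arbitrarily short, and one cannot directly ask for the right endpoint of a zero-value interval (the same phenomenon that makes \findsum{s} and the \cutr{} query non-finite, cf.\ Lemma~\ref{lem:function-impossibility}). The fix is to carry along the value remaining between the current position and $A^\ast$ and to locate the next usable point via a single $\cut$ query whose threshold is derived from that remaining value, using that the zero-value block $A^\ast$ at the right end of the segment supplies slack worth at least $s$, so that a bounded per-step ``overshoot'' never causes the greedy to fail spuriously; one must check that this bookkeeping keeps the total at $O(k)$ queries. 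The second---and genuinely essential---use of the hypothesis $s\le\tfrac{1}{2k}$ is already in the first step: it guarantees that the partition into $2k$ equal arcs has enough resolution either to exhibit a valid partition outright or to expose a zero-value interval of length at least $s$ on which to anchor the greedy. Without it, locating such an anchor seems to require probing the pie at resolution $s$, i.e.\ $\Theta(1/s)$ queries---matching the flavor of the lower bound in Theorem~\ref{thm:pie-1/n-negative}---so $s\le\tfrac{1}{2k}$ is precisely what keeps the query complexity at $O(k)$, and removing it is exactly the obstacle left open.
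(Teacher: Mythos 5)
Your high-level structure matches the paper's: divide the pie into $2k$ equal arcs of length $1/(2k)\ge s$ and \textsc{Eval} each; if all are positive, answer Yes (using the odd arcs as separators); otherwise anchor a reduction to the cake setting on a zero-value arc $A^\ast$ of length $\ge s$. The $2k$-arc trick and the role of $s\le\frac{1}{2k}$ are exactly as in the paper, and your packing reformulation of ``$\mms^k_i>0$'' in terms of $k$ points of the essential support $K$ with pairwise cyclic distances $\ge s$ is a correct and slightly more geometric way to phrase what the paper's cake reduction accomplishes.

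The gap---which you yourself flag as ``the main obstacle''---is genuine and not resolved by the overshoot idea. Your Case~2 greedy asks, at each step, for \emph{the first point of $K$ clockwise of $x_j+s$}, i.e.\ the right endpoint of the maximal zero-value interval beginning at $x_j+s$. This is precisely the $\cutr$-type query that cannot be implemented with finitely many Robertson--Webb queries (Appendix~\ref{sec:cutright}); a $\cut_i(x_j+s,\alpha)$ query with any threshold $\alpha>0$ returns a point strictly past the first point of $K$, and the amount of overshoot in \emph{position} is uncontrolled because it depends on the local density. The extra length-$s$ slack in $A^\ast$ does not help, because the slack it provides is in length at the seam, not a bound on how far a single positive-value cut can carry you past the start of $K$; overshoots can cascade arbitrarily, so the greedy can spuriously fail even when the answer is Yes. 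What actually works---and what the paper does---is to run the greedy in the \emph{opposite} direction, right to left, because there one needs the \emph{left} endpoint of a zero-value interval whose right endpoint is already known, and that is a genuine $\cut_i(0,\cdot)$ query. Concretely, the paper simply removes the zero arc $A^\ast$ to turn the pie into a cake and invokes Theorem~\ref{thm:mms-query-greater} with $r=0$ as a black box; that theorem's algorithm already performs the right-to-left flip (and its correctness proof also handles the subtle point that the pieces found by that greedy have value exactly $0$ but can be perturbed to strictly positive value). Your proof would be repaired by replacing the left-to-right greedy (and the overshoot heuristic) with a direct appeal to Theorem~\ref{thm:mms-query-greater} on the unrolled segment with $A^\ast$ removed, together with the short case analysis (as in the paper) showing that a partition of the pie into $k$ positive-value parts induces one of the cake, since at most two parts can overlap the zero-value arc.
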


\begin{proof}
We first divide the pie into $2k$ equal-length intervals.
If all of these intervals have strictly positive value, we return Yes.
Else, we consider one of the intervals with value $0$, say $[x,y]$.
We turn the pie into a cake by removing the interval $[x, y]$, 
run the algorithm for deciding whether a cake can be partitioned into $k$ 
parts of value strictly greater than~$r$ (Theorem~\ref{thm:mms-query-greater})
with $r=0$, and report the answer.
It is clear that our algorithm requires $O(k)$ queries.

We will now argue that our algorithm is correct. 
If our algorithm succeeds in the first stage (when it partitions the pie into $2k$ intervals), then since $s\le\frac{1}{2k}$, the odd-numbered intervals serve as separators of a $k$-partition of the pie in which all parts have positive value.
Otherwise, let $I_{xy}$ denote the cake
obtained from the pie by removing the interval $[x, y]$.
If the algorithm
for $I_{xy}$ returns Yes, then the respective partition corresponds 
to a $k$-partition of the pie in which all parts have positive value:
the interval $[x, y]$ provides the missing separator. 

Conversely, suppose
the pie admits a partition $\partition$ into $k$ positive-value parts. 
If our algorithm succeeds in the first stage, we are done, so assume this is not the case. 
We will explain how to transform $\partition$ into a partition $\partition'$
of $I_{xy}$ with $k$ positive-value parts; the existence of $\partition'$
ensures that our algorithm for the cake (cf.~Theorem~\ref{thm:mms-query-greater})
is guaranteed to succeed.

Since all parts of $\partition$ have positive value while the interval $[x,y]$ has value $0$, at most two parts of $\partition$ can overlap 
$[x, y]$. 
If there are exactly two such parts (say, $P_j$ and $P_\ell$, with $x\in P_j$, $y\in P_\ell$), then replacing $P_j$ and $P_\ell$ with $P_j\setminus [x, y]$
and $P_\ell\setminus [x, y]$, respectively, gives rise to a desired partition of $I_{xy}$. 
Else, if exactly one part of $\partition$ overlaps $[x, y]$ (say, $P_j$), then
$P_j\setminus[x, y]$ has at most two connected components, and at least one of these components---say, $P'_j$---has positive value, so we can obtain $\partition'$ from $\partition$ by replacing $P_j$ with $P'_j$.
Finally, if no part of $\partition$ overlaps $[x, y]$, we can simply set $\partition'=\partition$. This completes the proof of correctness.
\end{proof}

We now turn to the problem of computing the maximin share and a maximin partition of a pie. Theorem~\ref{thm:pie-mms-greater} obviously rules out the possibility of exact computation.

\begin{corollary}
\label{cor:pie-mms-compute}
Fix any $k\geq 2$. For pie cutting, there is no finite algorithm in the Robertson--Webb model that, given an agent $i$, can either (a) compute $\emph{MMS}^k_i$, or (b) compute a maximin partition into $k$ pieces for $i$.
This holds even when the valuation of this agent is piecewise constant (but not given explicitly).
\end{corollary}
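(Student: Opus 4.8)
The plan is to derive both parts of Corollary~\ref{cor:pie-mms-compute} from Theorem~\ref{thm:pie-mms-greater}, which already tells us that no finite algorithm can decide whether $\mms^k_i > r$ or whether $\mms^k_i = r$. The observation is that an algorithm computing either the number $\mms^k_i$ itself or a maximin $k$-partition would immediately yield such a decision procedure, contradicting Theorem~\ref{thm:pie-mms-greater}.

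For part (a): suppose a finite algorithm $\textsc{Alg}$ computes $\mms^k_i$ exactly. Then, given an agent $i$ and a real number $r$, we run $\textsc{Alg}$ to obtain the value $\mms^k_i$, compare it with $r$, and thereby decide whether $\mms^k_i > r$ (and likewise whether $\mms^k_i = r$) using only finitely many queries. This contradicts Theorem~\ref{thm:pie-mms-greater}, so no such $\textsc{Alg}$ exists, even when the valuation is piecewise constant. For part (b): suppose a finite algorithm computes a maximin $k$-partition $\partition=\{P_1,\dots,P_k\}$ for agent $i$. Recall from the discussion following Proposition~\ref{prop:alt-def} (adapted to pie cutting) that $\mms^k_i = \min_{j\in[k]} v_i(P_j)$ for any maximin partition $\partition$. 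Hence we can obtain $\mms^k_i$ by issuing $k$ additional $\eval$ queries to evaluate $v_i(P_j)$ for each $j$ and taking the minimum. This reduces part (b) to part (a), completing the argument.

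The only subtlety is ensuring consistency of the valuation class across the reduction: Theorem~\ref{thm:pie-mms-greater} establishes its impossibility even when the agent's valuation is piecewise constant but not given explicitly, so the same restriction carries through both reductions without loss, and the corollary inherits this strengthening. I do not anticipate a genuine obstacle here—the result is a direct corollary—so the "hard part" is merely the bookkeeping of noting that a maximin partition is finitely described (each $P_j$ is a single interval) and that evaluating it costs only finitely many queries, so that the composition of the hypothetical algorithm with the evaluation step is still a finite algorithm in the Robertson--Webb model.
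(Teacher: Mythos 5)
Your proof is correct and takes the same route the paper implicitly uses: Theorem~\ref{thm:pie-mms-greater} rules out deciding $\mms^k_i > r$ (or $= r$), so exact computation of $\mms^k_i$ is impossible, and computing a maximin $k$-partition reduces to computing $\mms^k_i$ by issuing $k$ extra $\eval$ queries and taking the minimum (the same observation the paper makes explicitly when deriving Corollary~\ref{cor:mms-no-partition} for cake cutting). Your bookkeeping about finite description of the partition and the carry-through of the piecewise-constant restriction is right and fills in what the paper leaves as ``obvious.''
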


Despite these negative results, we show that it is possible to approximate the maximin share of an agent up to an arbitrary error.
The idea is to mark points on the pie so that any piece between two adjacent marks has value at most $\varepsilon/2$, and, for each piece between two (not necessarily adjacent) marks, try to construct an $s$-separated partition with min-value equal to the value of this piece, by means of a greedy algorithm.

\begin{theorem}
\label{thm:pie-mms-approx}
Fix any $k\geq 2$. For pie cutting, given an agent $i$ and a number $\varepsilon > 0$, it is possible to find a number $r$ such that $\emph{MMS}^k_i-\varepsilon\leq r\leq \emph{MMS}^k_i$, along with an $s$-separated partition with min-value $r$, using $O(1/\varepsilon)$ queries in the Robertson--Webb model.
\end{theorem}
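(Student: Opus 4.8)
The plan is to run a greedy ``try each candidate value'' algorithm, analogous to the one behind Theorem~\ref{thm:mms-query-atleast}, but adapted to the circular structure. First I would place $O(1/\varepsilon)$ marks on the pie so that the value of the piece between any two cyclically adjacent marks is at most $\varepsilon/2$: starting from $0$, repeatedly use $\cut_i$ to find the next point at which an additional $\varepsilon/2$ of value accumulates; since $v_i(\text{whole pie})=1$, this produces at most $\lceil 2/\varepsilon\rceil$ marks, using $O(1/\varepsilon)$ queries. For each mark $m$, regard it as a potential starting point of a maximin partition, and for each ordered pair of marks $(m,m')$, let $q_{m,m'}:=v_i(m,m')$ (measured clockwise) be a candidate min-value. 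For a fixed candidate value $q$ and fixed start point $m$, run the greedy procedure of Theorem~\ref{thm:mms-query-atleast}: from $m$, use $\cut_i$ to carve off a piece of value exactly $q$, insert a separator of length $s$, and repeat; declare success if after carving $k$ pieces the $k$-th separator of length $s$ fits before wrapping back past $m$ (equivalently, the total clockwise length consumed is at most $1$). Among all $(m,q_{m,m'})$ pairs for which the greedy run succeeds, output the largest such $q$ together with the partition that achieved it; this uses $O(k)$ queries per candidate and $O(1/\varepsilon^2)$ candidates in the naive version, but one can reduce to $O(1/\varepsilon)$ total either by binary-searching over the $O(1/\varepsilon)$ candidate values for each of the $O(1/\varepsilon)$ start marks with amortization, or—more simply—by noting that it suffices to try, for each start mark $m$, only the single largest feasible $q$ found by a monotone sweep, so the whole procedure is $O(1/\varepsilon)$ queries.

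The correctness argument has two directions. For $r\le\mms^k_i$: any partition the algorithm outputs is a genuine $s$-separated $k$-partition with min-value $r$, so $\mms^k_i\ge r$. For $r\ge\mms^k_i-\varepsilon$: take a maximin partition $\partition=\{P_1,\dots,P_k\}$ with separators of length exactly $s$ (Proposition-\ref{prop:alt-def}-style normalization for the pie). Let $s_1$ be the separator of $\partition$ immediately preceding $P_1$ in clockwise order; inside $s_1$ there is a mark $m^\star$—wait, a separator of positive length $s$ need not contain a mark, since marks are only $\varepsilon/2$-dense in \emph{value}, not in \emph{length}, and a separator could have value $0$. So instead pick the mark $m^\star$ that is the \emph{first mark at or after} the start of $P_1$ in clockwise order. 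Then $m^\star$ lies at value-distance at most $\varepsilon/2$ clockwise past the left endpoint of $P_1$, hence $v_i$ of the arc from the left endpoint of $P_1$ to $m^\star$ is at most $\varepsilon/2$. Now run the greedy procedure from $m^\star$ with the candidate value $q:=\mms^k_i-\varepsilon/2$. By the standard ``greedy stays ahead'' argument—identical to the proofs of Theorems~\ref{thm:mms-algo} and~\ref{thm:mms-query-atleast}, now applied cyclically—the right endpoint of the greedy's $j$-th piece never lies clockwise-beyond the right endpoint of $P_j$ (using that each $v_i(P_j)\ge\mms^k_i\ge q+\varepsilon/2\ge q$, and that the greedy loses at most $\varepsilon/2$ of ``head start'' at the beginning because of the shift from the left endpoint of $P_1$ to $m^\star$); after $k$ pieces the greedy has consumed no more clockwise length than $\partition$ did through the end of $P_k$, so the final separator of length $s$ fits and the run succeeds. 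Since there is a mark $m'$ with $v_i(m^\star,m')$ within $\varepsilon/2$ of $q$ and $\ge q$—or we can just observe the greedy actually succeeds for the candidate $q'=q_{m^\star,m'}\ge\mms^k_i-\varepsilon$—the algorithm records a feasible value $\ge\mms^k_i-\varepsilon$, hence its output $r$ satisfies $r\ge\mms^k_i-\varepsilon$.

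The main obstacle I anticipate is handling the wrap-around carefully: unlike the cake case, the greedy procedure starts at an \emph{arbitrary} mark $m^\star$ that need not sit inside a separator of $\partition$, so the piece of $\partition$ containing $m^\star$ gets ``split'' and one must argue that the at-most-$\varepsilon/2$ of value thereby sacrificed is exactly what the $\varepsilon$ slack in the approximation pays for. Concretely, one shows that feeding the greedy the reduced target $\mms^k_i-\varepsilon/2$ (rather than $\mms^k_i$) exactly compensates for both the $\varepsilon/2$-granularity of the marks at the start and the fact that candidate values $q_{m,m'}$ are only $\varepsilon/2$-dense, so that some genuinely achievable candidate lies in $[\mms^k_i-\varepsilon,\mms^k_i]$. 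The second, more bookkeeping-heavy point is confirming the $O(1/\varepsilon)$ query bound: marking costs $O(1/\varepsilon)$; and by sweeping candidate values monotonically for each start mark (stopping as soon as the greedy fails, since success is monotone decreasing in $q$) and reusing $\cut_i$ results, the total over all $O(1/\varepsilon)$ start marks can be kept at $O(1/\varepsilon)$, with the factor $k$ absorbed since $k\le 1/s\le$ a constant relative to the $\varepsilon$-dependence—or, if one prefers an explicit dependence, stated as $O(k/\varepsilon)$, which the theorem's $O(1/\varepsilon)$ notation treats $k$ as fixed.
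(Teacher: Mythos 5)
Your approach is close in spirit to the paper's—mark the pie at value-granularity $\varepsilon/2$, then greedily search over candidate min-values derived from the marks—and your second paragraph's ``greedy stays ahead'' argument for correctness is essentially sound (with the fix you yourself anticipated: start the greedy at the first mark after the left endpoint of $P_1$, losing at most $\varepsilon/2$ there, and use a candidate value at most $\mms^k_i-\varepsilon/2$ to absorb that loss).

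However, there is a genuine gap in the query-complexity argument, and your own remarks suggest you sense it. Your greedy carves off pieces of \emph{exact} value $q$ via $\cut_i$ and then inserts separators of \emph{exactly} length $s$. Neither of these lands on a marked point in general, so every greedy step is a fresh $\cut_i$ at an unrecorded location, costing $\Theta(k)$ real queries per (start mark, candidate) pair. You then need to search over $\Theta(1/\varepsilon)$ start marks, and for each, find the largest successful candidate among $\Theta(1/\varepsilon)$ values; a monotone sweep per start mark is $\Theta(1/\varepsilon)$ greedy runs in the worst case, and even binary search is $\Theta(\log(1/\varepsilon))$ runs. That is $\Omega(k/\varepsilon \cdot \log(1/\varepsilon))$ or worse, not $O(1/\varepsilon)$. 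The phrase ``reusing $\cut_i$ results'' doesn't rescue this: the cut points depend jointly on the start mark and the candidate, so there is no systematic reuse. The paper achieves $O(1/\varepsilon)$ by a different design of the greedy: the separator is allowed to have length \emph{at least} $s$ and is extended until it ends at a mark, and each piece is extended until the first mark at which its value reaches the target. Then every endpoint the greedy produces is a mark, and since the marking phase already records $v_i(0,m)$ for every mark $m$, the entire search over all $O(1/\varepsilon^2)$ candidate first-pieces is pure arithmetic—\emph{zero} additional Robertson--Webb queries. This ``snap to marks'' modification of the greedy is the missing idea; without it the query bound in the theorem statement cannot be met.
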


\begin{proof}
Mark points on the pie so that for any two adjacent marks the value of the piece between them is at most $\varepsilon/2$; let $M$ be the set of marked points.
Let $r$ denote the highest min-value among all $s$-separated partitions such that both endpoints of each of the $k$ pieces are in $M$.
We first claim that $r\geq \mms^k_i-\varepsilon$.
Indeed, consider a maximin partition, and shrink each of the $k$ pieces by moving both endpoints inward until they coincide with marked points.
The resulting partition is still $s$-separated, and the value lost by each piece is at most $\varepsilon/2 + \varepsilon/2 = \varepsilon$.
Since the min-value of the original partition is $\mms^k_i$, the min-value of the new partition is at least $\mms^k_i-\varepsilon$.

Our algorithm starts by constructing the set $M$.
Then, for each interval $[x,y]$ 
where $x, y\in M$,
 we attempt to construct an $s$-separated partition with min-value $v_i(x,y)$ that has $[x, y]$ as one of its pieces and such that the endpoints of all $k$ pieces are in $M$. The construction proceeds in a greedy fashion: starting with $[x, y]$ and going clockwise, we add the smallest separator of length at least $s$ that ends at another marked point (say, $x'$), create the next piece by finding the smallest $y'\in M$ such that 
 $v_i(x', y')\ge v_i(x,y)$, add another separator of length at least $s$ that ends at a marked point, and so on. 
If we can add $k$ 
separators without overlapping with the original interval $[x,y]$, the construction succeeds.
We return the highest value $v_i(x,y)$ such that the construction succeeds, along with the corresponding partition.

Note that the algorithm only needs $O(1/\varepsilon)$ queries in order to mark points in the first step; the additional steps do not require further queries.
To see that the algorithm is correct, consider a partition with min-value $r$ defined in the first paragraph, and let $[x,y]$ be a piece in this partition with value exactly $r$.
The algorithm succeeds when it starts with $[x,y]$: this follows from a greedy argument similar to that in the proof of Theorem~\ref{thm:mms-query-atleast}.
Hence the value returned by the algorithm is at least $r\geq \mms^k_i-\varepsilon$.
Moreover, the algorithm necessarily outputs the min-value of an $s$-separated partition, which is at most $\mms^k_i$ by definition.
This concludes the proof.
\end{proof}

Theorem~\ref{thm:pie-mms-approx} provides us with an arbitrarily close additive approximation of each agent's $1$-out-of-$k$ maximin share; combined with Theorem~\ref{thm:pie-mms-algo}, it enables us to compute an allocation in which each agent~$i$ receives value at least $\mms_i^{n+1}-\varepsilon$ using $O(n^2+n/\varepsilon)$ queries.
The dependence on $\varepsilon$ can be improved by turning the pie into a cake; the argument is similar to the one in the proof of Theorem~\ref{thm:pie-mms-algo}.

\begin{theorem}
\label{thm:pie-mms-approx-improved}
For any pie cutting instance with $n$ agents and any $\varepsilon > 0$, it is possible to compute an allocation in which every agent $i$ receives value at least $\emph{MMS}^{n+1}_i-\varepsilon$ using $O(n^2\log(1/\varepsilon))$ queries in the Robertson--Webb model.
\end{theorem}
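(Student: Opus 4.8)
The plan is to reduce the problem to cake cutting by reserving a fixed arc of the pie as the ``wrap‑around'' separator, and then to invoke the efficient cake machinery behind Corollaries~\ref{cor:mms-query-bound} and~\ref{cor:mms-allocation-approx}. Concretely, I would cut the pie at the point $0$ and set aside the arc $[1-s,1]$ of length exactly $s$; what remains is the cake $C:=[0,1-s]$, which we will partition among the $n$ agents into $n$ pieces that are $s$-separated as subintervals of $C$. Two observations make this work. First, any $s$-separated allocation of $C$ lifts to an $s$-separated allocation of the pie: the $n$ pieces lie in $[0,1-s]$, consecutive ones are already separated by at least $s$ inside $C$, and the gap running clockwise from the right endpoint of the last piece (which is at most $1-s$) through the reserved arc $[1-s,1]$ to the left endpoint of the first piece (which is at least $0$) has length at least $1-(1-s)=s$. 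Second, for every agent $i$ one has $\mms_i^{n,s}(0,1-s)\ge \mms^{n+1}_i$, where the left-hand side is the maximin share of $i$ with respect to the cake $C$, $n$ agents, and separation $s$, and the right-hand side is the $1$-out-of-$(n+1)$ maximin share on the pie.

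To establish the second observation, fix a $1$-out-of-$(n+1)$ maximin partition of the pie for agent~$i$, consisting of $n+1$ pieces and $n+1$ separators, each separator of length at least $s$. The removed arc $[1-s,1]$ has length exactly $s$, so it cannot meet two distinct pieces of this partition in sets of positive measure: a connected arc meeting two distinct pieces must contain an entire separator lying between them, and that separator alone accounts for length at least $s$, leaving no room for positive-measure portions of both pieces. Hence at most one of the $n+1$ pieces loses any mass to the removed arc, so at least $n$ of them are, up to a set of measure zero, contained in $C$ at their original positions; taking $n$ of these surviving pieces (merging the separators adjacent to a discarded piece, which only lengthens the gap) yields an $s$-separated $n$-partition of $C$ whose min-value for $i$ is at least $\mms^{n+1}_i$. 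This is exactly the approximation-setting analogue of the ``$n+1$ rather than $n$ pieces'' device in the proof of Theorem~\ref{thm:pie-mms-algo}.

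Given the two observations, the algorithm is: for each agent $i$, run the binary-search procedure behind Corollary~\ref{cor:mms-query-bound}, applied to the cake $C=[0,1-s]$ rather than to $[0,1]$ (Theorem~\ref{thm:mms-query-atleast} and its proof go through verbatim for an arbitrary interval, and one can take $[0,1/n]$ as the initial search range since $\mms_i^{n,s}(0,1-s)\le 1/n$), to obtain $r_i$ with $\mms_i^{n,s}(0,1-s)-\varepsilon\le r_i\le \mms_i^{n,s}(0,1-s)$ using $O(n\log(1/\varepsilon))$ queries; this costs $O(n^2\log(1/\varepsilon))$ queries over all $n$ agents. Then run the Dubins--Spanier-style allocation algorithm of Theorem~\ref{thm:mms-algo} on $C$ with the targets $r_i$: since $r_i\le \mms_i^{n,s}(0,1-s)$, the remark preceding Corollary~\ref{cor:mms-allocation-approx} guarantees agent $i$ receives value at least $r_i$, using an additional $O(n^2)$ queries. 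Viewed on the pie, the output is $s$-separated by the first observation, and agent $i$ gets value at least $r_i\ge \mms_i^{n,s}(0,1-s)-\varepsilon\ge \mms^{n+1}_i-\varepsilon$ by the second. The total query count is $O(n^2\log(1/\varepsilon))$.

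The main obstacle is the inequality $\mms_i^{n,s}(0,1-s)\ge \mms^{n+1}_i$, and in particular the geometric claim that deleting a single arc of length $s$ from the pie can destroy at most one piece of a maximin partition (because the gap between any two distinct pieces already contains a full separator of length at least $s$). Once that is in place, the rest is routine assembly of results already proved for cakes, the only extra care being to note that the targets $r_i$—which are at most the \emph{cake} maximin share by construction—are nonetheless at least $\mms^{n+1}_i-\varepsilon$, which is precisely what the two observations deliver.
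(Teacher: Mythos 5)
Your proof is correct and follows essentially the same route as the paper's: both reduce to cake cutting by deleting an arc of length $s$ (you delete $[1-s,1]$, the paper deletes $[0,s]$, a cosmetic difference), both rely on the key observation that at most one part of a $1$-out-of-$(n+1)$ maximin partition can lose mass to the removed arc so that $\mms_i^{n,s}(\text{cake})\ge\mms_i^{n+1}(\text{pie})$, and both then invoke the binary-search plus Dubins--Spanier-style machinery of Corollary~\ref{cor:mms-allocation-approx}. Your write-up is a bit more explicit about the lifting of a cake allocation back to an $s$-separated pie allocation and about why the removed arc cannot meet two pieces in positive measure, but the underlying argument is identical.
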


\begin{proof}
Cut the pie at point $0$ to turn it into a cake, and remove the interval $[0,s]$.
By Corollary~\ref{cor:mms-allocation-approx}, it is possible to compute an allocation of the remaining cake in which every agent~$i$ receives value at least $\mms^{n,\text{cake}}_i-\varepsilon$ using at most $O(n^2\log(1/\varepsilon))$ queries.
It therefore suffices to show that $\mms^{n,\text{cake}}_i \ge \mms^{n+1,\text{pie}}_i$.
To see this, consider a $1$-out-of-$(n+1)$ maximin partition of the pie, and observe that at most one part overlaps with the interval $[0,s]$.
Therefore, the remaining $n$ parts, along with the $n-1$ separators between them, form an $s$-separated $n$-partition of the cake.
This means that $\mms^{n,\text{cake}}_i \ge \mms^{n+1,\text{pie}}_i$, as desired.
\end{proof}

An immediate corollary of Theorem \ref{thm:pie-mms-approx-improved} 
is that, for each individual agent $i$, it is possible to compute a number $r$ such that 
$\mms^{k+1}_i-\varepsilon\leq r\leq \mms^k_i$
in time $O(k^2\log(1/\varepsilon))$.
However, 
to compute a number $r$ such that 
$\mms^{k}_i-\varepsilon\leq r\leq \mms^k_i$,
we currently need $O(1/\varepsilon)$ queries (Theorem~\ref{thm:pie-mms-approx}).
An interesting question is whether it is possible to make the dependence on $1/\varepsilon$ logarithmic, as is possible for cake cutting (Corollary~\ref{cor:mms-query-bound}).

\begin{open}
In pie cutting, for any $k\ge 2$, is it possible to compute an $\varepsilon$-approximation of $\emph{MMS}^k_i$ using $O(\emph{poly}(k, \log(1/\varepsilon)))$ Robertson--Webb queries?
\end{open}
Notably, a similar exponential gap exists for the $\varepsilon$-approximation of connected envy-free cake-cutting without separation for four or more agents: at least $\Omega(\log(1/\varepsilon))$ queries are required, but the best known algorithm needs $\Theta(1/\varepsilon)$ queries
\citep{branzei2017query}.

Unlike for cake cutting, where there is no hope of giving each agent her $1$-out-of-$n$ maximin share using a finite number of queries (Corollary~\ref{cor:mms-no-partition}), for pie cutting we do not know whether a finite algorithm can ensure every agent her $1$-out-of-$(n+1)$ maximin share.
Nevertheless, we show in Theorem~\ref{thm:mms-2n} that if we further relax our ordinal approximation benchmark, namely, aim to give each agent her $1$-out-of-$2n$ maximin share, finite computation becomes possible. We also prove an analogous result for cake cutting (with $2n$ replaced by $2n-1$) in Theorem~\ref{thm:mms-2n-1}.

\section{Envy-Freeness and Equitability}
\label{sec:EF-EQ}

In this section, we focus on two other well-studied fairness notions: envy-freeness and equitability.

\begin{definition}
An allocation $\allocation = (A_1, \dots, A_n)$ is said to be \emph{envy-free} if $v_i(A_i)\geq v_i(A_j)$ for all $i,j\in N$.
\end{definition}

\begin{definition}
An allocation $\allocation = (A_1, \dots, A_n)$ is said to be \emph{equitable} if $v_i(A_i)= v_j(A_j)$ for all $i,j\in N$.
\end{definition}

An empty allocation is trivially envy-free and equitable, but leaves every agent empty-handed. Thus, envy-freeness and equitability should be combined with other axioms that discourage discarding the entire cake. To this end, 
we will now define a class of allocations (and partitions) that do not waste the cake needlessly.

\begin{definition}
Given a separation parameter $s$, we say that an allocation or partition is \emph{exactly $s$-separated} if any two consecutive pieces are separated by length exactly $s$ (and in case of a cake, the first piece starts at $0$, while the last piece ends at $1$).
\end{definition}

We note that an envy-free or equitable allocation that is exactly $s$-separated may still
leave all agents with zero value: for example, this can happen if all agents have identical valuations and their entire value is packed within an interval of length at most $s$.

\subsection{Existence}

The first question we investigate is whether an exactly $s$-separated envy-free or equitable allocation always exists.
We first consider cake cutting and answer this question in the affirmative for both notions, by adapting the arguments of Simmons \citep{Su99} and \cite{Cheze17} from the setting without the separation requirement.
Since the former argument is well-known in the fair division literature, we only present a sketch here and refer to Section~3 of Su's paper for more details.

\begin{theorem}
\label{thm:envy-free}
For any cake cutting instance, there exists an exactly $s$-separated envy-free allocation.
\end{theorem}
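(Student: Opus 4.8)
The plan is to adapt Simmons' topological argument via Sperner's lemma, following Su's treatment of envy-free cake cutting \citep{Su99}, with the only change being the domain over which we search. Write $L := 1-(n-1)s$, which is strictly positive since $s\in(0,\tfrac{1}{n-1})$. An exactly $s$-separated allocation, with pieces listed from left to right, is completely determined by the vector of piece lengths $(\ell_1,\dots,\ell_n)$ together with the bijection assigning pieces to agents: piece $1$ is $[0,\ell_1]$, the first separator is $[\ell_1,\ell_1+s]$, piece $2$ is $[\ell_1+s,\ell_1+s+\ell_2]$, and so on. Hence the set of exactly $s$-separated \emph{partitions} is naturally identified with the simplex $\Delta := \{(\ell_1,\dots,\ell_n) : \ell_i \ge 0,\ \sum_i \ell_i = L\}$, which is affinely equivalent to the standard $(n-1)$-simplex. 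Crucially, moving within $\Delta$ automatically keeps every separator at length exactly $s$, so each partition we encounter, and the one we ultimately output, is exactly $s$-separated.

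Next I would run Simmons' argument on $\Delta$, first under the simplifying assumption that every density $f_i$ is strictly positive (``hungry'' agents). Triangulate $\Delta$ finely and equip it with an owner-labeling assigning to each vertex one agent so that the $n$ vertices of every elementary simplex carry $n$ distinct owners (the standard combinatorial ingredient; see \citep{Su99}), and then label each vertex secondarily by the piece its owner most prefers in the corresponding partition. Because each $f_i$ is strictly positive, every non-degenerate piece has positive value, so on the face of $\Delta$ on which the pieces in a set $S$ are degenerate no owner names a piece in $S$; this is exactly the boundary condition needed for Sperner's lemma. Sperner's lemma then yields an elementary simplex whose vertices realize all $n$ secondary labels. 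Refining the triangulation and using compactness of $\Delta$, I would extract a partition $P^\ast=(\ell_1^\ast,\dots,\ell_n^\ast)$ together with a permutation $\pi$ of $[n]$ such that, for each $j$, agent $\pi(j)$ weakly prefers piece $j$ of $P^\ast$ to every other piece of $P^\ast$ (continuity of $v_i$ in the cut positions, which holds since $v_i$ is an integral, is what makes the limit go through). Allocating piece $j$ to agent $\pi(j)$ gives an exactly $s$-separated envy-free allocation.

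Finally I would remove the hungriness assumption by a routine limiting argument: for each $k$, replace $f_i$ by $f_i+\tfrac1k$ (rescaled to integrate to $1$ if one insists on the normalization convention; envy-freeness is invariant under positive rescaling of each agent's valuation), obtain from the previous step an exactly $s$-separated envy-free allocation $\mathbf{A}^{(k)}$, and pass to a subsequence along which the piece-to-agent assignment is fixed and the length vectors converge in the compact simplex $\Delta$. The limiting allocation $\mathbf{A}$ is again exactly $s$-separated (a closed condition), and since $v_i^{(k)}\to v_i$ uniformly on subintervals while the cut points converge, the inequalities $v_i^{(k)}(A_i^{(k)})\ge v_i^{(k)}(A_j^{(k)})$ pass to the limit, so $\mathbf{A}$ is envy-free.

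The only place where the separation requirement interacts with the classical proof is the observation that replacing the simplex of contiguous $n$-partitions with the simplex $\Delta$ of $s$-separated ones disturbs nothing: Sperner's lemma and the owner-labeling are purely combinatorial, and the sole geometric input---that a hungry agent never names a degenerate piece on the corresponding boundary face---holds verbatim. I therefore do not expect a genuine obstacle, only bookkeeping: checking that the separators stay at length exactly $s$ throughout (immediate from the parameterization), that the limiting length vector still lies in $\Delta$, and that the matching step assigns distinct agents to distinct pieces (the rainbow elementary simplex has $n$ distinct owners, one per secondary label). The part that a reader should be warned about is precisely the hungriness reduction, since without it the boundary condition can fail when an agent's entire value happens to lie inside the separator regions of a degenerate partition.
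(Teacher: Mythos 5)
Your proposal is correct, and it follows the same high-level strategy as the paper: identify the space of exactly $s$-separated partitions with the simplex $\{(\ell_1,\dots,\ell_n) : \ell_i\ge 0,\ \sum_i\ell_i = 1-(n-1)s\}$, run Simmons' Sperner-lemma argument on that simplex, and extract an envy-free matching in the limit of finer triangulations. The one place you diverge from the paper is in how you secure the Sperner boundary condition for agents who are not ``hungry''. You first assume strictly positive densities (so that an owner of a boundary vertex never names a degenerate piece), and then strip off that assumption at the end by perturbing each density to $f_i + \tfrac1k$, extracting $s$-separated envy-free allocations $\mathbf{A}^{(k)}$, and passing to a convergent subsequence in the compact simplex with a fixed piece-to-agent matching. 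The paper instead dispenses with the hungriness reduction entirely by building it into the labeling rule: whenever an agent has several favorite pieces, break ties in favor of a non-empty one. Since $s<\tfrac1{n-1}$ every partition in the simplex has at least one non-empty piece, and monotonicity guarantees that among the agent's favorites there is always a non-empty one, so the Sperner boundary condition holds directly without any perturbation. The paper's route is shorter and avoids a second compactness step; your route is a sound and arguably more mechanical adaptation of the hungry-agents version of Su's argument, at the cost of one extra limiting argument. Both are complete proofs, and your worry that the hungriness reduction is the delicate part is exactly right---that is precisely the step the paper's tie-breaking device eliminates.
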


\begin{proof}
Consider all exactly $s$-separated partitions of the cake into $n$ pieces.
The space of all such partitions corresponds to the standard simplex in $\mathbb{R}^n$, with the $i$-th coordinate corresponding to the length of the $i$-th piece; the simplex is scaled down so that its maximum coordinate is $1-(n-1)s$.
In other words, each partition is represented by an $n$-tuple $(b_1,\dots,b_n)$ with $\sum_{i=1}^n b_i = 1 - (n-1)s$, where the $i$-th piece has length~$b_i$.\footnote{In cake cutting without separation \citep[Sec.~3]{Su99}, $s = 0$ and therefore $\sum_{i=1}^n b_i = 1$.} The figure below illustrates, for the case $n=3$ and $s=0.1$, the partition with $(b_1,b_2,b_3) = (0.2,0.25,0.35)$.

\begin{center}
\begin{tikzpicture}[scale=0.5]
\draw (0,-0.3) -- (0,0.3);

\draw (0,0) -- (2,0);
\draw[line width=3mm] (2,0) -- (3,0);
\node at (1,1) {$b_1$};

\draw (3,0) -- (5.5,0);
\draw[line width=3mm] (5.5,0) -- (6.5,0);
\node at (4,1) {$b_2$};

\draw (6.5,0) -- (10,0);
\node at (8,1) {$b_3$};

\draw (10,-0.3) -- (10,0.3);
\end{tikzpicture}
\end{center}

Consider a triangulation of this simplex by \emph{barycentric subdivision}, and assign each vertex of this triangulation to one of the agents, so that for each small subsimplex it holds that each agent is assigned exactly one vertex of this subsimplex.
(See Section~4 of \citet{Su99}'s paper for details on this step.)
Label each vertex with the index of its assigned agent's favorite piece in the corresponding partition.
If the agent has two or more favorite pieces in the partition, 
then choose one such piece arbitrarily, as long as it is non-empty.
Note that for every vertex its corresponding partition contains at least one non-empty piece, because of the assumption that $s < \frac{1}{n-1}$.
Moreover, since the valuations are monotonic, each agent has at least one non-empty favorite piece. Therefore, this tie-breaker is feasible.
The tie-breaker ensures that the resulting labeling satisfies the conditions of \emph{Sperner's lemma}: each vertex is labeled with an index of a non-zero coordinate.
Therefore, the triangulation has a \emph{Sperner subsimplex}---a subsimplex all of whose
labels are different. Repeating this process with finer triangulations gives an infinite sequence of smaller Sperner subsimplices. This sequence has a subsequence that converges to a single point. By the continuity
of valuations, this limit point corresponds to a partition in
which each agent prefers a different piece, thereby inducing an envy-free allocation.
\end{proof}

Next, we adapt the proof of \citet{Cheze17} to show the existence of an equitable allocation.
Unlike for envy-freeness, for equitability we can additionally choose the order in which the agents are allocated pieces of the cake from left to right.

\begin{theorem}
\label{thm:equitable}
For any cake cutting instance and any ordering of the agents, there exists an equitable
exactly $s$-separated
allocation in which the agents are allocated the pieces from left to right according to the ordering.
\end{theorem}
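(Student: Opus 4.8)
The plan is to adapt the moving‑target argument of \citet{Cheze17} for the no‑separation case. By relabeling, assume the prescribed order is $1,2,\dots,n$. For a target value $t\ge 0$ I would build an exactly $s$-separated partition $\partition(t)=(A_1(t),\dots,A_n(t))$ greedily from left to right: set $A_1(t)=[0,d_1]$ with $d_1$ the leftmost point such that $v_1(0,d_1)=t$; having formed $A_1,\dots,A_{i-1}$ ending at $d_{i-1}$, insert a gap of length $s$ and take $A_i(t)=[d_{i-1}+s,d_i]$ with $d_i$ the leftmost point such that $v_i(d_{i-1}+s,d_i)=t$, for $i=2,\dots,n-1$; finally let $A_n(t)=[d_{n-1}+s,1]$ be the remainder. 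This is well defined precisely while all $n-1$ greedy cuts exist and $d_{n-1}+s\le 1$; write $t_{\max}$ for the supremum of such $t$ (positive, since $s<\tfrac1{n-1}$ makes the remainder $[(n-1)s,1]$ nondegenerate at $t=0$). By construction $v_i(A_i(t))=t$ for $i\le n-1$, so $\partition(t)$ is an equitable allocation in the given order if and only if $\phi(t):=v_n(A_n(t))$ equals $t$. Since $A_n(t)$ shrinks as $t$ grows, $\phi$ is non‑increasing, and $\phi(0)\ge 0$; if $\phi$ were continuous one could finish by the intermediate value theorem after checking $\phi(t)<t$ somewhere.

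The obstacle is that $\phi$ need not be continuous: where some $v_i$ has an interval of zero density, the leftmost point realizing value $t$ jumps, so $\phi$ can drop discontinuously past the diagonal. I would get around this by first establishing the theorem when every density $f_i$ is strictly positive, and then passing to a limit. In the strictly positive case each $v_i(x,\cdot)$ is a continuous strictly increasing bijection, so every cut point $d_i(\cdot)$ is continuous and strictly increasing in $t$, hence $\phi$ is continuous and strictly decreasing on $[0,t_{\max}]$; moreover $d_{n-1}(t)+s$ increases continuously to $1$ at $t=t_{\max}$, so $\phi(t_{\max})=v_n(1,1)=0<t_{\max}$. The intermediate value theorem then yields $t^{\circ}\in[0,t_{\max})$ with $\phi(t^{\circ})=t^{\circ}$, and $\partition(t^{\circ})$ is the desired allocation. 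For general integrable monotonic densities, put $f_i^{(m)}:=(1-\tfrac1m)f_i+\tfrac1m$, which is strictly positive and integrates to $1$; let $\allocation^{(m)}$ be an equitable exactly $s$-separated allocation in the given order for the valuations $v_i^{(m)}$, with common value $t^{(m)}\in[0,1]$, obtained as above. Exactly $s$-separated partitions in the fixed order form a compact set (coordinatize by the piece lengths $(b_1,\dots,b_n)$ with $b_i\ge 0$ and $\sum_i b_i=1-(n-1)s$), so along a subsequence $\allocation^{(m)}\to\allocation^{*}$ and $t^{(m)}\to t^{*}$; being in the given order and exactly $s$-separated are closed conditions, so $\allocation^{*}$ retains them. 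Finally $v_i$ is continuous in the endpoints of a piece, so $v_i(A_i^{(m)})\to v_i(A_i^{*})$, whence $t^{(m)}=v_i^{(m)}(A_i^{(m)})=(1-\tfrac1m)v_i(A_i^{(m)})+\tfrac1m\,|A_i^{(m)}|\to v_i(A_i^{*})$ for every $i$; thus $v_i(A_i^{*})=t^{*}$ for all $i$, and $\allocation^{*}$ is equitable.

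The step I expect to be most delicate is the boundary analysis in the strictly positive case: one must verify that the greedy construction fails only through the geometric constraint $d_{n-1}+s\le 1$ --- never because some agent ``runs out of value'' first --- so that $\phi$ really is defined and continuous on all of $[0,t_{\max}]$ and genuinely falls to $0$ there. (One could instead avoid the perturbation and work directly with the possibly discontinuous $\phi$: at the critical level $t^{*}=\sup\{t:\phi(t)\ge t\}$, a downward jump of $\phi$ reflects the freedom to slide one or more of the pieces $A_1,\dots,A_{n-1}$ rightward through intervals of zero density without changing their values, and interpolating this slide gives a continuous family of admissible partitions whose last‑piece value sweeps continuously across $t^{*}$; the delicate point on that route is making the slides through several pieces compose consistently. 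Either way, the argument is an adaptation of \citet{Cheze17}.)
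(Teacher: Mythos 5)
Your proof is correct, but it takes a genuinely different route from the paper's. The paper adapts Chèze's Borsuk--Ulam argument: it parametrizes exactly $s$-separated ordered partitions by a sphere $S^{n-1}$ (scaled so $\sum e_i^2 = 1-(n-1)s$), builds an odd continuous map $g\colon S^{n-1}\to\mathbb{R}^{n-1}$ from the value differences between consecutive agents and agent $1$, and invokes Borsuk--Ulam to produce a zero of $g$, which is precisely an equitable allocation. Your approach instead runs a greedy left-to-right moving-knife at level $t$ to define $\phi(t)=v_n(A_n(t))$, closes a fixed-point $\phi(t)=t$ by the intermediate value theorem under strictly positive densities, and then recovers the general case by perturbing $f_i\mapsto (1-\tfrac1m)f_i+\tfrac1m$ and passing to a subsequential limit in the compact simplex of piece-length vectors. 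Both are sound. Your route is more constructive and elementary (IVT rather than Borsuk--Ulam) and directly exhibits the allocation, at the cost of the extra perturbation-and-limit step to deal with discontinuities of $\phi$; the paper's route handles arbitrary integrable densities in one shot via the topological machinery. The ``delicate step'' you flag -- verifying that with strictly positive densities the greedy construction collapses only via $d_{n-1}(t)+s\to 1$, so that $\phi(t_{\max})=0<t_{\max}$ -- does indeed go through: the admissible set of $t$ is a closed interval $[0,t_{\max}]$, and if instead the cut-existence condition $v_{i+1}\bigl(d_i(t_{\max})+s,1\bigr)\ge t_{\max}$ were the tight one for some $i\le n-2$, then $d_{i+1}(t_{\max})=1$ and $d_{i+1}(t_{\max})+s>1$ would already violate an earlier geometric constraint, a contradiction; hence the only binding constraint at $t_{\max}$ is $d_{n-1}(t_{\max})+s=1$.
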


\begin{proof}
Assume without loss of generality that the desired agent ordering is $1,2,\dots,n$.
Recall that the density function of agent~$i$ is $f_i$.
Consider the sphere 
$$
S^{n-1} := \left\{e=(e_1,\dots,e_n)\in\mathbb{R}^n\,\middle|\, \sum_{i=1}^n e_i^2 = 1 - (n-1)s\right\},
$$ 
and the function $g:S^{n-1}\rightarrow \mathbb{R}^{n-1}$ that maps each point $e\in S^{n-1}$ to 
$$
g(e) = (G_1(e),\dots,G_{n-1}(e)),
$$
where 
$$
G_i(e) = \text{sgn}(e_{i+1})\cdot\int_{e_1^2 + ... + e_i^2 + is}^{e_1^2 + ... + e_i^2 + e_{i+1}^2 + is} f_{i+1}(x)dx - \text{sgn}(e_1)\cdot \int_0^{e_1^2} f_1(x)dx
$$
for $i=1,2,\dots,n-1$.

The function $g$ is continuous, so by the Borsuk--Ulam theorem, there exists $\hat{e}\in S^{n-1}$ such that $g(-\hat{e}) = g(\hat{e})$.
Moreover, one can check that $g(-e)=-g(e)$ for all $e\in S^{n-1}$, and in particular $g(-\hat{e}) = -g(\hat{e})$.
Hence $g(\hat{e})$ is the zero vector, meaning that
$$
\text{sgn}(\hat{e}_{i+1})\cdot\int_{\hat{e}_1^2 + ... + \hat{e}_i^2 + is}^{\hat{e}_1^2 + ... + \hat{e}_i^2 + \hat{e}_{i+1}^2 + is} f_{i+1}(x)dx = \text{sgn}(\hat{e}_1)\cdot \int_0^{\hat{e}_1^2} f_1(x)dx
$$
for every $i=1,2,\dots,n-1$.
Since both integrals are nonnegative, it must be the case that 
$$
\int_{\hat{e}_1^2 + ... + \hat{e}_i^2 + is}^{\hat{e}_1^2 + ... + \hat{e}_i^2 + \hat{e}_{i+1}^2 + is} f_{i+1}(x)dx =  \int_0^{\hat{e}_1^2} f_1(x)dx
$$
for each $i$.
It follows that if we allocate the piece between $\hat{e}_1^2+\dots+\hat{e}_{i-1}^2+(i-1)s$ and $\hat{e}_1^2+\dots+\hat{e}_{i-1}^2+\hat{e}_i^2+(i-1)s$ to agent $i$, we obtain an equitable exactly $s$-separated allocation.
\end{proof}

The existence guarantees carry over to pie cutting. 
Indeed, in order to obtain an envy-free (respectively, equitable) exactly $s$-separated allocation, we can simply insert a separator of length $s$ at an arbitrary position in the pie and apply Theorem~\ref{thm:envy-free} (respectively, Theorem~\ref{thm:equitable}) on the remaining pie (treated as a cake).

We now explore the relationship between envy-freeness/equitability and maximin share fairness. Without separation, it is known that any complete envy-free allocation is proportional, and hence also MMS-fair. 
In Theorem~\ref{thm:ef-implies-mms}, we generalize this observation to the setting with separation.
For this, we need the following lemma.

\begin{lemma}
\label{lem:complete-implies-mms}
(a)
In any exactly $s$-separated partition of a cake into $n$ pieces, each agent~$i$ has value at least $\emph{MMS}_i^{n,s}$ for at least one piece.

(b)
In any exactly $s$-separated partition of a pie into $n$ pieces, each agent~$i$ has value at least $\emph{MMS}_i^{n+1,s}$ for at least one piece.
\end{lemma}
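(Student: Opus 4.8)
The plan is to prove both parts by a pigeonhole-style argument: in any exactly $s$-separated partition into $n$ pieces (for a cake) or $n$ pieces (for a pie), some piece must be at least as good for agent $i$ as the worst piece in $i$'s own maximin partition. For part (a), fix an agent $i$ and her maximin partition $\partition^* = \{P^*_1, \dots, P^*_n\} \in \partitions'_{n,s}(0,1)$, so $v_i(P^*_j) \ge \mms^{n,s}_i$ for every $j$. Now take an arbitrary exactly $s$-separated partition $\partition = \{P_1, \dots, P_n\}$ of the cake, with both families listed left to right. The key observation is that since both partitions are exactly $s$-separated, cover $[0,1]$ in the same ``packing'' sense (each uses exactly $n-1$ separators of length $s$), the left endpoints interleave: I would argue that there exists some index $j$ such that $P^*_j \subseteq P_j$. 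Indeed, compare the right endpoint of the $j$-th piece of $\partition$ with that of $\partition^*$; if $\partition$'s $j$-th right endpoint is never to the left of $\partition^*$'s for all $j$, then by a telescoping argument (each contributes the same separator length $s$) some piece $P_j$ must contain $P^*_j$ entirely, giving $v_i(P_j) \ge v_i(P^*_j) \ge \mms^{n,s}_i$.

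More carefully, I would set up the comparison inductively. Let the right endpoint of the $j$-th piece of $\partition$ be $y_j$ and of $\partition^*$ be $y^*_j$. Since both partitions start at $0$, use separators of length exactly $s$, and end at $1$, we have $\sum_j (\text{length of } P_j) = \sum_j (\text{length of } P^*_j) = 1 - (n-1)s$. If $y_j \ge y^*_j$ for all $j \le n-1$, then consider the smallest $j$ for which the $j$-th piece of $\partition^*$ is not already contained in the corresponding region; one shows that there must be some index where $P^*_j \subseteq P_j$ by a discrete intermediate-value/counting argument on the cumulative lengths. The cleanest formulation: let $j$ be the largest index with $y_{j-1} \le $ (left endpoint of $P^*_j$) $= y^*_{j-1} + s$; I'd verify such comparisons force $P^*_j \subseteq P_j$. (I expect the referee-facing version to just say: "a straightforward induction on $j$, analogous to the argument in the proof of Theorem~\ref{thm:mms-algo}, shows that some piece $P_j$ satisfies $P^*_j \subseteq P_j$.")

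For part (b), the pie version, the extra separator is exactly what forces the $n+1$ in the benchmark. Fix agent $i$ and her $1$-out-of-$(n+1)$ maximin partition $\partition^* = \{P^*_1, \dots, P^*_{n+1}\}$, so $v_i(P^*_\ell) \ge \mms^{n+1,s}_i$ for all $\ell$. Take an arbitrary exactly $s$-separated partition $\partition = \{P_1, \dots, P_n\}$ of the pie into $n$ pieces with $n$ separators. The idea mirrors the proof of Theorem~\ref{thm:pie-mms-algo}: pick any separator of $\partition$, say the one ending at point $p$, and cut the pie open at $p$ to obtain a cake; this destroys none of $\partition$'s pieces but may split one of $\partition^*$'s $n+1$ pieces. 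The surviving pieces of $\partition^*$ — at least $n$ of them are fully intact — together with their interior separators form an exactly $s$-separated partition of a sub-cake into (at least) $n$ pieces, each of value $\ge \mms^{n+1,s}_i$. Then apply the cake argument from part (a) (or its telescoping core) to conclude that some $P_j$ contains an intact $P^*_\ell$, hence $v_i(P_j) \ge \mms^{n+1,s}_i$.

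The main obstacle is making the ``some piece of $\partition$ contains some piece of $\partition^*$'' claim fully rigorous: one must handle the interleaving of endpoints carefully, including degenerate cases where pieces have length $0$, and in the pie case the choice of where to cut open so that a whole piece of $\partition^*$ survives inside a single piece of $\partition$. I would handle this by the same left-to-right greedy comparison used in Theorems~\ref{thm:mms-algo} and~\ref{thm:pie-mms-algo} — tracking that $\partition$'s cumulative right endpoints never fall behind $\partition^*$'s once aligned at a common starting separator — rather than reproving it from scratch.
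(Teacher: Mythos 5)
Your high-level plan matches the paper's --- show that some piece of $\partition$ contains a piece of the maximin partition, and reduce the pie case to the cake case by cutting open at a separator --- but the crucial step in part (a) is not carried out correctly. You appeal to ``the same left-to-right greedy comparison used in Theorems~\ref{thm:mms-algo} and~\ref{thm:pie-mms-algo}, tracking that $\partition$'s cumulative right endpoints never fall behind $\partition^*$'s.'' That monotone-domination property holds there only because the algorithm constructs $\partition$ greedily; for an \emph{arbitrary} exactly $s$-separated $\partition$ the cumulative endpoints can fall behind $\partition^*$'s and catch up later, so this is simply not the right tool. Your alternative (``largest index with $y_{j-1}\le y^*_{j-1}+s$'') has a spurious $+s$, and even after removing it you have verified only one of the two endpoint inequalities needed for $P^*_j\subseteq P_j$. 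What actually works is the discrete intermediate-value argument you name but never pin down: write $\partition$ as $(z_0,\dots,z_n)$ and $\partition^*$ as $(y_0,\dots,y_n)$ with $z_0=y_0=-s$, $z_n=y_n=1$, $P_j=[z_{j-1}+s,z_j]$, and mark $j$ as $L$ if $z_j\le y_j$ and $R$ if $z_j\ge y_j$; since both $0$ and $n$ carry both marks, some $j\in[n]$ has $j-1$ marked $L$ and $j$ marked $R$, which gives $P_j\supseteq P^*_j$. That is the paper's proof, and it is not an ``induction analogous to Theorem~\ref{thm:mms-algo}.''

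For part (b), cutting the pie open inside a separator of $\partition$ is the right move, but you cannot then just ``apply the cake argument from part (a)'': the crossing argument needs the two $n$-piece partitions to span the same interval, whereas the surviving pieces of $\partition^*$ in general span a strictly smaller sub-interval (and if the cut lands inside a \emph{separator} of $\partition^*$, all $n+1$ pieces survive, not $n$). The paper closes both gaps at once: it removes one piece of $\partition^*$ unconditionally --- the split one, or the nearest one otherwise --- and then \emph{extends} the two neighboring pieces so that the resulting $n$-piece partition of $[0,1-s]$ again starts at $0$ and ends at $1-s$; extending can only raise values, so each piece still has value at least $\mms_i^{n+1,s}$, and only then is part (a)'s crossing argument applicable. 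Your sketch skips both the normalization/extension and the no-split case, so the reduction to (a) is incomplete as written.
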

\begin{proof}
(a)
We can represent an exactly $s$-separated partition $\partition=\{P_1, \dots, P_n\}$ 
as a list $(x_0, x_1, \dots, x_n)$ where $x_0=-s$ and $x_n=1$, so that $P_j=[x_{j-1}+s, x_j]$ for each $j\in [n]$.
Now, fix an arbitrary partition $\partition$ represented as $(z_0, z_1, \dots, z_n)$,
and let $(y_0, y_1, \dots, y_n)$ represent some 
exactly $s$-separated maximin partition of agent $i$ 
(such a partition exists by Proposition \ref{prop:alt-def}).
Mark each integer $j\in\{0,\ldots,n\}$ by $L$ if $z_j\leq y_j$ and by $R$ if $z_j\geq y_j$. Note that $0$ and $n$ are marked both $L$ and $R$. 
Therefore, there exists at least one $j\in [n]$ such that $j-1$ is marked $L$ and $j$ is marked $R$.
This means that the piece $[z_{j-1}+s, z_j]$ contains the piece $[y_{j-1}+s, y_j]$, whose value is at least $\mms_i^{n,s}$ since it is a piece in a maximin partition.

(b) Consider an exactly $s$-separated partition $\partition=\{P_1, \dots, P_n\}$ of the pie. We can assume without loss of generality that the leftmost point of $P_1$ is $0$ and hence the rightmost point of $P_n$ is $1-s$. 
Let $\partition'$ be some exactly $s$-separated $1$-out-of-$(n+1)$ maximin partition of the pie.
Remove from the pie the interval $[1-s,1]$, and remove from $\partition'$ the (at most one) part that overlaps $[1-s,1]$; if no such part exists, remove a part closest to $[1-s,1]$. 
Extend the parts adjacent to the removed part so that one of them starts at $0$ and the other one ends at $1-s$. The resulting partition $\partition''$ has $n$ parts, with the first part starting at $0$ and the last part ending at $1-s$, and the value of each part is at least $\textrm{MMS}_i^{n+1,s}$.
Now, the situation (restricted to $[0, 1-s]$) is exactly as in part (a), and the same proof shows that at least one part of $\partition$ contains a part of~$\partition''$.
\end{proof}

The guarantee in part (b) cannot be improved to $\mms_i^{n,s}$.
Indeed, consider Figure~\ref{fig:pie-no-mms} with $s = 1/2-\varepsilon$.
For Alice each of the two parts of length $\varepsilon$ in the left figure has value $1/2$, which means that $\mms_\textrm{Alice}^{n,s} = 1/2$, while each of the two parts of length $\varepsilon$ in the right figure has value $0$ to her.

\begin{theorem}
\label{thm:ef-implies-mms}
In any exactly $s$-separated envy-free allocation, the value of each agent~$i$ is

(a) at least $\emph{MMS}_i^{n,s}$ in case of a cake;

(b) at least $\emph{MMS}_i^{n+1,s}$ in case of a pie.
\end{theorem}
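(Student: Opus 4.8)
The plan is to deduce both parts directly from Lemma~\ref{lem:complete-implies-mms} together with the defining property of envy-freeness. Fix an exactly $s$-separated envy-free allocation $\allocation = (A_1, \dots, A_n)$, and let $\partition = \{P_1, \dots, P_n\}$ be the associated partition (i.e., $P_j$ is the piece allocated to the agent who receives the $j$-th piece from the left; note that an exactly $s$-separated allocation induces an exactly $s$-separated partition). Consider an arbitrary agent $i\in N$. By Lemma~\ref{lem:complete-implies-mms}(a) in the cake case (respectively, Lemma~\ref{lem:complete-implies-mms}(b) in the pie case), there is some piece $P_j$ in this partition with $v_i(P_j) \ge \mms_i^{n,s}$ (respectively, $v_i(P_j) \ge \mms_i^{n+1,s}$). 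Since $\allocation$ is envy-free, agent $i$ values her own piece $A_i$ at least as much as any other piece in the allocation, and in particular $v_i(A_i) \ge v_i(P_j)$. Chaining the two inequalities yields $v_i(A_i) \ge \mms_i^{n,s}$ in case (a) and $v_i(A_i) \ge \mms_i^{n+1,s}$ in case (b), which is exactly the claim.

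The only point that requires a moment's care is the identification of the envy-free allocation with an exactly $s$-separated \emph{partition} in the sense required by Lemma~\ref{lem:complete-implies-mms}: the lemma is phrased for partitions $\{P_1,\dots,P_n\}$ with pieces listed left to right and consecutive pieces separated by exactly $s$ (and, for a cake, the first piece starting at $0$ and the last ending at $1$), which is precisely the content of ``exactly $s$-separated'' from the definition preceding the lemma. Since an exactly $s$-separated allocation is by definition built on such a partition, the hypothesis of the lemma is met verbatim, so no extra argument is needed here.

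I do not anticipate any real obstacle: the theorem is essentially an immediate corollary of the lemma, with envy-freeness supplying the one extra inequality $v_i(A_i)\ge v_i(P_j)$. If anything, the ``hard part'' was already isolated and discharged in Lemma~\ref{lem:complete-implies-mms} (the Sperner-flavored index-marking argument that locates a piece of the given partition containing a piece of a maximin partition), so the proof of Theorem~\ref{thm:ef-implies-mms} itself is a two-line combination. One could optionally remark that the same reasoning shows more generally that any exactly $s$-separated allocation in which each agent weakly prefers her own piece to every other piece is MMS-fair (cake) or $1$-out-of-$(n+1)$ MMS-fair (pie), but this is not needed for the statement as given.
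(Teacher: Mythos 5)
Your proof is correct and follows exactly the same route as the paper: invoke Lemma~\ref{lem:complete-implies-mms} to locate a piece of the partition worth at least $\mms_i^{n,s}$ (resp.\ $\mms_i^{n+1,s}$) to agent $i$, then use envy-freeness to conclude that $i$'s own piece is worth at least as much. The brief remark about identifying the envy-free allocation with its underlying exactly $s$-separated partition is a reasonable clarification but does not change the argument.
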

\begin{proof}
In any envy-free allocation, the piece allocated to agent $i$ is at least as valuable to $i$ as each of the $n$ pieces in the allocation. 
Therefore, by Lemma~\ref{lem:complete-implies-mms}, this allocated piece yields value at least $\mms_i^{n,s}$ in case of a cake and $\mms_i^{n+1,s}$ in case of a pie. 
\end{proof}

Similarly to the example following Lemma~\ref{lem:complete-implies-mms}, the bound $\mms_i^{n+1,s}$ in part (b) cannot be improved to $\mms_i^{n,s}$.

\subsection{Computation}

Having established that envy-free/equitable exactly $s$-separated allocations of a cake always exist, it is natural to ask whether they can be computed by a finite algorithm. 
Unfortunately, it turns out that the answer to this question is `no'.
We will show that this is the case even when there are only two agents with identical valuations; note that in this case an exactly $s$-separated allocation
is envy-free if and only if both parts have exactly the same value, 
so in particular envy-freeness is equivalent to equitability.

The argument for the case of cake is simple:
by Theorem~\ref{thm:ef-implies-mms}, if we could compute an envy-free 
(or, equivalently, equitable) allocation in this setting, we could also deduce the maximin share with respect to the common valuation, thereby contradicting Theorem~\ref{thm:mms-impossibility}. 

\begin{corollary}
\label{cor:EF-EQ-impossibility}
For cake cutting, there is no algorithm that can always compute an envy-free or an equitable exactly $s$-separated allocation by asking the agents a finite number of Robertson--Webb queries.
This holds even when $n=2$ and the agents' valuations are identical, piecewise constant, and strictly positive (but not given explicitly).
\end{corollary}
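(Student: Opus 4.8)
The plan is to reduce directly from Theorem~\ref{thm:mms-impossibility}, exploiting the special structure of two agents with identical valuations, exactly as foreshadowed in the paragraph preceding the corollary. Suppose for contradiction that a finite algorithm $\mathcal{A}$ computes an envy-free (equivalently, equitable, in this case) exactly $s$-separated allocation for two agents with a common valuation $v$. I would run $\mathcal{A}$ and let $(A_1,A_2)$ be its output. Since the allocation is exactly $s$-separated, the cake is split as $[0,x]$, the separator $[x,x+s]$, and $[x+s,1]$ for some $x\in[0,1-s]$, with $A_1=[0,x]$ and $A_2=[x+s,1]$ (up to relabeling). Envy-freeness with identical valuations forces $v(0,x)=v(x+s,1)$, and since $v(0,x)+v(x+s,1)\le 1$, this common value $q$ satisfies $q\le 1/2$.

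The key step is to argue that this common value $q$ equals $\mms_i^{2,s}$ for the shared valuation. One direction is immediate: $(A_1,A_2)$ is itself an $s$-separated partition into two pieces with min-value $q$, so $\mms_i^{2,s}\ge q$. For the reverse inequality, I would invoke Lemma~\ref{lem:complete-implies-mms}(a): in the exactly $s$-separated partition $\{A_1,A_2\}$, agent $i$ has value at least $\mms_i^{2,s}$ for at least one of the two pieces; but both pieces have value exactly $q$, so $q\ge\mms_i^{2,s}$. Hence $q=\mms_i^{2,s}$, and we have extracted the exact maximin share of the common valuation using finitely many queries. Since Theorem~\ref{thm:mms-impossibility} already establishes the impossibility precisely in the regime $n=2$ with a piecewise constant, strictly positive (but not explicitly given) valuation, this is a contradiction. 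I would also note that the valuations produced in the proof of Theorem~\ref{thm:mms-impossibility} (via the \findsum{s} reduction, using piecewise linear $g$) are piecewise constant density functions that can be taken strictly positive, so the hardness instances transfer to the present setting verbatim.

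The main obstacle is a subtle one about what ``the algorithm's queries'' see: $\mathcal{A}$ is entitled to query both agents, so I must make sure the reduction from a single-agent maximin-share oracle still goes through. This is handled by the identical-valuations assumption—both agents answer every query according to the same $v$, so a single-valuation oracle (as in Theorem~\ref{thm:mms-impossibility}) suffices to simulate all of $\mathcal{A}$'s queries, and the total query count remains finite. A second point to check is that envy-freeness really does coincide with equitability here, which I would dispatch in one line: with two agents and a common valuation, the exactly $s$-separated allocation $(A_1,A_2)$ has $v(A_1)+v(A_2)\le 1$, and $v(A_1)\ge v(A_2)$ together with $v(A_2)\ge v(A_1)$ (the two envy conditions) forces equality, while equitability is literally $v(A_1)=v(A_2)$; so the single statement covers both notions simultaneously. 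Everything else is routine, so I expect the proof to be short—essentially a two-paragraph corollary.
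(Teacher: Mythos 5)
Your proof is correct and takes essentially the same route as the paper: reduce from Theorem~\ref{thm:mms-impossibility} by observing that, for two identical agents, an envy-free (equivalently equitable) exactly $s$-separated allocation has both pieces worth exactly $\mms^{2,s}_i$, using Lemma~\ref{lem:complete-implies-mms}(a) for the lower bound (the paper invokes Theorem~\ref{thm:ef-implies-mms}, which is proved from that very lemma) and the partition definition of MMS for the upper bound. The extra points you address---that a single-valuation oracle suffices because both agents are identical, and that the hardness instances of Theorem~\ref{thm:mms-impossibility} are already piecewise constant and strictly positive---are correct and do tighten the write-up slightly, but do not change the underlying argument.
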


These impossibilities stand in stark contrast to the canonical setting without separation: in that setting, with two agents, an envy-free allocation for non-identical valuations and an equitable allocation for identical valuations can be found by a simple cut-and-choose algorithm.

Regarding pie cutting, 
it is possible to show that no finite algorithm can compute an envy-free/equitable allocation for two identical agents, 
by modifying the proof of Theorem~\ref{thm:mms-impossibility}.\footnote{
We cannot prove the impossibility by reducing from Theorem~\ref{thm:pie-mms-greater},
since an envy-free or equitable division of a pie into $n$ pieces, even with identical valuations, is not necessarily 1-out-of-$n$ MMS-fair.
}
In particular, the adversary can answer the queries made by the algorithm in such a way that after any finite number of queries, the algorithm cannot identify an exactly $s$-separated partition in which the two pieces have the same value based on the given answers.

\begin{theorem}
\label{thm:pie-EF-EQ-impossibility}
For pie cutting, there is no algorithm that can always compute an envy-free or an equitable exactly $s$-separated allocation  by asking the agents a finite number of Robertson--Webb queries.
This holds even when $n=2$ and the agents' valuations are identical, piecewise constant, and strictly positive (but not given explicitly).
\end{theorem}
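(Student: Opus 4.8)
The plan is to adapt the adversary argument underlying Lemma~\ref{lem:function-impossibility} (the engine behind Theorem~\ref{thm:mms-impossibility}) to the circular, two‑piece setting. Since the two agents are identical, write $v$ for their common valuation. An exactly $s$-separated allocation of the pie into two pieces is determined by two boundary points $a,b$: the pieces are $A_1=[a,b]$ and $A_2=[b+s,a-s]$ (arcs taken clockwise, positions mod $1$) and the separators are $[b,b+s]$ and $[a-s,a]$; with identical valuations this allocation is envy-free if and only if it is equitable if and only if $v(A_1)=v(A_2)$. The pie version of the existence result guarantees such allocations exist, so the claim is not vacuous, and I must therefore force the algorithm to output a pair $(a,b)$ for which the revealed valuation gives $v(A_1)\neq v(A_2)$.

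During the run the adversary maintains a finite set $R$ of recorded points of the pie together with committed cumulative values $F$ (strictly increasing, $F(0)=0$), enforcing the invariant that no four-tuple of recorded points of the form $(a,b,b+s,a-s)$ forming an exactly $s$-separated $2$-partition is \emph{balanced}, i.e.\ satisfies $v(A_1)=v(A_2)$. The key observation is that this balance condition is a single \emph{linear} equation in the (at most four) committed values $F(a),F(b),F(b+s),F(a-s)$; hence whenever the adversary is free to choose the cumulative value at a new point, only finitely many values are forbidden — one per way of completing it to a balanced tuple — so a choice respecting strict monotonicity always exists. For an $\eval$ query at a fresh point $p$ the adversary uses exactly this freedom. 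For a $\cut$ query the returned point's cumulative value is forced once the point is chosen, so instead the adversary exploits the freedom in \emph{which} point to return: inside the inter-recorded-point interval that carries the required cumulative value there are infinitely many candidates, and the adversary picks one, call it $y$, so that neither $y-s$ nor $y+s$ lies in $R$ (a finite exclusion). This guarantees that adding $y$ to $R$ cannot create a balanced tuple, because in each of the four possible roles of $y$ the tuple is forced to contain $y-s$ or $y+s$, which is unrecorded. The bookkeeping for arcs wrapping past $0$ is routine.

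When the algorithm halts and outputs an exactly $s$-separated allocation with boundary points $a,b,b+s,a-s$, the adversary voluntarily records whichever of these four points are still fresh, one at a time, each time choosing the cumulative value by the $\eval$-rule above; the last choice is made so that the tuple $(a,b,b+s,a-s)$ is unbalanced, and if all four points were already recorded the invariant already gives this. Either way $v(A_1)\neq v(A_2)$, so the output is neither envy-free nor equitable, contradicting correctness. Finally, to obtain an honest instance the adversary completes $v$ by joining consecutive recorded points linearly: $F$ is then piecewise linear, strictly increasing, with $F(0)=0$ and $F(1)=1$, so the density is piecewise constant, strictly positive and normalized; giving this valuation to both agents yields the desired counterexample.

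The main obstacle, relative to Lemma~\ref{lem:function-impossibility}, is that candidate partitions now form a two-parameter family (the position and length of a piece) rather than a one-parameter family, so ``balance'' constrains four committed values instead of two; the argument survives because that constraint is still a single linear equation, keeping the forbidden set finite. The second delicate point is that $\cut$ answers are forced, so a bad tuple cannot be avoided by choosing $F$ there; the remedy is to pre-empt the problem by keeping $y-s$ and $y+s$ outside $R$ whenever a $\cut$ query introduces a new point $y$.
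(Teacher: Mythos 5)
Your proposal is correct and takes essentially the same route the paper intends: the paper only sketches the proof as a direct adversary argument obtained by "modifying the proof of Theorem~\ref{thm:mms-impossibility}," and your adaptation of the Lemma~\ref{lem:function-impossibility} adversary---maintaining the invariant that no recorded four-tuple $(a,b,b+s,a-s)$ satisfies the (linear) balance equation, picking fresh $\eval$ values to dodge finitely many forbidden constants, and picking $\cut$ answers $y$ with $y\pm s\notin R$ so $y$ cannot complete a recorded balanced tuple---is exactly the kind of modification being alluded to. The final accounting (voluntarily recording the output's four boundary points, invoking the $\eval$-rule one last time or the invariant directly, then joining recorded points linearly to get a strictly positive piecewise constant density) correctly closes the argument.
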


In light of \Cref{cor:EF-EQ-impossibility} and \Cref{thm:pie-EF-EQ-impossibility}, it would be interesting to develop approximation algorithms that compute allocations with low envy---this direction has been recently pursued in cake cutting without separation \citep{ArunachaleswaranBaKu19,GoldbergHoSu20}.

\section{Conclusion and Future Work}
\label{sec:conclusion}

In this paper, we have initiated the study of cake cutting under separation requirements, which capture scenarios including data erasure in machine processing, cross-fertilization prevention in land allocation, as well as social distancing.
We established several existence and computational results concerning maximin share fairness, both positive and negative.
Overall, our results indicate that maximin share fairness is an appropriate substitute for proportionality in this setting.

Interestingly, several of our positive results, including Theorems~\ref{thm:mms-algo}, \ref{thm:mms-query-atleast}, \ref{thm:mms-query-greater}, and \ref{thm:pie-mms-algo}, do not rely on the assumption that valuations are additive (which is a standard assumption in the cake-cutting literature \citep{Procaccia16}) and hold even for agents with arbitrary monotonic valuations.
This observation reveals another significant advantage of maximin share fairness over proportionality: when valuations are not necessarily additive, even in the absence of separation requirements, no multiplicative approximation of proportionality can be guaranteed.\footnote{To see this, consider agents with identical valuations such that the value of a piece is defined by a non-decreasing function $g(\ell)$ that depends only on the length $\ell$ of the piece. 
Even without separation, in any allocation, at least one of the agents will receive value at most $g(1/n)$, which can be $0$ (or, if $g$ is required to be strictly increasing, arbitrarily close to $0$).
}

We end the paper with a number of directions for future work.

\begin{itemize}
\item Can we improve the query complexity bounds in our results, or establish matching lower bounds?
A particularly interesting question concerns Theorem~\ref{thm:mms-algo}, where we presented an algorithm that computes an MMS-fair allocation using $O(n^2)$ queries given the agents' maximin shares.
Without separation, it is well-known that a proportional allocation can be found using $O(n\log n)$ queries via a divide-and-conquer approach \citep{Even1984Note}, and that this is tight \citep{woeginger2007complexity,EdmondsPr11}.
What is the optimal query complexity of computing an MMS-fair allocation in our setting?

\item While the canonical maximin share is a reasonable fairness requirement when agents have equal entitlements to the resource, in certain situations the agents may be endowed with different entitlements \citep{Barbanel95,cseh2018complexity,ChakrabortyIgSu21,ChakrabortyScSu21}.
Various extensions of the maximin share have been proposed \citep{aziz2019weighted,farhadi2019fair,BabaioffNiTa21,ChakrabortySeSu22}, and it may be interesting to study them in the context of cake cutting with separation. For different entitlements, a connected cake allocation may not exist even without separation
\citep{segalhalevi19cake,crew2020disproportionate}.

\item What happens if we do not require each agent to receive a single connected piece, but instead allow up to $t$ connected pieces for some parameter $t$?
When $t=2$, the analog of Theorem~\ref{thm:mms-algo} no longer holds: one can check that the valuation functions in Figure~\ref{fig:disconnected} do not admit an MMS-fair allocation.
It remains open whether any (ordinal or cardinal) approximation of the maximin share can be attained.

\begin{figure}
\centering
\begin{tikzpicture}[scale=1]
\node at (2,4.5) {$0$};
\node at (6,4.5) {$1/2$};
\node at (10,4.5) {$1$};
\draw[very thick] (2,5) -- (10,5);
\draw (2,4.9) -- (2,5.1);
\draw (3,4.9) -- (3,5.1);
\draw (4,4.9) -- (4,5.1);
\draw (5,4.9) -- (5,5.1);
\draw (6,4.9) -- (6,5.1);
\draw (7,4.9) -- (7,5.1);
\draw (8,4.9) -- (8,5.1);
\draw (9,4.9) -- (9,5.1);
\draw (10,4.9) -- (10,5.1);
\node at (1,5) {Cake};
\draw[thick] (2,6) -- (5,6);
\draw[thick] (7,6) -- (10,6);
\node at (1,6) {Agent~2};
\node at (3.5,6.2) {\small $1/2$};
\node at (8.5,6.2) {\small $1/2$};
\draw[thick] (2,7) -- (3,7);
\draw[thick] (5,7) -- (7,7);
\draw[thick] (9,7) -- (10,7);
\node at (2.5,7.2) {\small $1/4$};
\node at (6,7.2) {\small $1/2$};
\node at (9.5,7.2) {\small $1/4$};
\node at (1,7) {Agent~1};
\end{tikzpicture}
\caption{An example showing that the analog of Theorem~\ref{thm:mms-algo} does not hold when each agent is allowed to receive up to two connected pieces.
The valued intervals of each agent are indicated along with their values, spread uniformly across each interval.
For $s=1/4$, the maximin share of each agent is $1/2$, but no $s$-separated allocation gives both agents a value of at least $1/2$.
}
\label{fig:disconnected}
\end{figure}
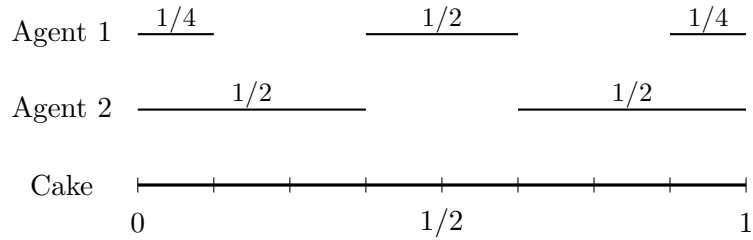

\item  Prior work has explored the combination of fairness and economic efficiency in cake cutting without separation \citep{BeiChHu12,AumannDoHa13,ArunachaleswaranBaKu19}.
With separation, can we achieve maximin share fairness along with certain notions of efficiency, for instance, minimizing the value of the unallocated cake? 
Note that maximin share fairness is always compatible with Pareto efficiency, since a Pareto improvement of an MMS-fair allocation is again MMS-fair.

\end{itemize}

At a higher level, separation requirements represent one type of constraints that arise in a number of applications of cake cutting.
In other applications, it may be desirable to limit the amount of cake that certain agents receive, or ensure that the cake is allocated to agents in a given order.\footnote{\citet{Suksompong21} surveyed different types of constraints in fair division.}
Examining the interplay between such constraints and fairness considerations is an important direction that will likely lead to fruitful research.

\section*{Acknowledgments} 

This work was partially supported by the European Research Council (ERC) under grant number 639945 (ACCORD), by the Israel Science Foundation under grant number 712/20, by the Singapore Ministry of Education under grant number MOE-T2EP20221-0001, and by an NUS Start-up Grant.
Part of the work was done while the third author was a postdoctoral researcher at the University of Oxford.
We would like to thank 
Iosif Pinelis, Fedor Petrov, Jochen Wengenroth, and  Dieter Kadelka
for their mathematical help,
and the anonymous reviewers of the 35th AAAI Conference on Artificial Intelligence (AAAI 2021) and Artificial Intelligence Journal for their valuable comments.

\bibliographystyle{plainnat}
\bibliography{main}

\begin{thebibliography}{58}
\providecommand{\natexlab}[1]{#1}
\providecommand{\url}[1]{\texttt{#1}}
\expandafter\ifx\csname urlstyle\endcsname\relax
  \providecommand{\doi}[1]{doi: #1}\else
  \providecommand{\doi}{doi: \begingroup \urlstyle{rm}\Url}\fi

\bibitem[Alijani et~al.(2017)Alijani, Farhadi, Ghodsi, Seddighin, and
  Tajik]{AlijaniFaGh17}
Reza Alijani, Majid Farhadi, Mohammad Ghodsi, Masoud Seddighin, and Ahmad~S.
  Tajik.
\newblock Envy-free mechanisms with minimum number of cuts.
\newblock In \emph{Proceedings of the 31st AAAI Conference on Artificial
  Intelligence (AAAI)}, pages 312--318, 2017.

\bibitem[Arunachaleswaran et~al.(2019{\natexlab{a}})Arunachaleswaran, Barman,
  Kumar, and Rathi]{ArunachaleswaranBaKu19}
Eshwar~Ram Arunachaleswaran, Siddharth Barman, Rachitesh Kumar, and Nidhi
  Rathi.
\newblock Fair and efficient cake division with connected pieces.
\newblock In \emph{Proceedings of the 15th Conference on Web and Internet
  Economics (WINE)}, pages 57--70, 2019{\natexlab{a}}.

\bibitem[Arunachaleswaran et~al.(2019{\natexlab{b}})Arunachaleswaran, Barman,
  and Rathi]{ArunachaleswaranBaRa19}
Eshwar~Ram Arunachaleswaran, Siddharth Barman, and Nidhi Rathi.
\newblock Fair division with a secretive agent.
\newblock In \emph{Proceedings of the 33rd AAAI Conference on Artificial
  Intelligence (AAAI)}, pages 1732--1739, 2019{\natexlab{b}}.

\bibitem[Aumann and Dombb(2015)]{AumannDo15}
Yonatan Aumann and Yair Dombb.
\newblock The efficiency of fair division with connected pieces.
\newblock \emph{ACM Transactions on Economics and Computation}, 3\penalty0
  (4):\penalty0 23:1--23:16, 2015.

\bibitem[Aumann et~al.(2013)Aumann, Dombb, and Hassidim]{AumannDoHa13}
Yonatan Aumann, Yair Dombb, and Avinatan Hassidim.
\newblock Computing socially-efficient cake divisions.
\newblock In \emph{Proceedings of the 12th International Conference on
  Autonomous Agents and Multiagent Systems (AAMAS)}, pages 343--350, 2013.

\bibitem[Aziz et~al.(2019)Aziz, Chan, and Li]{aziz2019weighted}
Haris Aziz, Hau Chan, and Bo~Li.
\newblock Weighted maxmin fair share allocation of indivisible chores.
\newblock In \emph{Proceedings of the 28th International Joint Conference on
  Artificial Intelligence (IJCAI)}, pages 46--52, 2019.

\bibitem[Babaioff et~al.(2021)Babaioff, Nisan, and
  Talgam-Cohen]{BabaioffNiTa21}
Moshe Babaioff, Noam Nisan, and Inbal Talgam-Cohen.
\newblock Competitive equilibrium with indivisible goods and generic budgets.
\newblock \emph{Mathematics of Operations Research}, 46\penalty0 (1):\penalty0
  382--403, 2021.

\bibitem[Balkanski et~al.(2014)Balkanski, Br\^{a}nzei, Kurokawa, and
  Procaccia]{BalkanskiBrKu14}
Eric Balkanski, Simina Br\^{a}nzei, David Kurokawa, and Ariel~D. Procaccia.
\newblock Simultaneous cake cutting.
\newblock In \emph{Proceedings of the 28th AAAI Conference on Artificial
  Intelligence (AAAI)}, pages 566--572, 2014.

\bibitem[Barbanel(1995)]{Barbanel95}
Julius~B. Barbanel.
\newblock Game-theoretic algorithms for fair and strongly fair cake division
  with entitlements.
\newblock \emph{Colloquium Mathematicae}, 69\penalty0 (1):\penalty0 59--73,
  1995.

\bibitem[Barbanel et~al.(2009)Barbanel, Brams, and Stromquist]{BarbanelBrSt09}
Julius~B. Barbanel, Steven~J. Brams, and Walter Stromquist.
\newblock Cutting a pie is not a piece of cake.
\newblock \emph{American Mathematical Monthly}, 116\penalty0 (6):\penalty0
  496--514, 2009.

\bibitem[Bei et~al.(2012)Bei, Chen, Hua, Tao, and Yang]{BeiChHu12}
Xiaohui Bei, Ning Chen, Xia Hua, Biaoshuai Tao, and Endong Yang.
\newblock Optimal proportional cake cutting with connected pieces.
\newblock In \emph{Proceedings of the 26th AAAI Conference on Artificial
  Intelligence (AAAI)}, pages 1263--1269, 2012.

\bibitem[Bei et~al.(2017)Bei, Chen, Huzhang, Tao, and Wu]{BeiChHu17}
Xiaohui Bei, Ning Chen, Guangda Huzhang, Biaoshuai Tao, and Jiajun Wu.
\newblock Cake cutting: Envy and truth.
\newblock In \emph{Proceedings of the 26th International Joint Conference on
  Artificial Intelligence (IJCAI)}, pages 3625--3631, 2017.

\bibitem[Bei et~al.(2022)Bei, Igarashi, Lu, and Suksompong]{BeiIgLu22}
Xiaohui Bei, Ayumi Igarashi, Xinhang Lu, and Warut Suksompong.
\newblock The price of connectivity in fair division.
\newblock \emph{SIAM Journal on Discrete Mathematics}, 36\penalty0
  (2):\penalty0 1156--1186, 2022.

\bibitem[Bil{\`{o}} et~al.(2022)Bil{\`{o}}, Caragiannis, Flammini, Igarashi,
  Monaco, Peters, Vinci, and Zwicker]{BiloCFIMPVZ22}
Vittorio Bil{\`{o}}, Ioannis Caragiannis, Michele Flammini, Ayumi Igarashi,
  Gianpiero Monaco, Dominik Peters, Cosimo Vinci, and William~S. Zwicker.
\newblock Almost envy-free allocations with connected bundles.
\newblock \emph{Games and Economic Behavior}, 131:\penalty0 197--221, 2022.

\bibitem[Bogomolnaia and Moulin(2022)]{bogomolnaia2020guarantees}
Anna Bogomolnaia and Herv{\'e} Moulin.
\newblock Guarantees in fair division: General or monotone preferences.
\newblock \emph{Mathematics of Operations Research}, 2022.
\newblock Forthcoming.

\bibitem[Bouveret et~al.(2017)Bouveret, Cechl\'{a}rov\'{a}, Elkind, Igarashi,
  and Peters]{BouveretCeEl17}
Sylvain Bouveret, Katar\'{i}na Cechl\'{a}rov\'{a}, Edith Elkind, Ayumi
  Igarashi, and Dominik Peters.
\newblock Fair division of a graph.
\newblock In \emph{Proceedings of the 26th International Joint Conference on
  Artificial Intelligence (IJCAI)}, pages 135--141, 2017.

\bibitem[Brams and Taylor(1996)]{BramsTa96}
Steven~J. Brams and Alan~D. Taylor.
\newblock \emph{Fair Division: From Cake-Cutting to Dispute Resolution}.
\newblock Cambridge University Press, 1996.

\bibitem[Brams et~al.(2008)Brams, Jones, and Klamler]{BramsJoKl08}
Steven~J. Brams, Michael~A. Jones, and Christian Klamler.
\newblock Proportional pie-cutting.
\newblock \emph{International Journal of Game Theory}, 36\penalty0
  (3--4):\penalty0 353--367, 2008.

\bibitem[Br{\^a}nzei and Nisan(2017)]{branzei2017query}
Simina Br{\^a}nzei and Noam Nisan.
\newblock The query complexity of cake cutting.
\newblock \emph{arXiv preprint arXiv:1705.02946}, 2017.

\bibitem[Br\^{a}nzei et~al.(2016)Br\^{a}nzei, Caragiannis, Kurokawa, and
  Procaccia]{BranzeiCaKu16}
Simina Br\^{a}nzei, Ioannis Caragiannis, David Kurokawa, and Ariel~D.
  Procaccia.
\newblock An algorithmic framework for strategic fair division.
\newblock In \emph{Proceedings of the 30th AAAI Conference on Artificial
  Intelligence (AAAI)}, pages 418--424, 2016.

\bibitem[Budish(2011)]{Budish11}
Eric Budish.
\newblock The combinatorial assignment problem: Approximate competitive
  equilibrium from equal incomes.
\newblock \emph{Journal of Political Economy}, 119\penalty0 (6):\penalty0
  1061--1103, 2011.

\bibitem[Cechl\'{a}rov\'{a} and Pill\'{a}rov\'{a}(2012)]{CechlarovaPi12}
Katar\'{i}na Cechl\'{a}rov\'{a} and Eva Pill\'{a}rov\'{a}.
\newblock On the computability of equitable divisions.
\newblock \emph{Discrete Optimization}, 9\penalty0 (4):\penalty0 249--257,
  2012.

\bibitem[Cechl\'{a}rov\'{a} et~al.(2013)Cechl\'{a}rov\'{a}, Dobo\v{s}, and
  Pill\'{a}rov\'{a}]{CechlarovaDoPi13}
Katar\'{i}na Cechl\'{a}rov\'{a}, Jozef Dobo\v{s}, and Eva Pill\'{a}rov\'{a}.
\newblock On the existence of equitable cake divisions.
\newblock \emph{Information Sciences}, 228:\penalty0 239--245, 2013.

\bibitem[Chakraborty et~al.(2021{\natexlab{a}})Chakraborty, Igarashi,
  Suksompong, and Zick]{ChakrabortyIgSu21}
Mithun Chakraborty, Ayumi Igarashi, Warut Suksompong, and Yair Zick.
\newblock Weighted envy-freeness in indivisible item allocation.
\newblock \emph{ACM Transactions on Economics and Computation}, 9\penalty0
  (3):\penalty0 18:1--18:39, 2021{\natexlab{a}}.

\bibitem[Chakraborty et~al.(2021{\natexlab{b}})Chakraborty, Schmidt-Kraepelin,
  and Suksompong]{ChakrabortyScSu21}
Mithun Chakraborty, Ulrike Schmidt-Kraepelin, and Warut Suksompong.
\newblock Picking sequences and monotonicity in weighted fair division.
\newblock \emph{Artificial Intelligence}, 301:\penalty0 103578,
  2021{\natexlab{b}}.

\bibitem[Chakraborty et~al.(2022)Chakraborty, Segal-Halevi, and
  Suksompong]{ChakrabortySeSu22}
Mithun Chakraborty, Erel Segal-Halevi, and Warut Suksompong.
\newblock Weighted fairness notions for indivisible items revisited.
\newblock In \emph{Proceedings of the 36th AAAI Conference on Artificial
  Intelligence (AAAI)}, pages 4949--4956, 2022.

\bibitem[Ch\`{e}ze(2017)]{Cheze17}
Guillaume Ch\`{e}ze.
\newblock Existence of a simple and equitable fair division: a short proof.
\newblock \emph{Mathematical Social Sciences}, 87:\penalty0 92--93, 2017.

\bibitem[Crew et~al.(2020)Crew, Narayanan, and
  Spirkl]{crew2020disproportionate}
Logan Crew, Bhargav Narayanan, and Sophie Spirkl.
\newblock Disproportionate division.
\newblock \emph{Bulletin of the London Mathematical Society}, 52\penalty0
  (5):\penalty0 885--890, 2020.

\bibitem[Cseh and Fleiner(2020)]{cseh2018complexity}
\'{A}gnes Cseh and Tam\'{a}s Fleiner.
\newblock The complexity of cake cutting with unequal shares.
\newblock \emph{ACM Transactions on Algorithms}, 16\penalty0 (3):\penalty0
  29:1--29:21, 2020.

\bibitem[Dubins and Spanier(1961)]{DubinsSp61}
Lester~E. Dubins and Edwin~H. Spanier.
\newblock How to cut a cake fairly.
\newblock \emph{American Mathematical Monthly}, 68\penalty0 (1):\penalty0
  1--17, 1961.

\bibitem[Edmonds and Pruhs(2011)]{EdmondsPr11}
Jeff Edmonds and Kirk Pruhs.
\newblock Cake cutting really is not a piece of cake.
\newblock \emph{ACM Transactions on Algorithms}, 7\penalty0 (4):\penalty0
  51:1--51:12, 2011.

\bibitem[Elkind et~al.(2021{\natexlab{a}})Elkind, Segal-Halevi, and
  Suksompong]{ElkindSeSu21}
Edith Elkind, Erel Segal-Halevi, and Warut Suksompong.
\newblock Mind the gap: Cake cutting with separation.
\newblock In \emph{Proceedings of the 35th AAAI Conference on Artificial
  Intelligence (AAAI)}, pages 5330--5338, 2021{\natexlab{a}}.

\bibitem[Elkind et~al.(2021{\natexlab{b}})Elkind, Segal-Halevi, and
  Suksompong]{ElkindSeSu21-Graph}
Edith Elkind, Erel Segal-Halevi, and Warut Suksompong.
\newblock Graphical cake cutting via maximin share.
\newblock In \emph{Proceedings of the 30th International Joint Conference on
  Artificial Intelligence (IJCAI)}, pages 161--167, 2021{\natexlab{b}}.

\bibitem[Elkind et~al.(2021{\natexlab{c}})Elkind, Segal-Halevi, and
  Suksompong]{ElkindSeSu21-Land}
Edith Elkind, Erel Segal-Halevi, and Warut Suksompong.
\newblock Keep your distance: Land division with separation.
\newblock In \emph{Proceedings of the 30th International Joint Conference on
  Artificial Intelligence (IJCAI)}, pages 168--174, 2021{\natexlab{c}}.

\bibitem[Even and Paz(1984)]{Even1984Note}
Shimon Even and Azaria Paz.
\newblock {A note on cake cutting}.
\newblock \emph{Discrete Applied Mathematics}, 7\penalty0 (3):\penalty0
  285--296, 1984.

\bibitem[Farhadi et~al.(2019)Farhadi, Ghodsi, Hajiaghayi, Lahaie, Pennock,
  Seddighin, Seddighin, and Yami]{farhadi2019fair}
Alireza Farhadi, Mohammad Ghodsi, Mohammad~Taghi Hajiaghayi, Sebastien Lahaie,
  David Pennock, Masoud Seddighin, Saeed Seddighin, and Hadi Yami.
\newblock Fair allocation of indivisible goods to asymmetric agents.
\newblock \emph{Journal of Artificial Intelligence Research}, 64:\penalty0
  1--20, 2019.

\bibitem[Goldberg et~al.(2020)Goldberg, Hollender, and
  Suksompong]{GoldbergHoSu20}
Paul~W. Goldberg, Alexandros Hollender, and Warut Suksompong.
\newblock Contiguous cake cutting: Hardness results and approximation
  algorithms.
\newblock \emph{Journal of Artificial Intelligence Research}, 69:\penalty0
  109--141, 2020.

\bibitem[Hosseini et~al.(2020)Hosseini, Igarashi, and Searns]{HosseiniIgSe20}
Hadi Hosseini, Ayumi Igarashi, and Andrew Searns.
\newblock Fair division of time: Multi-layered cake cutting.
\newblock In \emph{Proceedings of the 29th International Joint Conference on
  Artificial Intelligence (IJCAI)}, pages 182--188, 2020.

\bibitem[Igarashi and Peters(2019)]{IgarashiP19}
Ayumi Igarashi and Dominik Peters.
\newblock Pareto-optimal allocation of indivisible goods with connectivity
  constraints.
\newblock In \emph{Proceedings of the 33rd {AAAI} Conference on Artificial
  Intelligence (AAAI)}, pages 2045--2052, 2019.

\bibitem[Iyer and Huhns(2005)]{iyer2005multiagent}
Karthik Iyer and Michael Huhns.
\newblock Multiagent negotiation for fair and unbiased resource allocation.
\newblock In \emph{OTM Confederated International Conferences ``On the Move to
  Meaningful Internet Systems''}, pages 453--465, 2005.

\bibitem[Kurokawa et~al.(2018)Kurokawa, Procaccia, and Wang]{KurokawaPrWa18}
David Kurokawa, Ariel~D. Procaccia, and Junxing Wang.
\newblock Fair enough: Guaranteeing approximate maximin shares.
\newblock \emph{Journal of the ACM}, 64\penalty0 (2):\penalty0 8:1--8:27, 2018.

\bibitem[Li et~al.(2015)Li, Zhang, and Zhang]{LiZhZh15}
Minming Li, Jialin Zhang, and Qiang Zhang.
\newblock Truthful cake cutting mechanisms with externalities: Do not make them
  care for others too much!
\newblock In \emph{Proceedings of the 24th International Joint Conference on
  Artificial Intelligence (IJCAI)}, pages 589--595, 2015.

\bibitem[Lonc and Truszczynski(2020)]{LoncTr20}
Zbigniew Lonc and Miroslaw Truszczynski.
\newblock Maximin share allocations on cycles.
\newblock \emph{Journal of Artificial Intelligence Research}, 69:\penalty0
  613--655, 2020.

\bibitem[Menon and Larson(2017)]{MenonLa17}
Vijay Menon and Kate Larson.
\newblock Deterministic, strategyproof, and fair cake cutting.
\newblock In \emph{Proceedings of the 26th International Joint Conference on
  Artificial Intelligence (IJCAI)}, pages 352--358, 2017.

\bibitem[Procaccia(2016)]{Procaccia16}
Ariel~D. Procaccia.
\newblock Cake cutting algorithms.
\newblock In Felix Brandt, Vincent Conitzer, Ulle Endriss, J\'{e}r\^{o}me Lang,
  and Ariel~D. Procaccia, editors, \emph{Handbook of Computational Social
  Choice}, chapter~13, pages 311--329. Cambridge University Press, 2016.

\bibitem[Procaccia and Wang(2017)]{procaccia2017lower}
Ariel~D. Procaccia and Junxing Wang.
\newblock A lower bound for equitable cake cutting.
\newblock In \emph{Proceedings of the 18th ACM Conference on Economics and
  Computation (EC)}, pages 479--495, 2017.

\bibitem[Robertson and Webb(1998)]{RobertsonWe98}
Jack Robertson and William Webb.
\newblock \emph{Cake-Cutting Algorithms: Be Fair if You Can}.
\newblock Peters/CRC Press, 1998.

\bibitem[Segal-Halevi(2019)]{segalhalevi19cake}
Erel Segal-Halevi.
\newblock Cake-cutting with different entitlements: How many cuts are needed?
\newblock \emph{Journal of Mathematical Analysis and Applications},
  480\penalty0 (1):\penalty0 123382, 2019.

\bibitem[Segal-Halevi(2020)]{Segalhalevi20}
Erel Segal-Halevi.
\newblock Competitive equilibrium for almost all incomes: Existence and
  fairness.
\newblock \emph{Autonomous Agents and Multi-Agent Systems}, 34\penalty0
  (1):\penalty0 26:1--26:50, 2020.

\bibitem[Segal-Halevi(2021)]{Segalhalevi20-2}
Erel Segal-Halevi.
\newblock Fair multi-cake cutting.
\newblock \emph{Discrete Applied Mathematics}, 291:\penalty0 15--35, 2021.

\bibitem[Segal-Halevi and Nitzan(2019)]{segal2019families}
Erel Segal-Halevi and Shmuel Nitzan.
\newblock Fair cake-cutting among families.
\newblock \emph{Social Choice and Welfare}, 53\penalty0 (4):\penalty0 709--740,
  2019.

\bibitem[Steinhaus(1948)]{Steinhaus1948Problem}
Hugo Steinhaus.
\newblock {The problem of fair division}.
\newblock \emph{Econometrica}, 16\penalty0 (1):\penalty0 101--104, 1948.

\bibitem[Stromquist(1980)]{Stromquist80}
Walter Stromquist.
\newblock How to cut a cake fairly.
\newblock \emph{American Mathematical Monthly}, 87\penalty0 (8):\penalty0
  640--644, 1980.

\bibitem[Stromquist(2008)]{Stromquist08}
Walter Stromquist.
\newblock Envy-free cake divisions cannot be found by finite protocols.
\newblock \emph{Electronic Journal of Combinatorics}, 15:\penalty0 \#R11, 2008.

\bibitem[Su(1999)]{Su99}
Francis~Edward Su.
\newblock Rental harmony: Sperner's lemma in fair division.
\newblock \emph{American Mathematical Monthly}, 106\penalty0 (10):\penalty0
  930--942, 1999.

\bibitem[Suksompong(2021)]{Suksompong21}
Warut Suksompong.
\newblock Constraints in fair division.
\newblock \emph{ACM SIGecom Exchanges}, 19\penalty0 (2):\penalty0 46--61, 2021.

\bibitem[Thomson(2007)]{thomson2007children}
William Thomson.
\newblock Children crying at birthday parties. {W}hy?
\newblock \emph{Economic Theory}, 31\penalty0 (3):\penalty0 501--521, 2007.

\bibitem[Woeginger and Sgall(2007)]{woeginger2007complexity}
Gerhard~J. Woeginger and Ji{\v{r}}{\'\i} Sgall.
\newblock On the complexity of cake cutting.
\newblock \emph{Discrete Optimization}, 4\penalty0 (2):\penalty0 213--220,
  2007.

\end{thebibliography}

\appendix

\section{Proof of Theorem~\ref{thm:mms-exact}}
\label{app:LP-proof}

Let $v$ be the piecewise constant valuation function of agent $i$, 
described by a list of breakpoints $(p_0, p_1, \dots, p_d)$ with $p_0=0$, $p_d=1$, and a list of densities $(\gamma_1, \dots, \gamma_d)$.
For readability, we set
$I_j :=[p_{j-1}, p_j]$ for each $j\in [d]$, as illustrated below:

\begin{center}
\begin{tikzpicture}[scale=0.7]
\node (p0) at (0,0) {$p_0$};
\node () at (-0.1,.9) {$0$};
\node (p1) at (2,0) {$p_1$};
\node (p2) at (5,0) {$p_2$};
\node (pdots) at (7.5,0) {$\ldots$};
\node (pd) at (10,0) {$p_d$};
\node () at (10.1,.9) {$1$};

\begin{scope}[node distance=0.2,inner sep=1pt]
\node[above = of p0] (p0b) {};
\node[above = of p1] (p1b) {};
\node[above = of p2] (p2b) {};
\node[above = of pd] (pdb) {};
\end{scope}

\node (I1) at (1, 1.4) {$I_1$};
\node (I2) at (3.5, 1.4) {$I_2$};
\node (Idots) at (6.5, 1.4) {$\ldots$};
\node (Id) at (8.5, 1.4) {$I_d$};

\draw (p0b) -- (p1b) -- (p2b) -- (pdb);
\end{tikzpicture}
\end{center}

We also write
\begin{align*}
c^* :=\mms^{n,s}_i, \quad
w_j :=v(0, p_j)=\sum_{\ell=1}^j\gamma_\ell\cdot (p_\ell-p_{\ell-1})
\end{align*}

Since we can check whether $c^*>0$ using Theorem~\ref{thm:mms-query-greater}, we assume from now on that this is the case.
Given two points $y\leq y'$ with $y\in I_\ell$ and $y'\in I_r$,  the value $v(y, y')$ is a linear function of $y$ and $y'$:
\begin{equation}\label{eq:pw-val}
v(y, y') = 
{ \color{blue}(p_{\ell}-y)\gamma_{\ell} }
+
{\color{red} (w_{r-1} - w_{\ell}) }
 +
{\color{ForestGreen} (y'-p_{r-1})\gamma_r }
\end{equation}
This is illustrated below:
\begin{center}
\begin{tikzpicture}[scale=0.7]
\node (pL1) at (0,0) {$p_{\ell-1}$};
\node (pL)  at (3,0) {$p_{\ell}$};
\node (pdots) at (5,0) {$\ldots$};
\node (pR1) at (7,0) {$p_{r-1}$};
\node (pR)  at (10,0) {$p_{r}$};

\node (IL) at (1.5, 1.4) {$I_{\ell}$};
\node (Idots) at (5, 1.4) {$\ldots$};
\node (IR) at (8.5, 1.4) {$I_r$};

\node (y0) at (2,0.3) {$y$};
\draw (2,0.8) -- (2, 1);

\node (y1) at (9,0.3) {$y'$};
\draw (9,0.8) -- (9, 1);

\begin{scope}[node distance=0.3,inner sep=1pt]
\node[above = of pL1] (pL1b) {};
\node[above = of pL] (pLb) {};
\node[above = of pR1] (pR1b) {};
\node[above = of pR] (pRb) {};
\end{scope}

\draw (pL1b) -- (pLb);
\draw[thick,color=blue] (2,0.85) -- (pLb);
\draw[thick,color=red] (pLb) -- (pR1b);
\draw (pR1b) -- (pRb);
\draw[thick,color=ForestGreen] (pR1b) -- (9,0.85);

\end{tikzpicture}
\end{center}

In what follows, we depart from the notation used in the remainder of the paper and represent an $s$-separated partition of $[0, 1]$ into $n$ parts as $(x_0, x_1, \dots, x_n)$ where $x_0=-s$ and $x_n=1$, so that the $k$-th part of the partition is given by $[x_{k-1}+s, x_k]$.

For each $t\in [0, 1]$, each $k\in [n]$, and each list of intervals 
${\mathcal I}_k = (I_{\ell(1)}, I_{r(1)},\dots,I_{\ell(k)}, I_{r(k)})$
such that 
$\ell(1)\le r(1)\le \dots \le \ell(k)\le r(k)$
and $0\in I_{\ell(1)}$, $t\in I_{r(k)}$, 
consider the following linear program, with variables $x_0, x_1, \dots, x_k, c$:

\begin{framed}
\textbf{Program $\text{LP}_k({\mathcal I}_k, t)$}:
\begin{align}
    \max \quad &c \nonumber\\
    &\text{subject to}\nonumber\\
    &x_0 =-s, \qquad x_k=t\nonumber\\
    &x_{q-1}+s \in I_{\ell(q)}, \quad x_q\in I_{r(q)}\qquad\text{for all $q\in [k]$}
    \label{eq:lp-pw-intervals}\\
    &x_{q-1}+s \le x_{q} \qquad\text{for all $q\in [k]$}\nonumber\\
    &v(x_{q-1}+s, x_q) \ge c \qquad\text{for all $q\in [k]$}
    \label{eq:lp-pw-val}
\end{align}
\end{framed}
Note that $\text{LP}_k({\mathcal I}_k, t)$ is indeed a linear program; in particular, constraints~\eqref{eq:lp-pw-intervals} are linear, because the endpoints of each $I_j$ are known, and constraints~\eqref{eq:lp-pw-val} are
linear because, by \eqref{eq:pw-val}, the function $v(y, y')$ is a linear function of $y$ and $y'$
as long as the intervals containing $y$ and $y'$ are known.
Every feasible solution of this linear program corresponds to an $s$-separated partition of $[0, t]$ into $k$ parts, 
where the beginning and the end of the $q$-th part (i.e., $x_{q-1}+s$ and $x_q$ in the linear program), $1\le q\le k$, are contained in the intervals $I_{\ell(q)}$ and $I_{r(q)}$, respectively;
thus, the solution of this linear program is the maximum value among all such partitions. It follows that we can compute $\mms_i$ by solving a linear program as long as we know the `correct' intervals for both endpoints of each part of an $s$-separated maximin partition. We now show how to identify such intervals in polynomial time.

Given $k\in [n]$, a list of intervals ${\mathcal I}_k=(I_{\ell(q)}, I_{r(q)})_{q\in [k]}$, and a partition $(x_0, \dots, x_n)$ of 
$[0, 1]$,\footnote{Recall that, as indicated earlier, in this proof we use $(x_0,\dots,x_n)$ to represent an $s$-separated partition of $[0,1]$ into $n$ parts.} we will say that 
$(x_0, \dots, x_n)$ and ${\mathcal I}_k$ are {\em consistent}
if $x_{q-1}+s \in I_{\ell(q)}$ and
$x_q\in I_{r(q)}$ for each $q\in [k]$.
If $(x_0, \dots, x_n)$ is an $s$-separated maximin partition of $[0, 1]$,
and ${\mathcal I}_n=(I_{\ell(q)}, I_{r(q)})_{q\in [n]}$ is consistent 
with $(x_0, \dots, x_n)$,
then the solution of the linear program $\text{LP}_{n}({\mathcal I}_n, 1)$
is $c^*$.

To build a list ${\mathcal I}_n$
that is consistent with some $s$-separated maximin partition,
we proceed inductively. First, 
we construct a list ${\mathcal I}_1=(I_{\ell(1)}, I_{r(1)})$
that is consistent with some $s$-separated maximin partition.
Then, for each $1<k\le n$
we extend the list ${\mathcal I}_{k-1}=(I_{\ell(q)}, I_{r(q)})_{q\in [k-1]}$
to a list ${\mathcal I}_{k}=(I_{\ell(q)}, I_{r(q)})_{q\in [k]}$ so that
${\mathcal I}_k$ is also consistent with some $s$-separated maximin partition.

\smallskip

\noindent{\bf Base case\ } We compute ${\mathcal I}_1=(I_{\ell(1)}, I_{r(1)})$ as follows.
Since the first part of any partition starts at $0$, we set $\ell(1)=1$.
To compute $r(1)$, 
for each $j\in \{0,\ldots,d\}$,
set $w_j:=v(0, p_j)$, and use the algorithm from Theorem~\ref{thm:mms-query-atleast} to check\footnote{Even though the algorithm in Theorem~\ref{thm:mms-query-atleast} is designed for $\mms^{n,s}_i(0,1)$, it can be easily adapted to our task by allowing $n-1$ pieces in the partition and considering the interval $[p_j+s,1]$ instead of $[0,1]$.
Similar statements hold for other applications of the algorithm in the remainder of this proof.} whether $w_j \leq \mms^{n-1,s}_i(p_j+s, 1)$.
Pick the first $j$ for which the answer is No, and set $I_{r(1)}=I_j$, as illustrated below:

\begin{center}
\begin{tikzpicture}[scale=0.7]
\node () at (-0.1,.9) {$0$};
\node (p0) at (0,0) {$p_0$};
\node (p1) at (3,0) {$p_1$};
\node (I1) at (1.5, 1.4)  {\small $I_1 = I_{\ell(1)}$};

\node (pdots1) at (4.5,0) {$\ldots$};
\node (Idots) at (4.5, 1.4) {$\ldots$};

\node (pj1) at (6,0) {$p_{j-1}$};
\node (pj) at (9,0) {$p_{j}$};
\node (pjs) at (9.5,0.5) {\small $s$};
\node (Ij) at (7.5, 1.4) {\small $I_{j} = I_{r(1)}$};
\node (MMSn) at (12, 1.4) {\small $[\mms_i^{n-1,s}<w_j]$};

\node (pdots2) at (13.5,0) {$\ldots$};
\node (pd) at (15,0) {$p_d$};
\node () at (15.1,.9) {$1$};

\begin{scope}[node distance=0.2,inner sep=1pt]
\node[above = of p0] (p0b) {};
\node[above = of p1] (p1b) {};
\node[above = of pj1] (pj1b) {};
\node[above = of pj] (pjb) {};
\node[above = of pd] (pdb) {};
\end{scope}

\draw (p0b) -- (p1b) -- (pj1b) -- (pjb) -- (pdb);
\draw (10,0.6) -- (10,0.85);
\end{tikzpicture}
\end{center}

To see that this approach is correct, consider two $s$-separated maximin partitions 
of $[0, 1]$, which we denote by $(x^-_0, \dots, x^-_n)$ and $(x^+_0, \dots, x^+_n)$: these partitions are chosen so that for every $s$-separated maximin partition $(x_0, \dots, x_n)$ of $[0, 1]$
we have $x^-_1\le x_1\le x^+_1$. 
(These two partitions may coincide.)
Since $c^*>0$, we have
$0<x^-_1$ and $x^+_1<1$.
Suppose that $p_{j^--1}< x^-_1\le p_{j^-}$ and $p_{j^+-1}\le x^+_1 < p_{j^+}$
for some $j^-, j^+\in [d]$.
For every $z\in [x^-_1, x^+_1]$, the partition
$(x^-_0, z, x^+_2, \dots, x^+_n)$ is also an
$s$-separated maximin partition of $[0, 1]$.
Thus, for every $j$ such that $j^-\le j\le j^+$, there exists an $s$-separated
maximin partition of $[0, 1]$ such that the right endpoint of the first part lies in $I_j$, 
i.e., any such choice of $j$ is suitable for $r(1)$; we will argue that our algorithm selects 
$r(1)$ so that $j^-\le r(1)\le j^+$.

\begin{center}
\begin{tikzpicture}[scale=0.7]
\node (pL1) at (0,0) {$p_{j^- -1}$};
\node (pL)  at (3,0) {$p_{j^-}$};
\node (pdots) at (5,0) {$\ldots$};
\node (pR1) at (7,0) {$p_{j^+ -1}$};
\node (pR)  at (10,0) {$p_{j^+}$};

\node (IL) at (1.5, 1.4) {$I_{j^-}$};
\node (Idots) at (5, 1.4) {$\ldots$};
\node (IR) at (8.5, 1.4) {$I_j^+$};

\begin{scope}[node distance=0.3,inner sep=1pt]
\node[above = of pL1] (pL1b) {};
\node[above = of pL] (pLb) {};
\node[above = of pR1] (pR1b) {};
\node[above = of pR] (pRb) {};
\end{scope}

\draw (pL1b) -- (pLb) -- (pR1b) -- (pRb);

\node (y0) at (2,0.3) {$x_1^-$};
\draw (2,0.8) -- (2, 1);

\node (y1) at (9,0.3) {$x_1^+$};
\draw (9,0.8) -- (9, 1);

\node (z) at (4,0.25) {$z$};
\draw (4,0.8) -- (4, 1);

\end{tikzpicture}
\end{center}

Indeed, by our choice of $x^-_1$ we have $v(0, x^-_1)=c^*$: if $v(0, x^-_1)<c^*$, then
the value of the first part is less than $c^*$, and if $v(0, x^-_1)>c^*$, we can find 
an $x<x^-_1$ with $v(0, x)=c^*$, a contradiction with our choice of $x^-_1$. 
By the same argument, $v(0, x) < v(0, x^-_1)=c^*$ for every $x<x^-_1$
and hence $v(0, p_{j'})< c^*$ for every $j'<j^-$. On the other hand, for $j'<j^-$, $[p_{j'}+s, 1]$ is a superset of $[x^-_1+s, 1]$ and hence $\mms_i^{n-1,s}(p_{j'}+s, 1) \ge c^*$. 
Thus, $r(1)\ge j^-$. 
By a similar argument, $v(0, p_{j^+})\ge v(0, x^+_1)\ge c^*$.
Further, if $\mms_i^{n-1,s}(p_{j^+}+s, 1) \ge c^*$, then, by combining the corresponding maximin partition with $[0, p_{j^+}]$,
we obtain an $s$-separated partition of $[0, 1]$ into $n$ parts such that the value of each part is at least $c^*$ and the first part ends at $p_{j^+}>x^+_1$, a contradiction with our choice of $x^+_1$.
Thus, no such partition of $[p_{j^+}, 1]$ exists and hence our algorithm selects $r(1)\le j^+$.
This concludes the proof for $k=1$.

\smallskip

\noindent{\bf Inductive step\ }
Now, fix an integer $k$ with $1<k\le n$, and suppose we have already constructed a list 
${\mathcal I}_{k-1} = (I_{\ell(q)}, I_{r(q)})_{q\in [k-1]}$ that is consistent with some $s$-separated
maximin partition of $[0, 1]$. Our goal is to compute $I_{\ell(k)}, I_{r(k)}$
so that the list ${\mathcal I}_{k} = (I_{\ell(q)}, I_{r(q)})_{q\in [k]}$ is consistent with some $s$-separated maximin partition of $[0, 1]$.

We start with $I_{\ell(k)}$. 
Among all $s$-separated maximin partitions that are consistent with ${\mathcal I}_{k-1}$, consider a partition with the smallest
value of $x_{k-1}$ and one with the largest value of $x_{k-1}$; we denote these partitions by 
$(x^-_0, \dots, x^-_n)$ and $(x^+_0, \dots, x^+_n)$, respectively.
Since $c^*>0$, we have
$x^+_{k-1}+s<1$.
Suppose that 
$p_{j^--1}< x^-_{k-1}+s \le p_{j^-}$ and $p_{j^+-1}\le x^+_{k-1}+s < p_{j^+}$
for some $j^-, j^+\in [d]$.
Note that for every $z$ such that $x^-_{k-1} \le z \le x^+_{k-1}$ it holds that
$(x^-_0 \dots, x^-_{k-2}, z, x^+_{k}, \dots, x^+_n)$ is an $s$-separated 
maximin partition. 
Therefore, for every $j$ such that $j^-\le j\le j^+$,
there exists an $s$-separated maximin partition $(y_0, \dots, y_n)$ 
of $[0, 1]$ that is consistent with ${\mathcal I}_{k-1}$ and satisfies $y_{k-1}+s\in I_j$.
Hence, any $j$ in this range is a suitable choice for $\ell(k)$.

\begin{center}
\begin{tikzpicture}[scale=0.7]
\node (pL1) at (-2,0) {$p_{j^- -1}$};
\node (pL)  at (2,0) {$p_{j^-}$};
\node (pdots) at (5,0) {$\ldots$};
\node (pR1) at (7,0) {$p_{j^+ -1}$};
\node (pR)  at (12,0) {$p_{j^+}$};

\node (IL) at (1.5, 1.4) {$I_{j^-}$};
\node (Idots) at (5, 1.4) {$\ldots$};
\node (IR) at (8.5, 1.4) {$I_j^+$};

\begin{scope}[node distance=0.3,inner sep=1pt]
\node[above = of pL1] (pL1b) {};
\node[above = of pL] (pLb) {};
\node[above = of pR1] (pR1b) {};
\node[above = of pR] (pRb) {};
\end{scope}

\draw (pL1b) -- (pLb) -- (pR1b) -- (pRb);

\node (y0) at (0,0.3) {$x_{k-1}^- + s$};
\draw (0,0.8) -- (0, 1);

\node (y1) at (10,0.3) {$x_{k-1}^+ + s$};
\draw (10,0.8) -- (10, 1);

\end{tikzpicture}
\end{center}

Let $\mathcal L$ be the collection of all 
intervals $I_j$ that have a non-empty intersection with $[p_{r(k-1)-1}+s, p_{r(k-1)}+s]$,
where $r(k-1)$ is the index of the rightmost interval in ${\mathcal I}_{k-1}$.
For every interval $I_j\in \mathcal L$, 
we solve $\text{LP}_{k-1}({\mathcal I}_{k-1}, p_j-s)$; let $c'$ be the solution of this linear program.
We then check whether $\mms_i^{n-k+1,s}(p_j, 1) \ge c'$ using the algorithm from Theorem~\ref{thm:mms-query-atleast}, and set $\ell(k)$ to be the smallest $j$ for which this is not the case.

To see the correctness of our approach, first observe that it suffices to restrict our attention to the intervals in $\mathcal L$.
Indeed, for every maximin partition $(x_0, \dots, x_n)$
that is consistent with ${\mathcal I}_{k-1}$ we have $x_{k-1}\in I_{r(k-1)}$, and hence $p_{r(k-1)-1}+s \le x_{k-1}+s\le p_{r(k-1)}+s$. Thus, $x_{k-1}+s$
has to be contained in an interval $I_j$ such that 
$I_j\cap [p_{r(k-1)-1}+s,  p_{r(k-1)}+s]\neq\emptyset$.

Next, we will argue that (a)  $\ell(k) \ge j^-$, and (b) $\ell(k) \le j^+$. 
To prove (a), consider some $j'<j^-$. We claim that 
the solution of $\text{LP}_{k-1}({\mathcal I}_{k-1}, p_{j'}-s)$ is strictly less than $c^*$.
Indeed, otherwise there is an $s$-separated partition of $[0, p_{j'}-s]$ into $k-1$
parts that is consistent with ${\mathcal I}_{k-1}$ and such that the value of each part is at least $c^*$; since $p_{j'} \leq p_{j^--1} < x_{k-1}^-+s$, by combining this partition with $[p_{j'}, x^-_k],[x^-_k+s, x^-_{k+1}], \dots, [x^-_{n-1}+s, 1]$,
we obtain a maximin partition of $[0, 1]$ that is consistent with ${\mathcal I}_{k-1}$,
a contradiction with our choice of $j^-$. Also, we claim that $\mms_i^{n-k+1,s}(p_{j'}, 1) \ge c^*$: again, this is witnessed
by $[p_{j'}, x^-_k],[x^-_k+s, x^-_{k+1}], \dots, [x^-_{n-1}+s, 1]$. Hence, our algorithm will not set $\ell(k)=j'$.

To prove (b), observe that since $p_{j^+}-s > x^+_{k-1}$, we have that $(x^+_0, x^+_1, \dots, x^+_{k-2}, p_{j^+}-s, c^*)$ is a feasible solution for $\text{LP}_{k-1}({\mathcal I}_{k-1}, p_{j^+}-s)$, i.e., the solution of this LP is at least $c^*$. Now, if $\mms_i^{n-k+1,s}(p_{j^+}, 1) \ge c^*$, we can combine the corresponding partition with $[0, x^+_1],[x^+_1+s, x^+_2], \dots$$,$ $[x^+_{k-2}+s, p_{j^+}-s]$
to obtain an $s$-separated maximin partition of $[0, 1]$ where the $k$-th part starts at $p_{j^+}$,
a contradiction with our choice of $j^+$. Thus, we have $\ell(k)\le j^+$.

Combining (a) and (b), we conclude that $j^-\leq \ell(k)\leq j^+$, so our choice of $\ell(k)$ works.

To compute $I_{r(k)}$, we proceed in a similar fashion. Specifically, let $\mathcal R$ be the collection of intervals $(I_j)_{\ell(k)\le j\le d}$.
For each interval $I_j$ in this collection, 
let ${\mathcal I}_k(j)$ be the list obtained by taking ${\mathcal I}_{k-1}$ 
and appending $I_{\ell(k)}$ (which we have computed in the previous step) and $I_j$ to it.
We then solve $\text{LP}_k({\mathcal I}_k(j), p_j)$; let $c'$ be the solution of this linear program.
We check whether $\mms_i^{n-k,s}(p_j+s, 1) \ge c'$, pick the smallest $j$ for which this is not the case, and set $I_{r(k)}=I_j$. 

The proof of correctness for this case is similar. We know that the set of maximin partitions $(x_0, \dots, x_n)$ that are consistent with ${\mathcal I}_{k-1}$ and satisfy $x_{k-1}+s\in I_{\ell(k)}$ is not empty. Among all such partitions, consider one with the minimum $x_k$ and one with the maximum $x_k$. Let $I_{j^-}$ and $I_{j^+}$ be the respective intervals containing the $k$-th cut $x_k$;
we claim that $j^-\le r(k)\le j^+$. Both inequalities are proved in exactly the same way as for $\ell(k)$.
This completes the inductive argument.

Both for $I_{\ell(k)}$ and for $I_{r(k)}$, we can speed up the search for the appropriate $j$ by using binary search on ${\mathcal L}$ (respectively, $\mathcal R$) rather than checking each interval in that list; with this modification, we can identify each interval in ${\mathcal I}_n$ by solving $O(\log n)$ linear programs.

To see that our algorithm runs in polynomial time, it remains to observe that, to find the maximin share of agent $i$, we first need to identify all intervals in ${\mathcal I}_n$, and then solve the resulting linear program. Thus, we need to solve $O(n\log n)$ linear programs, and the size of each linear program is polynomial in the size of the input.
The remaining steps of the algorithm (such as invoking the algorithm from Theorem~\ref{thm:mms-query-atleast}) can also be implemented efficiently.

\section{Ordinal Maximin Relaxations}
\label{app:ordinal-cake}

We prove here that in cake cutting, it is possible to compute an allocation in which each agent receives her $1$-out-of-$(2n-1)$ maximin share using a finite number of queries.
Note that this result is incomparable to Corollary~\ref{cor:mms-allocation-approx}, which shows that an arbitrarily close additive approximation of the $1$-out-of-$n$ maximin share can be computed.

\begin{theorem}
\label{thm:mms-2n-1}
For any cake cutting instance with $n$ agents, it is possible to compute an allocation in which every agent $i$ receives value at least $\emph{MMS}^{2n-1}_i$ using $O(n^2/s)$ queries in the Robertson--Webb model.
\end{theorem}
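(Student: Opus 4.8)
The plan is to pass to a discretized version of the problem in which every cut is forced to lie on a fixed finite grid---this makes all the relevant quantities finitely computable---and then to use the slack between $2n-1$ and $n$ to show that the discretized benchmark still dominates $\mms^{2n-1}_i$.

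Concretely, I would first fix the grid $G=\{0,h,2h,\dots,1\}$ of $K:=\lceil 2/s\rceil$ cells of length $h=1/K\le s/2$, and have each agent $i$ evaluate all $K$ cells ($O(1/s)$ queries per agent), so that agent $i$ can afterwards compute $v_i$ of any interval with endpoints in $G$ with no further queries. Call a finite sequence $R_1,\dots,R_n$ of intervals \emph{admissible} if each $R_\ell$ has both endpoints in $G$, the sequence is increasing, consecutive pieces are separated by distance at least $s$, $R_1$ starts at a point $\ge 0$, and $R_n$ ends at a point $\le 1$; there are only finitely many admissible sequences, so $\beta_i:=\max\bigl\{\min_{\ell\in[n]}v_i(R_\ell):(R_1,\dots,R_n)\ \text{admissible}\bigr\}$ is computable from the cell values alone. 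The heart of the argument is the inequality $\beta_i\ge\mms^{2n-1}_i$. To prove it, take an $s$-separated partition $P_1,\dots,P_{2n-1}$ of $[0,1]$ attaining $\mms^{2n-1}_i$ with every separator of length exactly $s$ (this exists by the argument of Proposition~\ref{prop:alt-def}), discard the even-indexed pieces, and keep the $n$ odd-indexed pieces $P_1,P_3,\dots,P_{2n-1}$. Writing $P_{2\ell-1}=[a_\ell,b_\ell]$, set $R_\ell:=[\,\lfloor a_\ell\rfloor_G,\,\lceil b_\ell\rceil_G\,]$, rounding the left endpoint down and the right endpoint up to the nearest grid point. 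Then $R_\ell\supseteq P_{2\ell-1}$, so $v_i(R_\ell)\ge v_i(P_{2\ell-1})\ge\mms^{2n-1}_i$; and since the distance between $P_{2\ell-1}$ and $P_{2\ell+1}$ in the original partition equals $2s+|P_{2\ell}|\ge 2s$ while rounding displaces each endpoint by at most $h\le s/2$, consecutive pieces $R_\ell,R_{\ell+1}$ remain separated by at least $2s-2h\ge s$. Hence $(R_1,\dots,R_n)$ is admissible with min-value $\ge\mms^{2n-1}_i$, so $\beta_i\ge\mms^{2n-1}_i$.

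With the targets $\beta_i$ in hand, I would run exactly the Dubins--Spanier greedy of Theorem~\ref{thm:mms-algo}, restricted to the grid: maintain a current grid point $z$ (initially $0$); each remaining agent $i$ names the leftmost grid point $y_i\ge z$ with $v_i(z,y_i)\ge\beta_i$; the agent with the smallest $y_i$ receives $[z,y_i]$; we advance $z$ to the smallest grid point at distance at least $s$ beyond $y_i$ and recurse, handing the last remaining agent the whole remaining interval. Correctness is the verbatim analogue of the proof of Theorem~\ref{thm:mms-algo}: fixing an optimal admissible sequence $R^i_1,\dots,R^i_n$ for agent $i$, an induction on the number $m$ of agents already served shows that if $i$ has not yet been served then the remaining interval still contains $R^i_{m+1},\dots,R^i_n$, so $i$'s mark never exceeds the right endpoint of $R^i_{m+1}$; whether $i$ is served in some round or receives the final leftover interval, $i$ ends up with value at least $\beta_i\ge\mms^{2n-1}_i$, and the inserted gaps have length at least $s$, so the allocation is $s$-separated. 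The query cost is $O(n/s)$ for the cell evaluations (or $O(n^2/s)$ if one re-evaluates the relevant cells afresh in each of the $O(n)$ rounds), within the claimed bound. The step I expect to be the main obstacle is precisely $\beta_i\ge\mms^{2n-1}_i$: one must choose the grid fine enough ($h\le s/2$) and use exactly half of the $2n-1$ pieces, so that the doubled gaps $2s+|P_{2\ell}|$ can absorb the rounding of both endpoints; everything else is routine bookkeeping around Theorem~\ref{thm:mms-algo}.
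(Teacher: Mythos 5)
Your proof is correct, and it takes a genuinely different route from the paper's. The paper runs a single left-to-right sweep in which a knife advances in steps of exactly $s$; at each step, every remaining agent is asked --- via the decision procedure of Theorem~\ref{thm:mms-query-greater} --- whether the interval accumulated so far already has value at least $\mms^{2n-1}_i$, and the first agent to say Yes is served. The slack between $n$ and $2n-1$ is spent showing that the knife's overshoot of at most $s$ past the relevant odd-indexed piece $P_{2j-1}$ of the agent's $(2n-1)$-piece maximin partition is absorbed by the next separator, so the remaining cake still contains $P_{2j+1}$. In your argument the same slack is spent on a spatial discretization instead: rounding the $n$ odd-indexed pieces of a $(2n-1)$-piece maximin partition outward to a grid of mesh $h\le s/2$ eats at most $2h\le s$ out of each doubled gap $2s+|P_{2\ell}|$, so the rounded pieces remain $s$-separated, which yields a grid-computable target $\beta_i\ge\mms^{2n-1}_i$. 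You then run the Theorem~\ref{thm:mms-algo} greedy restricted to the grid with targets $\beta_i$, and that proof transfers verbatim (the only new observation needed, which you correctly make, is that since the admissible sequence has grid endpoints, the next starting point --- the smallest grid point at distance $\ge s$ past the served piece --- cannot overshoot the grid left endpoint of $R^i_{m+2}$). Both arguments ultimately trade two maximin pieces per agent per round, but your version front-loads all queries into $O(n/s)$ cell evaluations and replaces the per-step decision oracle by offline arithmetic on the cell values, which is cleaner and in fact improves the query bound to $O(n/s)$ versus the paper's $O(n^2/s)$.
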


\begin{proof}
From the left end of the cake, we repeatedly move a knife to the right by length~$s$.
After each move, we ask each agent $i$ whether the piece to the left of the knife has value at least $\mms^{2n-1}_i$. To implement this query, we first ask the agent to evaluate the piece~$P$ to the left of the knife so as to obtain $r=v_i(P)$, run the algorithm that can decide whether $\mms^{2n-1}_i > r$ (see Theorem~\ref{thm:mms-query-greater}), and flip the answer.
If the answer is Yes for at least one agent, we allocate the piece to one such agent; we then remove this agent along with the adjacent piece of length~$s$, and recurse on the remaining agents and cake.
If there is only one agent left, that agent receives all of the remaining cake.
Asking an agent can be implemented using at most $1 + (2(2n-1)-1) = 4n-2$ queries (Theorem~\ref{thm:mms-query-greater}).
Since we move the knife $O(1/s)$ times, each time asking at most $n$ agents, we need $O(n^2/s)$ queries.

We now prove the correctness of the algorithm. 
Consider any agent $i$ and her $1$-out-of-$(2n-1)$ maximin partition $\partition = \{P_1, \dots, P_{2n-1}\}$; assume that $P_j=[x_j, y_j]$ for $j\in [2n-1]$.
Let $[z_j, t_j]$ be the $j$-th piece allocated by the algorithm, where
$z_1=0$ and $z_{j+1}=t_j+s$; for notational convenience, let $t_0=-s$. 
If $i$ receives a piece during the first $n-1$ steps of the algorithm, she values this piece at least $\mms^{2n-1}_i$. To complete the proof, we will argue by induction on $j$ that if agent 
$i$ does not receive any of the first $j$ pieces allocated by the algorithm, 
then the remaining cake, i.e., $[t_j+s, 1]$, 
contains the piece $P_{2j+1}$ of her partition.
This implies both that the algorithm will be able to allocate a piece to each agent and that the last agent values her piece at least $\mms^{2n-1}_i$.

For $j=0$ our claim is trivially true. Now, suppose it has been established for $j'<j$, and agent $i$ did not receive any of the first $j$ pieces.
We know that $[t_{j-1}+s, 1]$ contains $P_{2j-1}$. Consider the piece $[z_j, t_j] = [t_{j-1}+s, t_j]$. Either this piece is of length $s$ or agent~$i$ did not say Yes when the knife was at $t_j-s$. 
Either way, $t_j-s$ is no further to the right than the right endpoint of $P_{2j-1}$, i.e., $y_{2j-1}$.
That is, $t_j - s \le y_{2j-1}$.
As we have $x_{2j}\ge y_{2j-1}+s$, this implies $t_j\le x_{2j}$, and hence $t_j+s\le x_{2j}+s\le x_{2j+1}$. That is, $[t_j+s, 1]$ contains the piece $P_{2j+1}$, as claimed. This completes the induction and hence establishes the correctness of our algorithm.
\end{proof}

If we want to cut a pie, we can turn it into a cake by cutting it at an arbitrary point (e.g., at point $0$) and removing an interval of length $s$ starting from this point, similarly to the proof of Theorem~\ref{thm:pie-mms-approx-improved}. 

\begin{theorem}
\label{thm:mms-2n}
For any pie cutting instance with $n$ agents, it is possible to compute an allocation in which every agent $i$ receives value at least $\emph{MMS}^{2n}_i$ using $O(n^2/s)$ queries in the Robertson--Webb model.
\end{theorem}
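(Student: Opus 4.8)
The plan is to reduce the pie problem to the cake result of Theorem~\ref{thm:mms-2n-1}, exactly in the spirit of the proof of Theorem~\ref{thm:pie-mms-approx-improved}. First I would cut the pie at the point $0$ and discard the interval $[0,s]$, leaving an ordinary cake, namely the interval $[s,1]$ (which may be rescaled to $[0,1]$; this turns the separation parameter into $s/(1-s)$, which is still below $1/(n-1)$ since $s<1/n$ in pie cutting, so we obtain a legitimate cake instance for $n$ agents). Every \textsc{Eval} or \textsc{Cut} query on this sub-cake is simulated by at most two queries on the pie, with a \textsc{Cut} whose answer would fall outside $[s,1]$ reported as having no solution.

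Next I would run the algorithm of Theorem~\ref{thm:mms-2n-1} on this sub-cake with the same set of $n$ agents. It uses $O(n^2/s)$ Robertson--Webb queries and returns an allocation of $[s,1]$ in which every agent $i$ receives value at least $\mms^{2n-1,\text{cake}}_i$, with the first piece starting at $s$ and the last piece ending at $1$. Placed back on the pie, the allocated pieces lie in $[s,1]$, consecutive pieces are separated by length at least $s$, and the arc joining the end of the last piece to the start of the first piece contains all of $[0,s]$ and hence has length at least $s$; so this allocation is $s$-separated on the pie.

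It then remains to prove the inequality $\mms^{2n-1,\text{cake}}_i \ge \mms^{2n,\text{pie}}_i$ for each agent $i$. I would take a $1$-out-of-$2n$ maximin partition of the pie; at most one of its $2n$ parts overlaps the discarded interval $[0,s]$, so I delete that part (or, if no part overlaps $[0,s]$, an arbitrary one, say the part closest to $[0,s]$). The surviving $2n-1$ parts lie in $[s,1]$, and the $2n-2$ gaps strictly between consecutive survivors are precisely the original separators not adjacent to the deleted part, each of length at least $s$; hence the survivors form an $s$-separated partition of the cake $[s,1]$ into $2n-1$ parts, whose min-value is at least $\mms^{2n,\text{pie}}_i$. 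This yields the inequality, and combined with the previous paragraph it proves the theorem.

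The reduction is a routine adaptation of arguments already used in Theorem~\ref{thm:pie-mms-approx-improved} and Lemma~\ref{lem:complete-implies-mms}(b), so I do not anticipate a genuine obstacle. The step requiring the most care is the partition surgery: one must check that deleting a single part together with the interval $[0,s]$ leaves the remaining $2n-1$ parts and $2n-2$ separators forming an admissible cake partition --- in particular that no surviving separator intrudes into $[0,s]$ --- which holds because the complement of $[0,s]$ on the pie is a single arc containing all the surviving parts in their cyclic order.
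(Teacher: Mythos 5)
Your proposal is correct and matches the paper's (implicit) proof: the paper derives Theorem~\ref{thm:mms-2n} by cutting the pie at $0$, removing $[0,s]$, running the cake algorithm of Theorem~\ref{thm:mms-2n-1} on the remaining interval, and establishing $\mms^{2n-1,\text{cake}}_i \ge \mms^{2n,\text{pie}}_i$ by exactly the partition-surgery argument you describe, following the template of Theorem~\ref{thm:pie-mms-approx-improved}. Your rescaling of $[s,1]$ to $[0,1]$ is an optional extra step, but your verification that the rescaled separation $s/(1-s)$ remains below $1/(n-1)$ is sound, and the rest of the argument is a routine instance of the same reduction.
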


\section{$\cutr$ Query}
\label{sec:cutright}

We show that the $\cutr_i(x,\alpha)$ query, which returns the rightmost point $y$ for which $v(x,y) = \alpha$, cannot be implemented using finitely many queries in the standard Robertson--Webb model.
This query has been used by \citet{CechlarovaPi12}, where it was called a ``reverse cut'' query.

\begin{theorem}
For any agent $i$ and real number $\alpha\in(0,1)$, 
let $\cutr_i(0,\alpha)$ be the largest $x\in(0,1)$ for which $v_i(0,x)=\alpha$.
There is no algorithm that computes $\cutr_i(0,\alpha)$
by asking agent $i$ a finite number of Robertson--Webb queries.
\end{theorem}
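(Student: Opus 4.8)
The plan is to run an adversary argument in the same spirit as the proof of Lemma~\ref{lem:function-impossibility}: assume a finite algorithm exists, and let an adversary answer its queries so that, after any finite number of queries, there remains enough freedom to make the true value of $\cutr_i(0,\alpha)$ differ from whatever the algorithm outputs. First I would reduce to queries of the form $\eval_i(0,x)$ and $\cut_i(0,\gamma)$: a query $\eval_i(x,y)$ is $\eval_i(0,y)-\eval_i(0,x)$, and $\cut_i(x,\gamma)$ can be obtained by first asking $\eval_i(0,x)$ and then $\cut_i(0,\gamma+v_i(0,x))$. During the run the algorithm has a finite set of \emph{recorded} points at which it knows $v_i(0,\cdot)$; initially only $0$ and $1$ are recorded, with values $0$ and $1$.

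The adversary will answer consistently with a family of piecewise linear, normalized density functions that are strictly positive everywhere except on a single \emph{plateau} of value $\alpha$, arranging that the right endpoint of this plateau---which is exactly $\cutr_i(0,\alpha)$---is never pinned down. Concretely, the invariant to maintain is: the recorded values are strictly increasing; at most one recorded point (a designated point, if it exists) has value $\alpha$; and there is an open interval $J$ of positive length, containing no recorded point, such that for every $b_0\in J$ there is a piecewise linear density consistent with all answers so far whose unique zero-density interval is a plateau of value $\alpha$ ending exactly at $b_0$ (so that $\cutr_i(0,\alpha)=b_0$). Initially $J$ can be the open middle third of $(0,1)$.

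Maintaining the invariant is routine bookkeeping over the two query types. For $\eval_i(0,x)$ with $x$ unrecorded: if $x\notin J$, answer with the monotone value dictated by $x$'s recorded neighbors (choosing it so the plateau still fits if $x$ lies between $J$ and the nearest recorded point); if $x\in J$, answer with any value strictly between those of $x$'s recorded neighbors that is different from $\alpha$---this splits $J$, and the right part still works if the answer is below $\alpha$, the left part if it is above. For $\cut_i(0,\gamma)$: if $\gamma$ equals an already recorded value, answer with that recorded point; if $\gamma<\alpha$ (resp.\ $\gamma>\alpha$), answer with a point of value $\gamma$ placed to the left (resp.\ right) of $J$, which becomes the new active gap boundary while $J$, shrunk if necessary, survives; if $\gamma=\alpha$ and there is no designated point yet, answer with a point $a_0$ of value $\alpha$ placed to the left of $J$, making $a_0$ the designated point and resetting $J$ to a subinterval of $(a_0,\cdot)$ with no recorded points; if $\gamma=\alpha$ and a designated point already exists, its position is returned (and is the leftmost point of value $\alpha$ since the density stays below $\alpha$ to its left). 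Between consecutive recorded points the adversary interpolates linearly (realizing the plateau inside the active gap), yielding a bona fide normalized piecewise linear density.

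Since the algorithm is finite, it eventually outputs some $x_0\in(0,1)$. As $J$ has positive length and contains no recorded point, the adversary picks $b_0\in J\setminus\{x_0\}$, completes a consistent density with plateau ending at $b_0$, and obtains $\cutr_i(0,\alpha)=b_0\neq x_0$, contradicting correctness. The main obstacle is precisely the case analysis for $\cut_i(0,\gamma)$ with $\gamma$ near $\alpha$: one must ensure that answering such a query never forces a recorded point into the interior of the (still flexible) plateau. This is handled by noting that the adversary fully controls how the value rises on either side of the plateau, so it can always place the returned point strictly on the appropriate side of $J$ and keep a positive-length $J$ alive.
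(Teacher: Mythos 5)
Your approach is sound and would work, but it takes a genuinely different and substantially more elaborate route than the paper. You run a dynamic adversary argument in the spirit of the paper's proof of Lemma~\ref{lem:function-impossibility}: you try to keep a flexible plateau of value $\alpha$ ``alive'' throughout the run, maintaining an invariant about a positive-length interval $J$ of candidate right endpoints. The paper's proof (Appendix~\ref{sec:cutright}) needs none of this. It simply has the adversary answer \emph{all} queries as if the valuation were uniform, and only \emph{after} the algorithm halts does the adversary exhibit two distinct valuations consistent with all the answers: the uniform one, for which $\cutr_i(0,\alpha)=\alpha$, and a nearly-uniform one with a tiny zero-density plateau on $[\alpha,\alpha+\varepsilon]$, where $\varepsilon$ is small enough that $\alpha+\varepsilon$ lies strictly below the next recorded point, for which $\cutr_i(0,\alpha)\ge\alpha+\varepsilon$. (Normalization and the $\cut$ answers are preserved by slightly raising the density on $[\alpha+\varepsilon, x^+]$, where $x^+$ is the next recorded point; this keeps the value at every recorded point intact.) Since one of the two valuations necessarily contradicts the algorithm's output, the impossibility follows with no invariant maintenance at all. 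Your proposal buys nothing extra here and adds considerable bookkeeping risk. Indeed, some of your stated invariants are slightly off as written: once a designated point $a_0$ exists, an $\eval_i(0,x)$ query for $x$ between $a_0$ and $J$ would force the answer $\alpha$ (the point is on the plateau for every $b_0\in J$), violating ``at most one recorded point has value $\alpha$''; likewise, after $a_0$ exists, an $\eval$ answer ``below $\alpha$'' for $x\in J$ is impossible by monotonicity, so the clause ``the right part still works if the answer is below $\alpha$'' is vacuous in that branch. These are repairable (e.g., reset $J$ so it abuts $a_0$, and always answer $>\alpha$ for $\eval$ queries inside $J$ once $a_0$ exists), but they illustrate that the incremental-invariant approach requires care that the paper's one-shot uniform adversary avoids entirely.
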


\begin{proof}
Suppose for contradiction that such an algorithm exists.
During the run of the algorithm, there is always a finite set of points $x\in[0,1]$ for which the algorithm knows the value of $v_i(0,x)$; we say that such points are \emph{recorded}. 
Initially, only points $0$ and~$1$ are recorded.
An adversary can answer all queries as if $v_i$ is uniform, i.e., for any two consecutive recorded points on the cake, if the piece between them has length $t$, then it also has value $t$.
After any finite number of queries, it is possible that the entire valuation is uniform, in which case the algorithm should answer $\alpha$.
But it is also possible that, for some small $\varepsilon>0$, it holds that $v_i(\alpha,\alpha+\varepsilon)=0$, where $\alpha+\varepsilon$ is smaller than the smallest recorded point that is strictly larger than $\alpha$. In this case, the algorithm should answer at least $\alpha+\varepsilon$.
\end{proof}

\section{Generalized Ordinal Maximin Guarantees}
\label{app:l-out-of-k}

Given a pie and two parameters $\ell < k$, an agent's \emph{$\ell$-out-of-$k$ maximin share} is defined as the maximum value that the agent can secure for herself by choosing an $s$-separated partition of the pie into $k$ pieces and getting the $\ell$ lowest-value pieces.\footnote{This  definition is an adaptation of a similar definition by \citet{BabaioffNiTa21}, which was formulated for indivisible goods.}

We will now show that Theorem~\ref{thm:pie-mms-algo} can be generalized to guarantee to each agent her $\ell$-out-of-$(\ell n+1)$ maximin share, for any integer $\ell \geq 1$; Theorem~\ref{thm:pie-mms-algo} corresponds to the special case $\ell = 1$.
As an example where this can be useful, suppose $s=1/6$ and $n=2$, and consider an agent who values the regions $[0, 1/30]$, $[6/30, 7/30]$, $[12/30, 13/30]$, $[18/30, 19/30]$, $[24/30, 25/30]$ uniformly at $1/5$ each (and has no value for the remaining pie).
Then her $2$-out-of-$5$ maximin share is $2/5$, which is higher than her $1$-out-of-$3$ maximin share. In contrast, if she values the regions $[0, 1/6]$, $[2/6, 3/6]$, $[4/6, 5/6]$ uniformly at $1/3$ each, then her $1$-out-of-$3$ maximin share is $1/3$, which is higher than her $2$-out-of-$5$ maximin share.

In fact, we can even generalize Theorem~\ref{thm:pie-mms-algo} in a \emph{pluralistic} manner, by showing that for each agent~$i$ and each $\ell_i\in\mathbb N$ 
we can guarantee to agent $i$ her $\ell_i$-out-of-$(\ell_i n+1)$ maximin share.

\begin{theorem}
\label{thm:pie-l-out-of-k}
For any pie cutting instance with $n$ agents, and any positive integers $\ell_1,\dots,\ell_n$, there exists an allocation in which every agent~$i$ receives a piece of value at least $\emph{MMS}^{\ell_i\emph{-out-of-}(\ell_in+1)}_i$.
Moreover, given the value of $\emph{MMS}^{\ell_i\emph{-out-of-}(\ell_in+1)}_i$ for each agent~$i$, such an allocation can be computed using $O(n^2)$ queries in the Robertson--Webb model.
\end{theorem}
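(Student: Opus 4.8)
The plan is to mimic the proof of Theorem~\ref{thm:pie-mms-algo} almost line for line, replacing each individual part of an agent's maximin partition by a \emph{block} of $\ell_i$ consecutive parts. Write $\mu_i := \text{MMS}^{\ell_i\text{-out-of-}(\ell_i n+1)}_i$. I would use the same greedy algorithm: ask each agent $i$ to mark the leftmost point $x_i$ (clockwise from $0$) with $v_i(0,x_i)=\mu_i$; allocate $[0,x_i]$ to an agent with the leftmost mark, remove that agent together with the arc $[x_i,x_i+s]$, and recurse; when a single agent $j$ remains, still give her only a piece of value $\mu_j$ rather than the entire remaining pie. As in Theorem~\ref{thm:pie-mms-algo} this costs $\sum_{j=0}^{n-1}(n-j)=O(n^2)$ \textsc{Cut} queries, and I would note that the algorithm depends only on the scalars $\mu_1,\dots,\mu_n$ and is oblivious to the individual $\ell_i$'s, so the pluralistic statement is no harder than the uniform one.

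For correctness I would fix an agent $i$, abbreviate $\ell:=\ell_i$ and $\mu:=\mu_i$, and dispose of the trivial case $\mu=0$ (which can only occur when $(\ell n+1)s\ge 1$). Otherwise, let $\mathbf{P}=\{P_1,\dots,P_{\ell n+1}\}$ be an $s$-separated partition of the pie into $\ell n+1$ parts whose $\ell$ lowest-value parts sum, for agent $i$, to $\mu$. Since $0$ lies in the interior of at most one part, I would discard that part (or $P_{\ell n+1}$, if $0$ lies in a separator) and renumber the $\ell n$ surviving parts clockwise from $0$ as $P_1,\dots,P_{\ell n}$; for $j\in[n]$ I would let $Q_j=[x_j,y_j]$ be the smallest arc (measured clockwise from $0$) containing $P_{(j-1)\ell+1},\dots,P_{j\ell}$, so that $[0,y_j]$ contains $Q_1,\dots,Q_j$. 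The one genuinely new point is the inequality $v_i(Q_j)\ge\sum_{t=(j-1)\ell+1}^{j\ell} v_i(P_t)\ge\mu$, which holds because $Q_j$ bundles $\ell$ \emph{consecutive} parts whereas $\mu$ is the sum of the $\ell$ \emph{lowest}-value parts. I would also record that consecutive blocks $Q_j,Q_{j+1}$ are separated by an arc of length at least $s$, and that the arc running from $Q_n$ back to $Q_1$ through the discarded part and its adjacent separators likewise has length at least $s$.

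With this in hand the iteration argument is exactly the one in Theorem~\ref{thm:pie-mms-algo}. If agent $i$ wins the first round she gets value $\mu$. Otherwise $v_i(0,y_1)\ge v_i(Q_1)\ge\mu$ forces her mark $x_i\le y_1$, so the winning mark — the right endpoint of the first allocated piece — is no further clockwise than $y_1$; adding the length-$s$ separator and using the length-$\ge s$ gap between $Q_1$ and $Q_2$, the first separator ends no further than $x_2$. Iterating, if agent $i$ is passed over in each of the first $n-1$ rounds then the remaining pie still contains $Q_n$, so allocating her a piece of value $\mu$ gives a right endpoint no further than $y_n$, and the length-$\ge s$ arc after $Q_n$ stays unallocated and serves as the separator between the last- and first-allocated pieces; hence the allocation is $s$-separated and every agent $i$ gets at least $\mu_i$.

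The main obstacle is really just getting the block inequality $v_i(Q_j)\ge\mu_i$ right — it is the single place where the definition of the $\ell_i$-out-of-$(\ell_i n+1)$ maximin share (via the $\ell_i$ lowest-value parts) interacts with the grouping (into $\ell_i$ contiguous parts). Everything else — the case analysis on where $0$ falls, the arc-length comparisons, and the query count — transfers verbatim from Theorem~\ref{thm:pie-mms-algo}, and the pluralistic refinement is free because the whole argument is carried out one agent at a time.
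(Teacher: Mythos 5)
Your proposal is correct and follows the same route as the paper's proof: run the Dubins--Spanier-style greedy algorithm from Theorem~\ref{thm:pie-mms-algo} with each agent's threshold set to $\mu_i$, discard the part of agent $i$'s $(\ell_i n+1)$-partition that is hit by point $0$ (or that is adjacent to the separator containing $0$), group the surviving $\ell_i n$ parts into $n$ contiguous blocks of $\ell_i$, and use the observation that any $\ell_i$ parts sum to at least the $\ell_i$ smallest, so each block is worth at least $\mu_i$. The paper phrases the bookkeeping in terms of individual parts $Q_1,\dots,Q_{\ell_i n}$ rather than your blocks $Q_1,\dots,Q_n$, but the underlying argument — including the case split on where $0$ falls, the length-$s$ gaps between consecutive blocks, and the leftover arc after $Q_n$ that serves as the final separator — is identical, and your remark that the algorithm only consumes the scalars $\mu_1,\dots,\mu_n$ is exactly why the pluralistic version is free.
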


\begin{proof}
We ask each agent $i$ to mark the leftmost point $x_i$ (i.e., the first such point when moving clockwise from $0$) such that $v_i(0,x_i) = \mms^{\ell_i\text{-out-of-}(\ell_in+1)}_i$.
The agent who marks the leftmost $x_i$ is allocated the piece $[0,x_i]$ (with ties broken arbitrarily); we then remove this agent along with the piece $[x_i,x_i+s]$, and recurse on the remaining agents and pie.
If there is only one agent left, we still allocate to that agent a piece worth $\mms^{\ell_i\text{-out-of-}(\ell_in+1)}_i$ (rather than the entire remaining pie).
Since we make $n-j$ \textsc{Cut} queries when there are $n-j$ agents left (and no \textsc{Eval} queries), our algorithm uses $\sum_{j=0}^{n-1}(n-j) = O(n^2)$ queries.

We now prove the correctness of the algorithm.
Consider any agent $i$ and her $\ell_i$-out-of-$(\ell_in+1)$ maximin partition $\partition=\{P_1, \dots, P_{\ell_in+1}\}$.
For $j\in \{1,2,\dots,\ell_in\}$, let $Q_j=P_{j+1}$ if $0$~is in the interior of $P_1$, and $Q_j=P_j$ otherwise; we
will write $Q_j=[x_j, y_j]$.
Note that the segment $[0, y_j]$ contains $j$ parts of $\partition$.
If agent $i$ receives the first piece allocated by the algorithm, she receives value $\mms^{\ell_i\text{-out-of-}(\ell_in+1)}_i$.
Else, the right endpoint of the allocated piece is no further to the right than $y_{\ell_i}$.
Since the algorithm inserts a separator of length exactly~$s$, the right endpoint of the first separator is no further to the right than $x_{\ell_i + 1}$.
Applying a similar argument repeatedly, we find that if agent $i$ is not allocated any of the first $n-1$ pieces by the algorithm, then after removing the $(n-1)$-st piece and the following separator, the remaining cake contains $Q_{\ell_i(n-1) + 1},\dots,Q_{\ell_in}$. 
Now, if $0$ is in the interior of $P_1$, then the remaining cake also contains a positive amount of $P_1$ as well as the separator between $P_{\ell_in+1}=Q_{\ell_in}$ and $P_1$. 
On the other hand, if $0$ is not in the interior of $P_1$, then $Q_{\ell_i n}=P_{\ell_i n}$ and the remaining cake contains $P_{\ell_in+1}$ as well as the separator between $P_{\ell_in}$ and $P_{\ell_in+1}$. 
In either case, the remaining cake contains $Q_{\ell_i(n-1) + 1},\dots,Q_{\ell_in}$ as well as the separator that comes after $Q_{\ell_in}$.
Hence, if we allocate a piece of value $\mms^{\ell_i\text{-out-of-}(\ell_in+1)}_i$ to $i$, its right endpoint is no further to the right than $y_{\ell_in}$ and thus the remaining cake contains an unallocated segment of length $s$, which will serve as a separator between the piece that was allocated first and the piece that was allocated last. 
It follows that in either case the resulting allocation is $s$-separated.
\end{proof}

\end{document}